\RequirePackage{snapshot}
\documentclass[11pt,twoside,british]{article}
\usepackage[T1]{fontenc}
\usepackage[latin9]{inputenc}
\usepackage[a4paper]{geometry}
\geometry{verbose,tmargin=2.5cm,bmargin=2.5cm,lmargin=2.5cm,rmargin=2.5cm}
\usepackage{fancyhdr}
\pagestyle{fancy}
\usepackage{color}
\usepackage{babel}
\usepackage{verbatim}
\usepackage{refstyle}
\usepackage{booktabs}
\usepackage{mathrsfs}
\usepackage{mathtools}
\usepackage{url}
\usepackage{enumitem}
\usepackage{amsmath}
\usepackage{amsthm}
\usepackage{amssymb}
\usepackage{graphicx}
\usepackage{setspace}
\usepackage[authoryear,longnamesfirst]{natbib}
\usepackage{xargs}[2008/03/08]
\onehalfspacing
\usepackage[pdfusetitle,
 bookmarks=true,bookmarksnumbered=false,bookmarksopen=false,
 breaklinks=false,pdfborder={0 0 0},pdfborderstyle={},backref=false,colorlinks=false]
 {hyperref}

\makeatletter


\AtBeginDocument{\providecommand\exaref[1]{\ref{exa:#1}}}
\AtBeginDocument{\providecommand\secref[1]{\ref{sec:#1}}}
\AtBeginDocument{\providecommand\assref[1]{\ref{ass:#1}}}
\AtBeginDocument{\providecommand\eqref[1]{\ref{eq:#1}}}
\AtBeginDocument{\providecommand\enuref[1]{\ref{enu:#1}}}
\AtBeginDocument{\providecommand\figref[1]{\ref{fig:#1}}}
\AtBeginDocument{\providecommand\appref[1]{\ref{app:#1}}}
\AtBeginDocument{\providecommand\subsecref[1]{\ref{subsec:#1}}}
\AtBeginDocument{\providecommand\lemref[1]{\ref{lem:#1}}}
\AtBeginDocument{\providecommand\propref[1]{\ref{prop:#1}}}
\AtBeginDocument{\providecommand\thmref[1]{\ref{thm:#1}}}
\AtBeginDocument{\providecommand\corref[1]{\ref{cor:#1}}}
\AtBeginDocument{\providecommand\tblref[1]{\ref{tbl:#1}}}
\AtBeginDocument{\providecommand\remref[1]{\ref{rem:#1}}}
\providecommand{\tabularnewline}{\\}
\RS@ifundefined{subsecref}
  {\newref{subsec}{name = \RSsectxt}}
  {}
\RS@ifundefined{thmref}
  {\def\RSthmtxt{theorem~}\newref{thm}{name = \RSthmtxt}}
  {}
\RS@ifundefined{lemref}
  {\def\RSlemtxt{lemma~}\newref{lem}{name = \RSlemtxt}}
  {}

\newlength{\lyxlabelwidth}      
\numberwithin{equation}{section}
\numberwithin{figure}{section}
\numberwithin{table}{section}
	\newenvironment{elabeling}[2][]%
	{\settowidth{\lyxlabelwidth}{#2}
		\begin{description}[font=\normalfont,style=sameline,
			leftmargin=\lyxlabelwidth,#1]}
	{\end{description}}
\theoremstyle{remark}
\newtheorem*{notation*}{\protect\notationname}
\theoremstyle{definition}
\newtheorem{example}{\protect\examplename}[section]
\theoremstyle{plain}
\newtheorem{assumption}{\protect\assumptionname}
\theoremstyle{remark}
\newtheorem{rem}{\protect\remarkname}[section]
\theoremstyle{plain}
\newtheorem{prop}{\protect\propositionname}[section]
\theoremstyle{definition}
\newtheorem{defn}{\protect\definitionname}[section]
\theoremstyle{plain}
\newtheorem{thm}{\protect\theoremname}[section]
\theoremstyle{plain}
\newtheorem{cor}{\protect\corollaryname}[section]
\theoremstyle{plain}
\newtheorem{lem}{\protect\lemmaname}[section]



\setlist[enumerate,1]{label=\upshape{(\roman*)}, ref=(\roman*)}
\setlist[enumerate,2]{label=\upshape{(\alph*)}, ref=(\alph*)}
\setlist[enumerate,3]{label=\upshape{\roman*.}, ref=\roman*}



\lhead{}
\rhead{}
\rfoot{}
\lfoot{}

\fancyhead[CE]{\iffloatpage{}{\scshape duffy, mavroeidis and wycherley}}
\fancyhead[CO]{\iffloatpage{}{\scshape stationarity with occasionally binding constraints}}

\pagestyle{fancy}


\makeatletter
  \@ifpackageloaded{refstyle}{}{\usepackage{refstyle}}
\makeatother

\newref{thm}{name = Theorem~}
\newref{prop}{name = Proposition~}
\newref{lem}{name = Lemma~}
\newref{cor}{name = Corollary~}
\newref{ass}{name = }
\newref{exa}{name = Example~}
\newref{rem}{name = Remark~}
\newref{def}{name = Definition~}

\newref{eq}{refcmd = (\ref{#1})}

\newref{sec}{name = Section~}
\newref{sub}{name = Section~}
\newref{subsec}{name = Section~}
\newref{app}{name = Appendix~}

\newref{fig}{name = Figure~}
\newref{tbl}{name = Table~}










\makeatletter
  \@ifpackageloaded{mathrsfs}{}{\usepackage{mathrsfs}}
\makeatother

\date{}

\usepackage{nextpage}

\allowdisplaybreaks[1]


\raggedbottom

\usepackage{scalefnt}
\usepackage[group-separator={,}]{siunitx}
\newcommand\smaller[2][0.85]{{\scalefont{#1}#2}}
\newcommand{\ass}[1]{{\upshape{\smaller[0.76]{#1}}}}

\newcommand{\assumpname}[1]{%
  \renewcommand{\theassumption}{\ass{#1}}%
}

\makeatletter
\newsavebox{\@brx}
\newcommand{\dbllangle}[1][]{\savebox{\@brx}{\(\m@th{#1\langle}\)}%
  \mathopen{\copy\@brx\kern-0.5\wd\@brx\usebox{\@brx}}}
\newcommand{\dblrangle}[1][]{\savebox{\@brx}{\(\m@th{#1\rangle}\)}%
  \mathclose{\copy\@brx\kern-0.5\wd\@brx\usebox{\@brx}}}
\makeatother

\usepackage{changepage}

\theoremstyle{definition}

\renewenvironment{example}
  {\pushQED{\qed}\examplex}
  {\popQED\endexamplex}

\numberwithin{examplex}{section}

\newcommand{\exname}[1]{%
  \renewcommand{\theexamplex}{{#1}}%
}

\newcounter{subremark}[rem]
\renewcommand{\thesubremark}{(\roman{subremark})}

\newcommand{\subremark}{%
  \refstepcounter{subremark}%
  \thesubremark{}.%
}

\newcounter{savedexnumber}

\newcommand{\saveexamplex}{%
  \setcounter{savedexnumber}{\value{examplex}}
}

\newcommand{\restoreexamplex}{%
  \setcounter{examplex}{\value{savedexnumber}}
  \numberwithin{examplex}{section}
}

\usepackage{mathdots}

\usepackage[usestackEOL]{stackengine}
\setstackgap{S}{5pt}
\newcommand{\authaffil}[2]{\Shortunderstack{#1\\\small{#2}}}

\usepackage{needspace}

\newcommand{\R}{\textsf{R}}

\usepackage{bibunits}
\defaultbibliography{../cksvar,additional-examples-cx}
\defaultbibliographystyle{ecta}

\makeatletter  
\renewcommand{\@bibunitname}{\jobname.\the\@bibunitauxcnt}
\makeatother




\usepackage[rightbars]{changebar}

\newif\ifmarkchanges
\markchangestrue

\cbcolor{red}
\setlength{\changebarwidth}{1.5pt}
\setlength{\changebarsep}{\marginparsep}

\makeatother

\providecommand{\assumptionname}{Assumption}
\providecommand{\corollaryname}{Corollary}
\providecommand{\definitionname}{Definition}
\providecommand{\examplename}{Example}
\providecommand{\lemmaname}{Lemma}
\providecommand{\notationname}{Notation}
\providecommand{\propositionname}{Proposition}
\providecommand{\remarkname}{Remark}
\providecommand{\theoremname}{Theorem}

\begin{document}


\global\long\def\uwrite#1#2{\underset{#2}{\underbrace{#1}} }%

\global\long\def\blw#1{\ensuremath{\underline{#1}}}%

\global\long\def\abv#1{\ensuremath{\overline{#1}}}%

\global\long\def\vect#1{\mathbf{#1}}%


\global\long\def\smlseq#1{\{#1\} }%

\global\long\def\seq#1{\left\{  #1\right\}  }%

\global\long\def\smlsetof#1#2{\{#1\mid#2\} }%

\global\long\def\setof#1#2{\left\{  #1\mid#2\right\}  }%


\global\long\def\goesto{\ensuremath{\rightarrow}}%

\global\long\def\ngoesto{\ensuremath{\nrightarrow}}%

\global\long\def\uto{\ensuremath{\uparrow}}%

\global\long\def\dto{\ensuremath{\downarrow}}%

\global\long\def\uuto{\ensuremath{\upuparrows}}%

\global\long\def\ddto{\ensuremath{\downdownarrows}}%

\global\long\def\ulrto{\ensuremath{\nearrow}}%

\global\long\def\dlrto{\ensuremath{\searrow}}%


\global\long\def\setmap{\ensuremath{\rightarrow}}%

\global\long\def\elmap{\ensuremath{\mapsto}}%

\global\long\def\compose{\ensuremath{\circ}}%

\global\long\def\cont{C}%

\global\long\def\cadlag{D}%

\global\long\def\Ellp#1{\ensuremath{\mathcal{L}^{#1}}}%


\global\long\def\naturals{\ensuremath{\mathbb{N}}}%

\global\long\def\reals{\mathbb{R}}%

\global\long\def\complex{\mathbb{C}}%

\global\long\def\rationals{\mathbb{Q}}%

\global\long\def\integers{\mathbb{Z}}%


\global\long\def\abs#1{\ensuremath{\left|#1\right|}}%

\global\long\def\smlabs#1{\ensuremath{\lvert#1\rvert}}%
 
\global\long\def\bigabs#1{\ensuremath{\bigl|#1\bigr|}}%
 
\global\long\def\Bigabs#1{\ensuremath{\Bigl|#1\Bigr|}}%
 
\global\long\def\biggabs#1{\ensuremath{\biggl|#1\biggr|}}%

\global\long\def\norm#1{\ensuremath{\left\Vert #1\right\Vert }}%

\global\long\def\smlnorm#1{\ensuremath{\lVert#1\rVert}}%
 
\global\long\def\bignorm#1{\ensuremath{\bigl\|#1\bigr\|}}%
 
\global\long\def\Bignorm#1{\ensuremath{\Bigl\|#1\Bigr\|}}%
 
\global\long\def\biggnorm#1{\ensuremath{\biggl\|#1\biggr\|}}%

\global\long\def\floor#1{\left\lfloor #1\right\rfloor }%
\global\long\def\smlfloor#1{\lfloor#1\rfloor}%

\global\long\def\ceil#1{\left\lceil #1\right\rceil }%
\global\long\def\smlceil#1{\lceil#1\rceil}%


\global\long\def\Union{\ensuremath{\bigcup}}%

\global\long\def\Intsect{\ensuremath{\bigcap}}%

\global\long\def\union{\ensuremath{\cup}}%

\global\long\def\intsect{\ensuremath{\cap}}%

\global\long\def\pset{\ensuremath{\mathcal{P}}}%

\global\long\def\clsr#1{\ensuremath{\overline{#1}}}%

\global\long\def\symd{\ensuremath{\Delta}}%

\global\long\def\intr{\operatorname{int}}%

\global\long\def\cprod{\otimes}%

\global\long\def\Cprod{\bigotimes}%


\global\long\def\smlinprd#1#2{\ensuremath{\langle#1,#2\rangle}}%

\global\long\def\inprd#1#2{\ensuremath{\left\langle #1,#2\right\rangle }}%

\global\long\def\orthog{\ensuremath{\perp}}%

\global\long\def\dirsum{\ensuremath{\oplus}}%


\global\long\def\spn{\operatorname{sp}}%

\global\long\def\rank{\operatorname{rk}}%

\global\long\def\proj{\operatorname{proj}}%

\global\long\def\tr{\operatorname{tr}}%

\global\long\def\vek{\operatorname{vec}}%

\global\long\def\diag{\operatorname{diag}}%

\global\long\def\col{\operatorname{col}}%


\global\long\def\smpl{\ensuremath{\Omega}}%

\global\long\def\elsmp{\ensuremath{\omega}}%

\global\long\def\sigf#1{\mathcal{#1}}%

\global\long\def\sigfield{\ensuremath{\mathcal{F}}}%
\global\long\def\sigfieldg{\ensuremath{\mathcal{G}}}%

\global\long\def\flt#1{\mathcal{#1}}%

\global\long\def\filt{\mathcal{F}}%
\global\long\def\filtg{\mathcal{G}}%

\global\long\def\Borel{\ensuremath{\mathcal{B}}}%

\global\long\def\cyl{\ensuremath{\mathcal{C}}}%

\global\long\def\nulls{\ensuremath{\mathcal{N}}}%

\global\long\def\gauss{\mathfrak{g}}%

\global\long\def\leb{\mathfrak{m}}%


\global\long\def\prob{\ensuremath{\mathbb{P}}}%

\global\long\def\Prob{\ensuremath{\mathbb{P}}}%

\global\long\def\Probs{\mathcal{P}}%

\global\long\def\PROBS{\mathcal{M}}%

\global\long\def\expect{\ensuremath{\mathbb{E}}}%

\global\long\def\Expect{\ensuremath{\mathbb{E}}}%

\global\long\def\probspc{\ensuremath{(\smpl,\filt,\Prob)}}%


\global\long\def\iid{\ensuremath{\textnormal{i.i.d.}}}%

\global\long\def\as{\ensuremath{\textnormal{a.s.}}}%

\global\long\def\asp{\ensuremath{\textnormal{a.s.p.}}}%

\global\long\def\io{\ensuremath{\ensuremath{\textnormal{i.o.}}}}%

\newcommand\independent{\protect\mathpalette{\protect\independenT}{\perp}}
\def\independenT#1#2{\mathrel{\rlap{$#1#2$}\mkern2mu{#1#2}}}

\global\long\def\indep{\independent}%

\global\long\def\distrib{\ensuremath{\sim}}%

\global\long\def\distiid{\ensuremath{\sim_{\iid}}}%

\global\long\def\asydist{\ensuremath{\overset{a}{\distrib}}}%

\global\long\def\inprob{\ensuremath{\overset{p}{\goesto}}}%

\global\long\def\inprobu#1{\ensuremath{\overset{#1}{\goesto}}}%

\global\long\def\inas{\ensuremath{\overset{\as}{\goesto}}}%

\global\long\def\eqas{=_{\as}}%

\global\long\def\inLp#1{\ensuremath{\overset{\Ellp{#1}}{\goesto}}}%

\global\long\def\indist{\ensuremath{\overset{d}{\goesto}}}%

\global\long\def\eqdist{=_{d}}%

\global\long\def\wkc{\ensuremath{\rightsquigarrow}}%

\global\long\def\wkcu#1{\overset{#1}{\ensuremath{\rightsquigarrow}}}%

\global\long\def\plim{\operatorname*{plim}}%


\global\long\def\var{\operatorname{var}}%

\global\long\def\lrvar{\operatorname{lrvar}}%

\global\long\def\cov{\operatorname{cov}}%

\global\long\def\corr{\operatorname{corr}}%

\global\long\def\bias{\operatorname{bias}}%

\global\long\def\MSE{\operatorname{MSE}}%

\global\long\def\med{\operatorname{med}}%

\global\long\def\avar{\operatorname{avar}}%

\global\long\def\se{\operatorname{se}}%

\global\long\def\sd{\operatorname{sd}}%


\global\long\def\nullhyp{H_{0}}%

\global\long\def\althyp{H_{1}}%

\global\long\def\ci{\mathcal{C}}%


\global\long\def\simple{\mathcal{R}}%

\global\long\def\sring{\mathcal{A}}%

\global\long\def\sproc{\mathcal{H}}%

\global\long\def\Wiener{\ensuremath{\mathbb{W}}}%

\global\long\def\sint{\bullet}%

\global\long\def\cv#1{\left\langle #1\right\rangle }%

\global\long\def\smlcv#1{\langle#1\rangle}%

\global\long\def\qv#1{\left[#1\right]}%

\global\long\def\smlqv#1{[#1]}%


\global\long\def\trans{\mathsf{T}}%

\global\long\def\indic{\ensuremath{\mathbf{1}}}%

\global\long\def\Lagr{\mathcal{L}}%

\global\long\def\grad{\nabla}%

\global\long\def\pmin{\ensuremath{\wedge}}%
\global\long\def\Pmin{\ensuremath{\bigwedge}}%

\global\long\def\pmax{\ensuremath{\vee}}%
\global\long\def\Pmax{\ensuremath{\bigvee}}%

\global\long\def\sgn{\operatorname{sgn}}%

\global\long\def\argmin{\operatorname*{argmin}}%

\global\long\def\argmax{\operatorname*{argmax}}%

\global\long\def\Rp{\operatorname{Re}}%

\global\long\def\Ip{\operatorname{Im}}%

\global\long\def\deriv{\ensuremath{\mathrm{d}}}%

\global\long\def\diffnspc{\ensuremath{\deriv}}%

\global\long\def\diff{\ensuremath{\,\deriv}}%

\global\long\def\i{\ensuremath{\mathrm{i}}}%

\global\long\def\e{\mathrm{e}}%

\global\long\def\sep{,\ }%

\global\long\def\defeq{\coloneqq}%

\global\long\def\eqdef{\eqqcolon}%

\global\long\def\err{\varepsilon}%

\global\long\def\mset#1{\mathcal{#1}}%

\global\long\def\largedec#1{\mathbf{#1}}%

\global\long\def\z{\largedec z}%

\newcommandx\Ican[1][usedefault, addprefix=\global, 1=]{I_{#1}^{\ast}}%

\global\long\def\jsr{{\scriptstyle \mathrm{JSR}}}%

\global\long\def\cjsr{{\scriptstyle \mathrm{CJSR}}}%

\global\long\def\rsr{{\scriptstyle \mathrm{RJSR}}}%

\global\long\def\ctspc{\mathscr{M}}%

\global\long\def\b#1{\boldsymbol{#1}}%

\global\long\def\pseudy{\text{\ensuremath{\abv y}}}%

\global\long\def\regcoef{\kappa}%

\global\long\def\adj{\operatorname{adj}}%

\global\long\def\llangle{\dbllangle}%

\global\long\def\rrangle{\dblrangle}%

\global\long\def\smldblangle#1{\ensuremath{\llangle#1\rrangle}}%

\newcommand{\casecens}{{\upshape{(i)}}}

\newcommand{\caseclas}{{\upshape{(ii)}}}

\newcommand{\casestat}{{\upshape{(iii)}}}

\global\long\def\delcens{\mathrm{(i)}}%

\global\long\def\delclas{\mathrm{(ii)}}%

\global\long\def\orthm{\mathcal{O}}%

\global\long\def\loc{d}%

\global\long\def\vekh{\operatorname{vech}}%

\title{Stationarity with Occasionally Binding Constraints}
\author{\authaffil{James A.\ Duffy\footnotemark[1]{}}{University of Oxford}\hspace{1cm}
\authaffil{Sophocles Mavroeidis\footnotemark[2]{}}{University
of Oxford} \hspace{1cm} \authaffil{Sam Wycherley\footnotemark[3]{}}{Stanford
University}}
\date{\vspace*{0.3cm}July 2026}

\maketitle
\renewcommand*{\thefootnote}{\fnsymbol{footnote}}

\footnotetext[1]{Department\ of Economics and Corpus Christi College;
\texttt{james.duffy@economics.ox.ac.uk}}

\footnotetext[2]{Department of Economics and University College;
\texttt{sophocles.mavroeidis@economics.ox.ac.uk}}

\footnotetext[3]{Department of Economics; \texttt{wycherley@stanford.edu}}

\renewcommand*{\thefootnote}{\arabic{footnote}}

\setcounter{footnote}{0}
\begin{abstract}
\noindent This paper studies a class of multivariate threshold autoregressive
models, known as censored and kinked structural vector autoregressions
(CKSVAR), which are notably able to accommodate series that are subject
to occasionally binding constraints. We develop a set of sufficient
conditions for the processes generated by a CKSVAR to be stationary,
ergodic, and weakly dependent. Our conditions relate directly to the
stability of the deterministic part of the model, and are therefore
less conservative than those typically available for general vector
threshold autoregressive (VTAR) models. Though our criteria refer
to quantities, such as refinements of the joint spectral radius, that
cannot feasibly be computed exactly, they can be approximated numerically
to a high degree of precision.
\end{abstract}
\vfill
\begin{elabeling}{00.00.0000}
\item [{Keywords:}] structural VAR; regime switching; threshold autoregression;
Markov process; geometric ergodicity; joint spectral radius
\item [{MSC codes:}] 60G10, 60G65, 60J05, 62M10
\item [{JEL codes:}] C32
\end{elabeling}
\vspace{0.5cm}

\noindent{}The results of this paper were first presented in Sections~2
and 3 of arXiv:2211.09604v1. We thank participants at seminars at
Cambridge, Cyprus, Stanford and Oxford, for comments on earlier drafts
of this work.

\thispagestyle{plain}

\pagenumbering{roman}

\newpage{}

\thispagestyle{plain}

\setcounter{tocdepth}{2}

\tableofcontents{}

\newpage{}

\pagenumbering{arabic}

\section{Introduction}

This paper studies a class of multivariate threshold autoregressive
models known as censored and kinked structural vector autoregressions
(CKSVAR; \citealp{SM21}). These models feature endogenous regime
switching induced by threshold-type nonlinearities, i.e.\ changes
in coefficients and variances when one of the variables crosses a
threshold, but differ importantly from a previous generation of vector
threshold autoregressive models (VTAR; see e.g.\ \citealp{TTG10})
insofar as the autoregressive `regime' is determined endogenously,
rather than being predetermined. This notably allows the CKSVAR model
to accommodate series that are subject to occasionally binding constraints,
a leading example of which is provided by the zero lower bound (ZLB)
on short-term nominal interest rates \citep{SM21,AMSV21,ILMZ20}.

In this paper, we develop a set of sufficient conditions for the processes
generated by the CKSVAR to be stationary, ergodic, and weakly dependent.
These conditions are of interest, firstly, because the credibility
of a structural model relies on its being able to generate plausible
counterfactual trajectories and associated impulse responses, i.e.\ on
its being adequate to replicate the most elementary time-series properties
of the data. If that data is apparently stationary, model configurations
that instead give rise to explosive trajectories and responses should
naturally be excluded from the parameter space (in a Bayesian setting,
by assigning zero prior mass to these regions; cf.\ \citealp{AMSV21},
Sec.\ 3.3).\footnote{While we may want to allow some shocks to have permanent but bounded
effects, as in a linear VAR with some unit roots, the nonlinearity
in the CKSVAR prevents this from being straightforwardly treated as
a mere boundary case of the stationary model; see \citet{DMW22} for
a further discussion.} Secondly, frequentist inference on the parameters of these models
relies on the series generated by the model, under the null hypothesis
of interest, satisfying the requirements of laws of large numbers
and central limit theorems for dependent data (\citealp{Wool94Hdbk};
\citealp{PP97}).

Because of the nonlinearity of the CKSVAR, the stationarity of the
model defies the simple analytical characterisation that is available
in a linear VAR. (In particular, it is insufficient to simply check
the magnitudes of the autoregressive roots associated with some or
all of the autoregressive `regimes' implied by the model: see \exaref{explosive}
below.) This is a problem routinely encountered in the literature
on nonlinear time-series models, and some of the approaches taken
in that literature may be fruitfully applied here. Specifically, we
rely on existing results from the theory of ergodic Markov processes
(\citealp{Tjo90AAP}; \citealp{MT09}), casting the CKSVAR as an instance
of the `regime-switching' or VTAR models considered in that literature
(see e.g.\ \citealp{Tong90}; \citealp{Chan09}; \citealp{TTG10};
\citealp{HT13}), in order to establish conditions sufficient for
the series generated by a CKSVAR to be stationary, geometrically ergodic,
and $\beta$-mixing (absolutely regular). Because the CKSVAR is continuous
at the threshold (kink) and approximately homogeneous (of degree one),\footnote{We say that a function $f:\reals^{m}\setmap\reals^{n}$ is \emph{homogeneous
of degree one} if $f(tx)=tf(x)$ for all $t>0$ and $x\in\reals^{m}$.
Note in particular the requirement that $t$ should be strictly positive.} this can be characterised directly in terms of the stability of the
deterministic part of the model, as per \citet{CT85AAP}, yielding
conditions for stationarity that are less conservative than those
available for general VTAR models.

We develop a hierarchy of sufficient conditions for stability, only
the most elementary of which, the joint spectral radius (JSR), has
previously been applied to nonlinear time-series models (e.g.\ \citealp{Lieb2005};
\citealp{MS08JTSA}; \citealp{Saik08ET}; \citealp{KS2020ER}). Though
our criteria refer to quantities that cannot feasibly be computed
exactly, they can be approximated numerically to a high degree of
precision (arbitrarily well, given sufficient computation time, in
some cases); we have developed the \R{} package \texttt{thresholdr}
to provide users of these models with a means of numerically verifying
these criteria.\footnote{Available at: \url{https://github.com/samwycherley/thresholdr}}

The remainder of the paper is organised as follows. \secref{model}
introduces the CKSVAR model and develops a canonical representation
of the model, which is particularly amenable to our analysis. In \secref{stationarity},
we provide sufficient conditions for the CKSVAR to generate stationary,
ergodic and weakly dependent time series, and show that these conditions
also deliver consistency and asymptotic normality of maximum likelihood
estimators of the model parameters. Those sufficient conditions depend,
in turn, on the stability of the deterministic part of the model,
criteria for which are elaborated in \secref{stability}.

\begin{notation*}
$e_{m,i}$ denotes the $i$th column of an $m\times m$ identity matrix
$I_{m}$; when $m$ is clear from the context, we write this simply
as $e_{i}$. In a statement such as $f(a^{\pm},b^{\pm})=0$, the notation
`$\pm$' signifies that both $f(a^{+},b^{+})=0$ and $f(a^{-},b^{-})=0$
hold; similarly, `$a^{\pm}\in A$' denotes that both $a^{+}$ and
$a^{-}$ are elements of $A$. All limits are taken as $n\goesto\infty$
unless otherwise stated. $\inprob$ and $\wkc$ respectively denote
convergence in probability and in distribution (weak convergence).
$\smlnorm{\cdot}$ denotes the Euclidean norm on $\reals^{m}$; all
matrix norms are induced by the corresponding vector norms. For $X$
a random vector and $p\geq1$, $\smlnorm X_{p}\defeq(\expect\smlnorm X^{p})^{1/p}$.
\end{notation*}

\section{Model}

\label{sec:model}

\subsection{The censored and kinked SVAR}

We consider a VAR($k$) model in $p$ variables, in which one series,
$y_{t}$, enters with coefficients that differ according to whether
it is above or below a time-invariant threshold $b$, while the other
$p-1$ series, collected in $x_{t}$, enter linearly. Define
\begin{align}
y_{t}^{+} & \defeq\max\{y_{t},b\} & y_{t}^{-} & \defeq\min\{y_{t},b\}\label{eq:y-threshold}
\end{align}
and consider the model
\begin{equation}
\phi_{0}^{+}y_{t}^{+}+\phi_{0}^{-}y_{t}^{-}+\Phi_{0}^{x}x_{t}=c+\sum_{i=1}^{k}[\phi_{i}^{+}y_{t-i}^{+}+\phi_{i}^{-}y_{t-i}^{-}+\Phi_{i}^{x}x_{t-i}]+u_{t}\label{eq:var-two-sided}
\end{equation}
which we may write more compactly as
\begin{equation}
\phi^{+}(L)y_{t}^{+}+\phi^{-}(L)y_{t}^{-}+\Phi^{x}(L)x_{t}=c+u_{t},\label{eq:var-pm}
\end{equation}
where $\phi^{\pm}(L)\defeq\phi_{0}^{\pm}-\sum_{i=1}^{k}\phi_{i}^{\pm}L^{i}$
and $\Phi^{x}(L)\defeq\Phi_{0}^{x}-\sum_{i=1}^{k}\Phi_{i}^{x}L^{i}$,
for $\phi_{i}^{\pm}\in\reals^{p\times1}$ and $\Phi_{i}^{x}\in\reals^{p\times(p-1)}$,
and $L$ denotes the lag operator. Here $\{u_{t}\}$ is a sequence
of $p$-dimensional innovations, whose properties will be further
specified in \assref{err} below. If $b\neq0$, then by defining $y_{b,t}\defeq y_{t}-b$,
$y_{b,t}^{+}\defeq\max\{y_{b,t},0\}$, $y_{b,t}^{-}\defeq\min\{y_{b,t},0\}$
and $c_{b}\defeq c-[\phi^{+}(1)+\phi^{-}(1)]b$, we may rewrite \eqref{var-pm}
as
\begin{equation}
\phi^{+}(L)y_{b,t}^{+}+\phi^{-}(L)y_{b,t}^{-}+\Phi^{x}(L)x_{t}=c_{b}+u_{t}.\label{eq:zero-thresh}
\end{equation}
For the purposes of this paper, we may thus take $b=0$ without loss
of generality. In this case, $y_{t}^{+}$ and $y_{t}^{-}$ respectively
equal the positive and negative parts of $y_{t}$, and $y_{t}=y_{t}^{+}+y_{t}^{-}$.
(Throughout the following, the notation `$a^{\pm}$' connotes $a^{+}$
and $a^{-}$ as objects associated respectively with $y_{t}^{+}$
and $y_{t}^{-}$, or their lags. If we want to instead denote the
positive and negative parts of some $a\in\reals$, we shall do so
by writing $[a]_{+}\defeq\max\{a,0\}$ or $[a]_{-}\defeq\min\{a,0\}$.)

Models of the form \eqref{var-pm} have previously been employed in
the literature to account for the dynamic effects of censoring, occasionally
binding constraints, and endogenous regime switching: see \citet{SM21},
\citet{AMSV21}, \citet{ABH23mimeo}, \citet{ILMZ20}, \citet{CCMM25};
and for extensions of the model considered here, see \citet{DM24,DM26id}.\footnote{The model formulated by \citet{AMSV21} falls within the scope of
\eqref{var-pm}, once the conditions necessary for their model to
have a unique solution (for all values of $u_{t}$) are imposed: see
their Proposition~1(i).} We shall follow \citet{SM21} in referring to \eqref{var-pm} as
a \emph{censored and kinked structural VAR} (CKSVAR) model, even though
in many applications of interest $y_{t}^{+}$ and $y_{t}^{-}$ are
fully observed, i.e.\ there is no `censoring'. The following running
example will be used to illustrate the concepts developed in this
paper.\footnote{In contrast to \eqref{var-two-sided}, which allows the two components
$y_{t}^{+}$ and $y_{t}^{-}$ of the endogenous variable $y_{t}$
to have different coefficients, \citet{DG94JF} and \citet{JdG21CS}
consider models of the form
\[
z_{t}=\sum_{i=1}^{p}\Phi_{i}z_{t-i}+\sum_{i=0}^{q}(B_{i}^{+}u_{t-i}^{+}+B_{i}^{-}u_{t-i}^{-})
\]
in which innovations of opposite signs are permitted to have different
moving average coefficients. Since the nonlinearity is entirely confined
to those moving average terms, the conditions for stationarity in
these models are simply the usual ones for stationarity of linear
vector autoregressive models.}
\begin{example}[monetary policy]
\label{exa:monetary} Consider the following stylised structural
model of monetary policy in the presence of a zero lower bound (ZLB)
constraint on interest rates consisting of a composite IS and Phillips
curve (PC) equation
\begin{align}
\pi_{t}-\abv{\pi} & =\chi(\pi_{t-1}-\abv{\pi})+\theta[i_{t}^{+}+\mu i_{t}^{-}-(r_{t}^{\ast}+\abv{\pi})]+\varepsilon_{t}\label{eq:is-pc}
\end{align}
and a policy reaction function (Taylor rule)
\begin{equation}
i_{t}=(r_{t}^{\ast}+\abv{\pi})+\gamma(\pi_{t}-\abv{\pi})\label{eq:taylor}
\end{equation}
where $r_{t}^{\ast}$ denotes the (real) natural rate of interest,
$\pi_{t}$ inflation, and $\varepsilon_{t}$ a mean zero, i.i.d.\ innovation.
The stance of monetary policy is measured by $i_{t}$: with $i_{t}^{+}=[i_{t}]_{+}$
giving the actual policy rate (constrained to be non-negative), and
$i_{t}^{-}=[i_{t}]_{-}$ the desired stance of policy when the policy
rate is constrained by the ZLB, to be effected via some form of `unconventional'
monetary policy, such as long-term asset purchases. $\abv{\pi}$ denotes
the central bank's inflation target. We maintain that $\gamma>0$,
$\theta<0$, $\chi\in[0,1)$, and $\mu\in[0,1]$, where this last
parameter captures the relative efficacy of unconventional policy.
When $\chi=0$, the preceding corresponds to a simplified version
of the model of \citet{ILMZ20}; a model with $\chi>0$ emerges by
augmenting their Phillips curve with a measure of backward-looking
agents.

To `close' the model, we specify that the natural real rate of interest
follows a stationary AR(1) process,
\begin{equation}
r_{t}^{\ast}=\abv r+\psi r_{t-1}^{\ast}+\eta_{t}\label{eq:natural-rate}
\end{equation}
where $\psi\in(-1,1)$, and $\eta_{t}$ is an i.i.d.\ mean zero innovation,
possibly correlated with $\varepsilon_{t}$ (cf.\ \citealp{LW03REStat},
for a model in which $\psi=1$, as is also considered in \citealp{DMW22}).
By substituting \eqref{taylor} into \eqref{is-pc} and \eqref{natural-rate},
and letting $\abv{\imath}\defeq\abv r+\abv{\pi}$, we obtain
\begin{equation}
\begin{bmatrix}1 & 1 & -\gamma\\
0 & \theta(1-\mu) & 1-\theta\gamma
\end{bmatrix}\begin{bmatrix}i_{t}^{+}\\
i_{t}^{-}\\
\pi_{t}
\end{bmatrix}=\begin{bmatrix}(1-\psi)(\abv{\imath}-\gamma\abv{\pi})\\
(1-\theta\gamma-\chi)\abv{\pi}
\end{bmatrix}+\begin{bmatrix}\psi & \psi & -\psi\gamma\\
0 & 0 & \chi
\end{bmatrix}\begin{bmatrix}i_{t-1}^{+}\\
i_{t-1}^{-}\\
\pi_{t-1}
\end{bmatrix}+\begin{bmatrix}\eta_{t}\\
\varepsilon_{t}
\end{bmatrix},\label{eq:cksvar-natrate}
\end{equation}
rendering the system as a CKSVAR for $(i_{t},\pi_{t})$.
\end{example}

The CKSVAR encompasses both kinds of dynamic Tobit model as special
cases (for applications of which, in both time series and panel settings,
see e.g.\ \citealp{DJ02FRB}; \citealp{DJH11}; \citealp{DSK12AE};
\citealp{LMS19}; \citealp{BMMV21JBF}; and \citealp{Byk21JBES}).
\begin{example}[univariate]
\label{exa:univariate}Consider \eqref{var-pm} with $p=1$ and $\phi_{0}^{+}=\phi_{0}^{-}=1$,
so that
\begin{equation}
y_{t}=c+\sum_{i=1}^{k}(\phi_{i}^{+}y_{t-i}^{+}+\phi_{i}^{-}y_{t-i}^{-})+u_{t},\label{eq:univariate-case}
\end{equation}
and suppose only $y_{t}^{+}$ is observed. In the nomenclature of
\citet[Sec.~1]{BD22}, if $\phi_{i}^{-}=0$ for all $i\in\{1,\ldots,k\}$,
so that only the positive part of $y_{t-i}$ enters the r.h.s., then
\begin{equation}
y_{t}^{+}=\left[c+\sum_{i=1}^{k}\phi_{i}^{+}y_{t-i}^{+}+u_{t}\right]_{+}\label{eq:censored-Tobit}
\end{equation}
follows a \emph{censored} dynamic Tobit; whereas if $\phi_{i}^{+}=\phi_{i}^{-}=\phi_{i}$
for all $i\in\{1,\ldots,k\}$, then $y_{t}^{+}$ follows a \emph{latent}
dynamic Tobit, being simply the positive part of the linear autoregression,
\begin{equation}
y_{t}=c+\sum_{i=1}^{k}\phi_{i}y_{t-i}+u_{t}\label{eq:latent-Tobit}
\end{equation}
(cf.\ \citealt[p.~186]{Maddala83}; \citealt[p.~419]{wei1999}).
\end{example}
As discussed by \citet{SM21} and \citet{AMSV21}, the model \eqref{var-pm}
is not guaranteed to have a unique solution for $(y_{t},x_{t})$,
at least not for all possible values of $u_{t}$, unless certain conditions
are placed on entries of the matrix
\[
\Phi_{0}\defeq\begin{bmatrix}\phi_{0}^{+} & \phi_{0}^{-} & \Phi_{0}^{x}\end{bmatrix}=\begin{bmatrix}\phi_{0,yy}^{+} & \phi_{0,yy}^{-} & \phi_{0,yx}^{\trans}\\
\phi_{0,xy}^{+} & \phi_{0,xy}^{-} & \Phi_{0,xx}
\end{bmatrix}
\]
of contemporaneous coefficients; or equivalently on the matrices
\begin{align*}
\Phi_{0}^{+} & \defeq\begin{bmatrix}\phi_{0}^{+} & \Phi_{0}^{x}\end{bmatrix} & \Phi_{0}^{-} & \defeq\begin{bmatrix}\phi_{0}^{-} & \Phi_{0}^{x}\end{bmatrix}
\end{align*}
that respectively apply when $y_{t}$ is positive or negative. By
\citet[Prop.~1]{SM21}, \eqref{var-pm} has a unique solution --
the model is `coherent' (see also \citealp{GLM80Ecta}) -- if the
second condition of the following holds.

\needspace{3\baselineskip}

\assumpname{DGP}
\begin{assumption}
\label{ass:dgp}~
\begin{enumerate}[label=\ass{\arabic*.}, ref=\ass{.\arabic*}]
\item \label{enu:dgp:defn} $\{(y_{t},x_{t})\}$ are generated according
to \eqref{y-threshold}--\eqref{var-pm} with $b=0$, with possibly
random initial values $(y_{i},x_{i})$, for $i\in\{-k+1,\ldots,0\}$;
\item \label{enu:dgp:coherence} $\sgn(\det\Phi_{0}^{+})=\sgn(\det\Phi_{0}^{-})\neq0$.
\item \label{enu:dgp:wlog} $\Phi_{0,xx}$ is invertible, and
\begin{equation}
\sgn\{\phi_{0,yy}^{+}-\phi_{0,yx}^{\trans}\Phi_{0,xx}^{-1}\phi_{0,xy}^{+}\}=\sgn\{\phi_{0,yy}^{-}-\phi_{0,yx}^{\trans}\Phi_{0,xx}^{-1}\phi_{0,xy}^{-}\}>0.\label{eq:coherence}
\end{equation}
\end{enumerate}
\end{assumption}
\begin{rem}
\label{rem:dgp}\subremark{} Under \ref{ass:dgp}\enuref{dgp:coherence},
a reordering of the equations in \eqref{var-pm}, and thus of the
rows of $\Phi_{0}$, ensures $\Phi_{0,xx}$ is invertible. Since
$\phi_{0,yy}^{\pm}-\phi_{0,yx}^{\trans}\Phi_{0,xx}^{-1}\phi_{0,xy}^{\pm}=\det(\Phi_{0}^{\pm})/\det(\Phi_{0,xx})$,
the equality in \eqref{coherence} holds; the inequality can thus
be satisfied by multiplying \eqref{var-pm} through by $\pm1$ as
appropriate. Thus \ref{ass:dgp}\enuref{dgp:wlog} is without loss
of generality, and we have stated it here only to clarify that we
shall treat these normalisations as holding, purely for the sake of
convenience, throughout the sequel.

\subremark{} An important special case of the CKSVAR arises when
$y_{t}^{-}$ is not observed -- or when it is only observed up to
scale -- in which case there is a continuum of observationally equivalent
parametrisations of \eqref{var-pm}, each of which generate identical
time series for the observables $(y_{t}^{+},x_{t})$, but in which
the trajectories for $y_{t}^{-}$ are scaled by some constant. (In
e.g.\ \citealp{SM21}, this arises because the `shadow rate' is
unobservable below zero.) In this case, the scale of the coefficients
$\{\phi_{i}^{-}\}_{i=0}^{k}$ on $y_{t}^{-}$ in \eqref{var-pm} can
be normalised in a convenient way that permits certain simplifications
to be made; we shall refer to this as the case of a \emph{partially
observed} CKSVAR.
\end{rem}

The following provides an illustration of the role of these coherency
conditions, and motivates the use of the CKSVAR as a model of endogenous
regime switching (in which $z_{t}$ is fully observed in both regimes).
\begin{example}[state-dependent fiscal multipliers]
\label{exa:fiscal-multipliers} \citet{AG12AEJ} and \citet{RZ18JPE}
investigate whether the effects of fiscal policy may be `state dependent',
by allowing impulse responses to depend on some (predetermined) measure
of economic slack. In the latter paper, the authors estimate these
impulse responses via a local projection (i.e.\ a regression) of
the form
\begin{align}
x_{t+h} & =\b 1_{t-1}\left[\alpha_{A,h}+\omega_{A,h}(L)q_{t-1}+\beta_{A,h}s_{t}\right]\nonumber \\
 & \qquad\qquad\qquad+(1-\b 1_{t-1})\left[\alpha_{B,h}+\omega_{B,h}(L)q_{t-1}+\beta_{B,h}s_{t}\right]+\varepsilon_{t+h}\label{eq:fiscal-lp}
\end{align}
where $s_{t}$ is an (observed) fiscal shock, $\omega_{A,h}(L)$ is
a row vector of autoregressive polynomials, $q_{t-1}$ is a conformable
vector of predetermined variables and $\b 1_{t-1}$ indicates that
the economy is in the `slack' regime immediately prior to the realisation
of $s_{t}$. More concretely, the authors (following \citealp{ORZ13AER})
define the slack regime as one in which unemployment $u_{t}$ exceeds
some threshold $\bar{u}$ (which they set equal to $6.5$ per cent),
so that taking $y_{t}\defeq u_{t}-\bar{u}$ we have $\b 1_{t-1}\defeq\indic\{y_{t-1}>0\}$.

Since the state is predetermined -- i.e.\ determined prior to the
realisation of $s_{t}$ -- the specification \eqref{fiscal-lp} precludes
the shock itself moving the economy between the two regimes \emph{on
impact}. This is an inherent limitation of the local projection methodology
used in these papers, and also of the STVAR model utilised by \citet{AG12AEJ},
which may be overcome within the CKSVAR framework. To give a stylised
illustration of how this may be approached, let us now suppose that
the regime is determined \emph{contemporaneously} with the realisation
of $s_{t}$. Collecting any predetermined and exogenous terms into
$w_{t}$, this suggests a specification for unemployment $y_{t}=u_{t}-\bar{u}$
(expressed in deviations from the threshold $\bar{u}$) at horizon
zero ($h=0$) of the form
\begin{equation}
y_{t}=\begin{cases}
\beta_{s}^{+}s_{t}+w_{t}, & \text{if }y_{t}>0,\\
\beta_{s}^{-}s_{t}+\beta_{w}^{-}w_{t}, & \text{if }y_{t}\leq0.
\end{cases}\label{eq:fiscal-incoherent}
\end{equation}
where we have implicitly defined $w_{t}$ such that $\beta_{w}^{+}=1$.
This model is only `coherent' (in the sense of \citealp{GLM80Ecta})
if it yields a unique solution for $y_{t}$ for all possible values
of $s_{t}$ and $w_{t}$. Letting $g_{t}^{U}\defeq\beta_{s}^{+}s_{t}+w_{t}$
and $g_{t}^{L}\defeq\beta_{s}^{-}s_{t}+\beta_{w}^{-}w_{t}$ denote
the two `branches' of the r.h.s., we see that if $g_{t}^{U}\leq0<g_{t}^{L}$,
then neither branch is consistent with its defining inequality, and
so the model admits no solution for $y_{t}$; whereas if these are
both reversed as $g_{t}^{L}\leq0<g_{t}^{U}$, then the model yields
two solutions.

\begin{figure}
\includegraphics[width=1\textwidth]{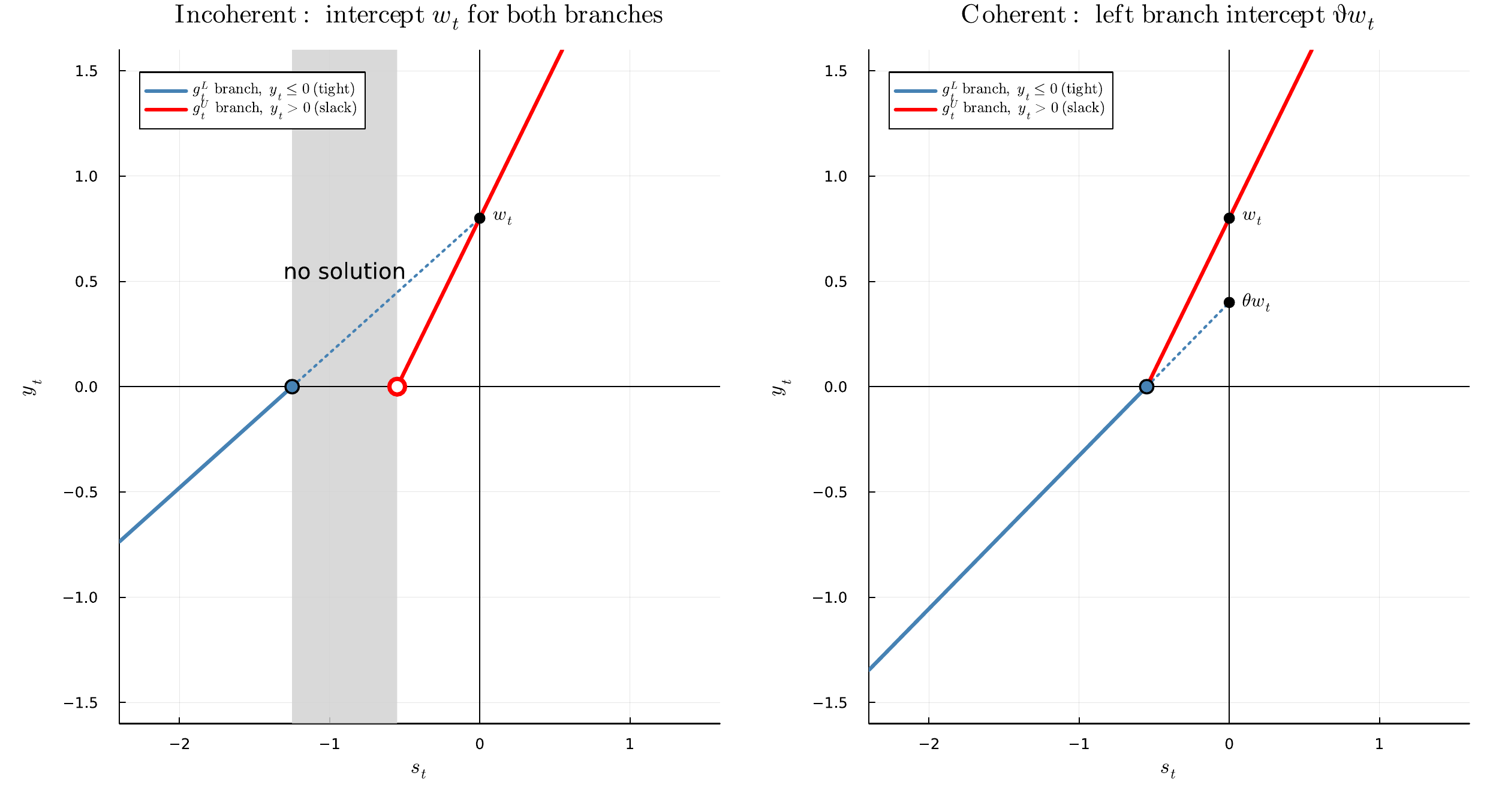}

\caption{\protect\label{fig:fiscal-coherency}In the unrestricted model (left),
the two branches cross zero at different values of $s_{t}$, producing
an interval on which the model yields no solution for $y_{t}$. In
the coherent model (right), the intercept and slope of the lower branch
are both scaled by $\vartheta>0$, so that the branches meet at the
threshold and the model is coherent.}
\end{figure}

Thus \eqref{fiscal-incoherent} does not in general define a coherent
model; this instead requires that one, and only one, of $g_{t}^{U}$
and $g_{t}^{L}$ is consistent with their associated regime inequalities.
However, this will obtain under the additional restrictions that:
(i) $\beta_{s}^{+}$ and $\beta_{s}^{-}$ have the same sign (as in
\ref{ass:dgp}\enuref{dgp:coherence} above), and (ii) the two branches
coincide at the $y_{t}=0$ threshold, i.e.
\[
\beta_{s}^{+}s_{t}+w_{t}=0\iff\beta_{s}^{-}s_{t}+\beta_{w}^{-}w_{t}=0.
\]
The latter is equivalent to $\beta_{w}^{-}=\beta_{s}^{-}/\beta_{s}^{+}\eqdef\vartheta$,
where $\vartheta>0$ under (i), in which case
\[
g_{t}^{L}=\beta_{s}^{-}s_{t}+\beta_{w}^{-}w_{t}=\vartheta(\beta_{s}^{+}s_{t}+w_{t})=\vartheta g_{t}^{U}\eqdef\vartheta g_{t}
\]
Hence under these coherency restrictions, the model \eqref{fiscal-incoherent}
becomes 
\[
y_{t}=\begin{cases}
g_{t}, & \text{if }y_{t}>0,\\
\vartheta g_{t}, & \text{if }y_{t}\leq0;
\end{cases}
\]
for $g_{t}=\beta_{s}^{+}s_{t}+w_{t}$; or equivalently,
\begin{equation}
y_{t}^{+}+\vartheta^{-1}y_{t}^{-}=g_{t},\label{eq:fiscal-coherent}
\end{equation}
which is of precisely the CKSVAR form \eqref{var-two-sided}, with
$\vartheta^{-1}>0$ so that the coherency conditions embedded in \assref{dgp}
are satisfied. \figref{fiscal-coherency} contrasts the unrestricted
and coherent models, i.e.\ \eqref{fiscal-incoherent} and \eqref{fiscal-coherent}.
\end{example}

\subsection{The canonical CKSVAR}

We say that a CKSVAR is \emph{canonical} if
\begin{equation}
\Phi_{0}=\begin{bmatrix}1 & 1 & 0\\
0 & 0 & I_{p-1}
\end{bmatrix}\eqdef\Ican[p].\label{eq:canonical}
\end{equation}
While it is not always the case that the reduced form of \eqref{var-pm}
corresponds directly to a canonical CKSVAR, by defining the canonical
variables
\begin{equation}
\begin{bmatrix}\tilde{y}_{t}^{+}\\
\tilde{y}_{t}^{-}\\
\tilde{x}_{t}
\end{bmatrix}\defeq\begin{bmatrix}\bar{\phi}_{0,yy}^{+} & 0 & 0\\
0 & \bar{\phi}_{0,yy}^{-} & 0\\
\phi_{0,xy}^{+} & \phi_{0,xy}^{-} & \Phi_{0,xx}
\end{bmatrix}\begin{bmatrix}y_{t}^{+}\\
y_{t}^{-}\\
x_{t}
\end{bmatrix}\eqdef P^{-1}\begin{bmatrix}y_{t}^{+}\\
y_{t}^{-}\\
x_{t}
\end{bmatrix},\label{eq:canon-vars}
\end{equation}
where $\bar{\phi}_{0,yy}^{\pm}\defeq\phi_{0,yy}^{\pm}-\phi_{0,yx}^{\trans}\Phi_{0,xx}^{-1}\phi_{0,xy}^{\pm}>0$
and $P^{-1}$ is invertible under \ref{ass:dgp}; and setting
\begin{equation}
\begin{bmatrix}\tilde{\phi}^{+}(L) & \tilde{\phi}^{-}(L) & \tilde{\Phi}^{x}(L)\end{bmatrix}\defeq Q\begin{bmatrix}\phi^{+}(L) & \phi^{-}(L) & \Phi^{x}(L)\end{bmatrix}P,\label{eq:canon-polys}
\end{equation}
where 
\begin{equation}
Q\defeq\begin{bmatrix}1 & -\phi_{0,yx}^{\trans}\Phi_{0,xx}^{-1}\\
0 & I_{p-1}
\end{bmatrix},\label{eq:Q-canon}
\end{equation}
we obtain the following, whose proof appears in \appref{canonical}.

\needspace{3\baselineskip}
\begin{prop}
\label{prop:canonical}Suppose \ref{ass:dgp} holds. Then:
\begin{enumerate}
\item there exist $(\tilde{y}_{t},\tilde{x}_{t})$ such that \eqref{canon-vars}--\eqref{canon-polys}
hold, $\tilde{y}_{t}^{+}=\max\{\tilde{y}_{t},0\}$, $\tilde{y}_{t}^{-}=\min\{\tilde{y}_{t},0\}$
and
\begin{equation}
\tilde{\phi}^{+}(L)\tilde{y}_{t}^{+}+\tilde{\phi}^{-}(L)\tilde{y}_{t}^{-}+\tilde{\Phi}^{x}(L)\tilde{x}_{t}=\tilde{c}+\tilde{u}_{t},\label{eq:tildeVAR}
\end{equation}
is a canonical CKSVAR, where $\tilde{c}=Qc$ and $\tilde{u}_{t}=Qu_{t}$;
and
\item if the CKSVAR for $(y_{t},x_{t})$ is partially observed, $y_{t}^{-}$
may be rescaled such that we may take $\tilde{y}_{t}=y_{t}$.
\end{enumerate}
\end{prop}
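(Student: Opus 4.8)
The argument is a direct substitution. The plan is to show that the change of variables \eqref{canon-vars} carries \eqref{var-pm} into \eqref{tildeVAR}, to check that the latter has the canonical form \eqref{canonical} and that $\tilde{y}_{t}^{+},\tilde{y}_{t}^{-}$ are genuinely the positive and negative parts of a common scalar, and then to obtain the second assertion by exploiting the scale freedom of $y_{t}^{-}$. First note that $P^{-1}$ in \eqref{canon-vars} is invertible under \ref{ass:dgp}: expanding its determinant along the first two rows gives $\det P^{-1}=\bar{\phi}_{0,yy}^{+}\bar{\phi}_{0,yy}^{-}\det\Phi_{0,xx}\neq0$, since $\bar{\phi}_{0,yy}^{\pm}>0$ and $\Phi_{0,xx}$ is invertible (after the reordering of equations noted in \remref{dgp}). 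Left-multiplying \eqref{var-pm} by $Q$ and inserting $PP^{-1}$ ahead of the stacked regressor gives
\[
Q\begin{bmatrix}\phi^{+}(L) & \phi^{-}(L) & \Phi^{x}(L)\end{bmatrix}P\cdot P^{-1}\begin{bmatrix}y_{t}^{+}\\ y_{t}^{-}\\ x_{t}\end{bmatrix}=Qc+Qu_{t},
\]
which, upon reading off \eqref{canon-vars}--\eqref{canon-polys}, is precisely \eqref{tildeVAR} with $\tilde{c}=Qc$ and $\tilde{u}_{t}=Qu_{t}$; because $Q$ and $P$ are constant matrices, $\tilde{\phi}^{\pm}(\lambda)$ and $\tilde{\Phi}^{x}(\lambda)$ remain matrix polynomials of degree $k$ with coefficients of the required dimensions, and the transformation applies equally to the (possibly random) initial values.

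It remains to verify canonicity. Computing $Q\Phi_{0}$ with $Q$ as in \eqref{Q-canon}, the off-diagonal block of $Q$ is designed exactly so that the $\Phi_{0}^{x}$-columns in the first row of $Q\Phi_{0}$ cancel: the first row of $Q\Phi_{0}$ equals $(\bar{\phi}_{0,yy}^{+},\bar{\phi}_{0,yy}^{-},0)$ and its remaining $p-1$ rows equal $(\phi_{0,xy}^{+},\phi_{0,xy}^{-},\Phi_{0,xx})$, so $Q\Phi_{0}=\Ican[p]P^{-1}$ and hence $\tilde{\Phi}_{0}=Q\Phi_{0}P=\Ican[p]$, which is \eqref{canonical}. (In particular $\det\tilde{\Phi}_{0}^{+}=\det\tilde{\Phi}_{0}^{-}=1>0$, so the coherence requirement \ref{ass:dgp}\enuref{dgp:coherence} holds automatically for the canonical system --- as it must, the process \eqref{tildeVAR} being an invertible relabelling of the coherent process \eqref{var-pm}.) For the sign split, \eqref{canon-vars} gives $\tilde{y}_{t}^{+}=\bar{\phi}_{0,yy}^{+}y_{t}^{+}$ and $\tilde{y}_{t}^{-}=\bar{\phi}_{0,yy}^{-}y_{t}^{-}$ with $\bar{\phi}_{0,yy}^{\pm}>0$; since $y_{t}^{+}\geq0\geq y_{t}^{-}$ and at most one of them is nonzero, the same holds for $\tilde{y}_{t}^{\pm}$, so setting $\tilde{y}_{t}\defeq\tilde{y}_{t}^{+}+\tilde{y}_{t}^{-}$ and distinguishing the cases $y_{t}\geq0$ and $y_{t}<0$ yields $\tilde{y}_{t}^{+}=\max\{\tilde{y}_{t},0\}$ and $\tilde{y}_{t}^{-}=\min\{\tilde{y}_{t},0\}$, completing part~(i). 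The only step that draws on the modelling assumptions rather than pure bookkeeping is this last one: it is the \emph{strict} positivity $\bar{\phi}_{0,yy}^{\pm}>0$ secured by \ref{ass:dgp}\enuref{dgp:wlog}, not merely $\bar{\phi}_{0,yy}^{\pm}\neq0$, that keeps the transformed variable's kink at the origin.

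For part~(ii), recall from \remref{dgp} that in the partially observed case the scale of $y_{t}^{-}$, together with that of the coefficients $\phi_{i}^{-}$ for $i=0,\dots,k$, is a free normalisation. Replacing $y_{t}^{-}$ by $\alpha y_{t}^{-}$ and each $\phi_{i}^{-}$ by $\alpha^{-1}\phi_{i}^{-}$ leaves $\bar{\phi}_{0,yy}^{+}$ unchanged and sends $\bar{\phi}_{0,yy}^{-}\mapsto\alpha^{-1}\bar{\phi}_{0,yy}^{-}$, so choosing $\alpha=\bar{\phi}_{0,yy}^{-}/\bar{\phi}_{0,yy}^{+}>0$ gives an observationally equivalent parametrisation with $\bar{\phi}_{0,yy}^{+}=\bar{\phi}_{0,yy}^{-}=:\bar{\phi}$. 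Carrying out the construction above for this parametrisation, but with $Q$ premultiplied by $\diag(\bar{\phi}^{-1},I_{p-1})$ and $P^{-1}$ premultiplied by $\diag(\bar{\phi}^{-1},\bar{\phi}^{-1},I_{p-1})$ --- equivalently, taking the top-left $2\times2$ block of $P^{-1}$ to be $I_{2}$ --- the first row of $Q\Phi_{0}$ becomes $(\bar{\phi}^{-1}\bar{\phi}_{0,yy}^{+},\bar{\phi}^{-1}\bar{\phi}_{0,yy}^{-},0)=(1,1,0)$, so $\tilde{\Phi}_{0}=\Ican[p]$ still holds while now $\tilde{y}_{t}^{\pm}=y_{t}^{\pm}$ and hence $\tilde{y}_{t}=y_{t}$. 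The main thing to watch here is simply how the rescaling of $y_{t}^{-}$ propagates through the definition of $\bar{\phi}_{0,yy}^{\pm}$ and through $Q$ and $P$.
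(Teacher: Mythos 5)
Your proof is correct and follows essentially the same route as the paper: premultiply \eqref{var-pm} by $Q$, verify $Q\Phi_{0}=\Ican[p]P^{-1}$ so that $\tilde{\Phi}_{0}=Q\Phi_{0}P=\Ican[p]$, and use the strict positivity $\bar{\phi}_{0,yy}^{\pm}>0$ to conclude $\tilde{y}_{t}^{+}=\max\{\tilde{y}_{t},0\}$, $\tilde{y}_{t}^{-}=\min\{\tilde{y}_{t},0\}$. For part (ii) the paper divides \eqref{var-pm} through by $\bar{\phi}_{0,yy}^{+}$ and rescales $y_{t}^{-}$ so that $\bar{\phi}_{0,yy}^{+}=\bar{\phi}_{0,yy}^{-}=1$ and then applies part (i) verbatim, whereas you keep the equation scale and absorb the common factor $\bar{\phi}$ into $Q$ and $P^{-1}$; these are the same operation viewed from opposite sides (your modified matrices are exactly the \eqref{canon-vars}--\eqref{Q-canon} matrices of the equation-rescaled representation), so the argument goes through equivalently.
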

The utility of the canonical representation lies in its rendering
all the nonlinearity in the model as a more tractable function of
the \emph{lags} of $\tilde{y}_{t}$ alone. Because the time-series
properties of a general CKSVAR are inherited from its canonical form,
we may restrict attention to canonical CKSVAR models essentially without
loss of generality. More precisely, for convenience we shall state
our results below for canonical CKSVAR models, and then invert the
mapping \eqref{canon-vars}--\eqref{canon-polys} to determine the
implications of these results for general CKSVAR models. That is,
we shall initially maintain the following.

\assumpname{DGP$^{\ast}$}
\begin{assumption}
\label{ass:dgp-canon}$\{(y_{t},x_{t})\}$ are generated by a canonical
CKSVAR, i.e.\ \ref{ass:dgp} holds with $\Phi_{0}=[\phi_{0}^{+},\phi_{0}^{-},\Phi_{0}^{x}]=\Ican[p]$
(and $b=0$); so that \eqref{var-two-sided} may be equivalently rewritten
as 
\begin{equation}
\begin{bmatrix}y_{t}\\
x_{t}
\end{bmatrix}=c+\sum_{i=1}^{k}\begin{bmatrix}\phi_{i}^{+} & \phi_{i}^{-} & \Phi_{i}^{x}\end{bmatrix}\begin{bmatrix}y_{t-i}^{+}\\
y_{t-i}^{-}\\
x_{t-i}
\end{bmatrix}+u_{t}.\label{eq:canon-var}
\end{equation}
\end{assumption}
\saveexamplex{}

\exname{\ref*{exa:monetary}}
\begin{example}[monetary policy; ctd]
 From \eqref{cksvar-natrate} above we have
\begin{align*}
\Phi_{0}^{+} & =\begin{bmatrix}1 & -\gamma\\
0 & 1-\theta\gamma
\end{bmatrix} & \Phi_{0}^{-} & =\begin{bmatrix}1 & -\gamma\\
\theta(1-\mu) & 1-\theta\gamma
\end{bmatrix},
\end{align*}
and thus $\det\Phi_{0}^{+}=1-\theta\gamma$ and $\det\Phi_{0}^{-}=1-\mu\theta\gamma$,
both of which are positive, so that the coherency condition \assref{dgp}\enuref{dgp:coherence}
is satisfied. (The model is already written in a form such that \assref{dgp}\enuref{dgp:wlog}
also holds.) The model may be rendered in canonical form by taking
\begin{gather}
\begin{aligned}\begin{bmatrix}\tilde{\imath}_{t}^{+}\\
\tilde{\imath}_{t}^{-}\\
\tilde{\pi}_{t}
\end{bmatrix} & \defeq\begin{bmatrix}1 & 0 & 0\\
0 & 1+\tau_{\mu} & 0\\
0 & \theta(1-\mu) & 1-\theta\gamma
\end{bmatrix}\begin{bmatrix}i_{t}^{+}\\
i_{t}^{-}\\
\pi_{t}
\end{bmatrix}\qquad & \qquad\begin{bmatrix}\tilde{\eta}_{t}\\
\tilde{\varepsilon}_{t}
\end{bmatrix} & \defeq\begin{bmatrix}1 & \gamma(1-\theta\gamma)^{-1}\\
0 & 1
\end{bmatrix}\begin{bmatrix}\eta_{t}\\
\varepsilon_{t}
\end{bmatrix}\end{aligned}
\label{eq:ex-canon-var}\\
\tilde{c}\defeq\begin{bmatrix}1 & \gamma(1-\theta\gamma)^{-1}\\
0 & 1
\end{bmatrix}\begin{bmatrix}(1-\psi)(\abv{\imath}-\gamma\abv{\pi})\\
(1-\theta\gamma-\chi)\abv{\pi}
\end{bmatrix}
\end{gather}
where $\tau_{\mu}\defeq\gamma\theta(1-\mu)(1-\theta\gamma)^{-1}$,
for which it holds that
\begin{equation}
\begin{bmatrix}\tilde{\imath}_{t}\\
\tilde{\pi}_{t}
\end{bmatrix}=\tilde{c}+\begin{bmatrix}\psi & \psi-\chi\tau_{\mu}\kappa_{\mu} & \gamma(\chi\kappa_{1}-\psi)\kappa_{1}\\
0 & -\chi\theta(1-\mu)\kappa_{\mu} & \chi\kappa_{1}
\end{bmatrix}\begin{bmatrix}\tilde{\imath}_{t-1}^{+}\\
\tilde{\imath}_{t-1}^{-}\\
\tilde{\pi}_{t-1}
\end{bmatrix}+\begin{bmatrix}\tilde{\eta}_{t}\\
\tilde{\varepsilon}_{t}
\end{bmatrix}\label{eq:example-canon}
\end{equation}
where $\kappa_{\mu}\defeq(1-\mu\theta\gamma)^{-1}$. (In the special
case where $\chi=0$, the canonical form is a linear system, since
then both $\tilde{\imath}_{t-1}^{+}$ and $\tilde{\imath}_{t-1}^{-}$
enter the r.h.s.\ with the same coefficients.)
\end{example}
\restoreexamplex{}

\section{Stationarity and ergodicity}

\label{sec:stationarity}

\subsection{The CKSVAR as a regime-switching model}

\label{subsec:regimeswitching}

The nonlinearity of the CKSVAR makes the assessment of its stationarity
a rather more complicated affair than it is for a linear (S)VAR. Though
it may be natural to think of the model as having two autoregressive
`regimes', corresponding respectively to positive and negative values
of $\{y_{t}\}$, and associated with the autoregressive polynomials
\[
\Phi^{\pm}(L)\defeq\begin{bmatrix}\phi^{\pm}(L) & \Phi^{x}(L)\end{bmatrix},
\]
the roots of $\Phi^{\pm}(L)$ are not sufficient to characterise the
stationarity of the model. What seems an intuitive criterion for stationarity,
that all these roots should lie outside the unit circle, turns out
to be neither necessary nor sufficient -- at best, it is merely necessary
for \emph{another} criterion for stability to be satisfied, as we
develop below. Even in very special cases, such as a CKSVAR in which
only lags of $y_{t}^{+}$ enter the model (as e.g.\ in \citealp{AMSV21}),
it is possible to construct numerical examples of the following kind.

\begin{figure}
\begin{adjustwidth}{-2cm}{-2cm}
\begin{centering}
\begin{tabular}{cc}
\includegraphics[viewport=230bp 165bp 612bp 440bp,clip,scale=0.6]{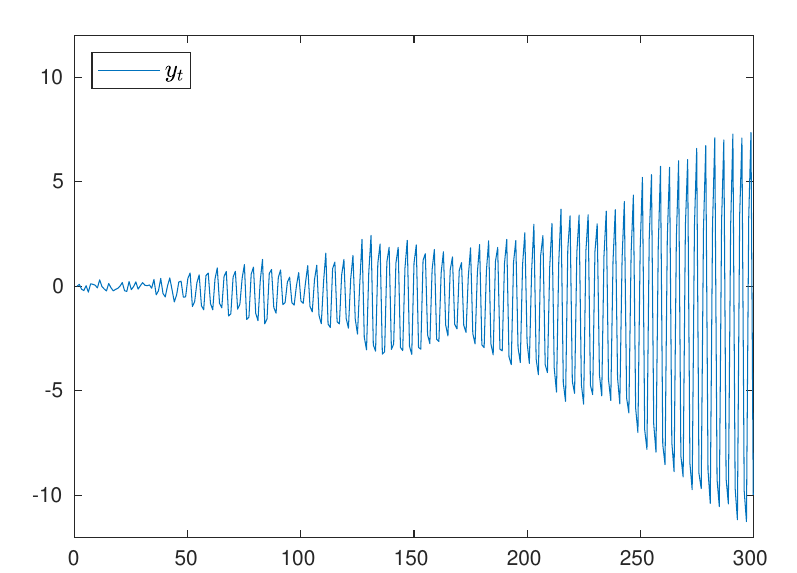} & \includegraphics[viewport=230bp 165bp 612bp 440bp,clip,scale=0.6]{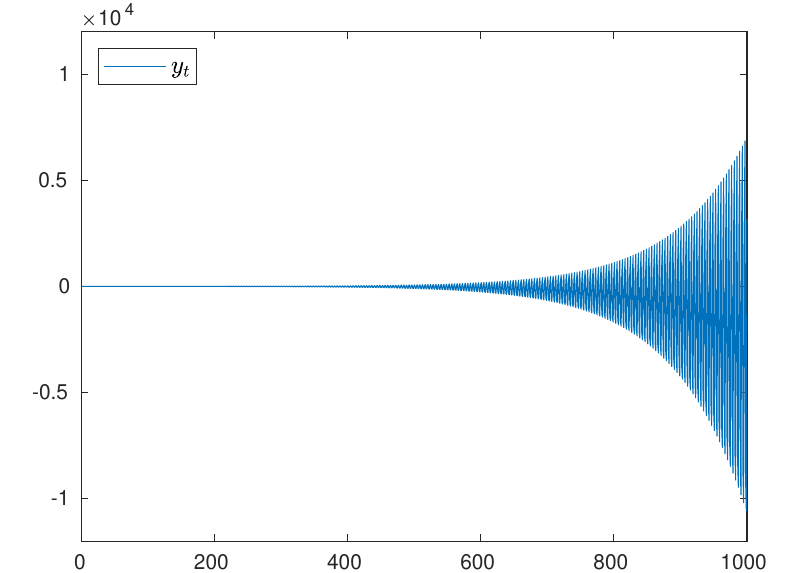}\tabularnewline
\end{tabular}
\par\end{centering}
\caption{Simulated trajectories for the CKSVAR in \exaref{explosive}}
\label{fig:unstable}

\end{adjustwidth}
\end{figure}

\begin{example}[explosive]
\label{exa:explosive} Consider a CKSVAR with $p=3$ and $k=1$,
where $\Phi_{0}=\Ican[3]$ and
\begin{align*}
\phi_{1}^{+} & =\begin{bmatrix}-1.37\\
0.79\\
0.76
\end{bmatrix} & \phi_{1}^{-} & =\begin{bmatrix}0\\
0\\
0
\end{bmatrix} & \Phi_{1}^{x} & =\begin{bmatrix}-1.00 & 0.36\\
0.39 & -1.33\\
0.71 & 0.03
\end{bmatrix},
\end{align*}
and $c=0$. The model has only two autoregressive regimes, associated
with the autoregressive matrices $\Phi^{+}=[\phi_{1}^{+},\Phi_{1}^{x}]$
and $\Phi^{-}=[\phi_{1}^{-},\Phi_{1}^{x}]$; it may be verified that
the eigenvalues of each lie inside the unit circle, with the largest
(in modulus) being $0.85$ and $0.98$ respectively. Nonetheless,
the simulated trajectories generated by this model are clearly explosive
(for plots of $\{y_{t}\}$ when the model is simulated with $\Sigma=I_{3}$,
see \figref{unstable}). The reason for this is that both $\Phi^{+}$
and $\Phi^{-}$ have a pair of complex eigenvalues, which causes the
sign of $y_{t}$ to continually oscillate -- with the result that
the evolution of $(y_{t},x_{t})$ is governed by $\Phi^{+}\Phi^{-}$,
whose largest eigenvalue lies outside the unit circle.
\end{example}
As the above example illustrates, a tractable analytic characterisation
of stationarity is unlikely to be available in this setting. To make
progress we draw on approaches developed in the literature on nonlinear
time-series models, specifically those pertaining to `regime-switching'
or `vector threshold autoregressive' (VTAR) models (see \citealp{HT13}),
of which the CKSVAR is an instance. To make the connections with these
models more apparent, let $\indic^{+}(y)\defeq\indic\{y\geq0\}$ and
$\indic^{-}(y)\defeq\indic\{y<0\}$, and define\footnote{There is unavoidably some arbitrariness with respect to how these
objects are defined when $y=0$, but since these only play a role
in the model when multiplied by $y$, it does not matter what convention
is adopted. Throughout the paper, we use the notation $\indic^{\pm}(y)$
to ensure that all such definitions are mutually consistent.} 
\begin{align*}
\phi_{i}(y) & \defeq\phi_{i}^{+}\indic^{+}(y)+\phi_{i}^{-}\indic^{-}(y) & \Phi_{i}(y) & \defeq\begin{bmatrix}\phi_{i}(y) & \Phi_{i}^{x}\end{bmatrix}
\end{align*}
for $i\in\{1,\ldots,k\}$. Then for $z_{t}\defeq(y_{t},x_{t}^{\trans})^{\trans}$,
the canonical CKSVAR
\begin{equation}
z_{t}=c+\sum_{i=1}^{k}\Phi_{i}(y_{t-i})z_{t-i}+u_{t}\label{eq:regimeswitching}
\end{equation}
may be regarded as an autoregressive model with $2^{k}$ `regimes'
corresponding to all the possible sign patterns of $(y_{t-1},\ldots,y_{t-k})$,
with switches between those regimes occurring at each of the $k$
`thresholds' defined by $y_{t-i}=0$. While the existing literature
provides results on the stationarity of a general class of regime-switching
vector autoregressive models (see e.g.\ \citealp{Lieb2005}; \citealp{Saik08ET};
\citealp{KS2020ER}), because of that very generality we are able
to improve on those results by exploiting the special structure of
the CKSVAR model.

\subsection{Ergodicity via stability of the associated deterministic system}

\label{subsec:ergodicstable}

To state our results, we first recall the following (cf.\ \citealp{Lieb2005},
p.\ 671).
\begin{defn}
Let $\{w_{t}\}_{t\in\naturals_{0}}$ be a Markov chain taking values
in $\reals^{d_{w}}$, with $m$-step transition kernel $P^{m}(w,A)\defeq\Prob\{w_{t+m}\in A\mid w_{t}=w\}$,
and $\mathcal{Q}:\reals^{d_{w}}\setmap\reals_{+}$. We say that $\{w_{t}\}$
is $\mathcal{Q}$-\emph{geometrically ergodic}, with stationary distribution
$\pi$, if $\int_{\reals^{d_{w}}}\mathcal{Q}(w)\pi(\deriv w)<\infty$,
and there exist $a,b>0$ and $\gamma\in(0,1)$ such that
\[
\sup_{B\in\Borel}\smlabs{P^{m}(w,B)-\pi(B)}\leq(a+b\mathcal{Q}(w))\gamma^{m}
\]
for all $w\in\reals^{d_{w}}$, where $\Borel$ denotes the Borel sigma-field
on $\reals^{d_{w}}$.
\end{defn}
If $\{w_{t}\}$ is $\mathcal{Q}$-geometrically ergodic, then it is
positive Harris recurrent, a property that is particularly useful
for establishing the large-sample properties of estimators of the
CKSVAR parameters, as discussed in \subsecref{asymptotics} below.
Moreover $\{w_{t}\}$ will be stationary and ergodic if given a stationary
initialisation, i.e.\ if $w_{0}$ also has distribution $\pi$, and
will have geometrically decaying $\beta$-mixing coefficients. (For
these and further properties of such sequences, see \citealp{Lieb2005},
pp.~671--73, and also Ch.\ 9 of \citealp{MT09}, for a discussion
of Harris recurrence.)

While $\{z_{t}\}$ is not a Markov chain, $\b z_{t}\defeq(z_{t}^{\trans},\ldots,z_{t-k+1}^{\trans})^{\trans}$
is, as can be seen by rendering \eqref{regimeswitching} in companion
form. To establish that $\{\b z_{t}\}$ is $\mathcal{Q}$-geometrically
ergodic, we shall make the following assumption on the innovation
sequence $\{u_{t}\}$ in \eqref{var-pm}, which allows for a certain
form of conditional heteroskedasticity.

\needspace{3\baselineskip}

\assumpname{ERR}
\begin{assumption}
\label{ass:err}$u_{t}=\Sigma(\b z_{t-1})\err_{t}$, where
\begin{enumerate}[label=\ass{\arabic*.}, ref=\ass{.\arabic*}]
\item \label{enu:err:leb}$\{\err_{t}\}$ is i.i.d.\ and independent of
$\b z_{t-1}$, with a Lebesgue density that is bounded away from zero
on compact subsets of $\reals^{p}$, and $\expect\smlnorm{\err_{t}}^{m_{0}}<\infty$
for some $m_{0}\geq1$;
\item \label{enu:err:eigs}for each compact $K\subset\reals^{kp}$, there
exist $c_{0},c_{1}\in(0,\infty)$ such that the eigenvalues of the
symmetric matrix $\Sigma(\b z)$ lie in $[c_{0},c_{1}]$ for all $\b z\in K$;
and
\item \label{enu:err:sigmaterms}$\Sigma_{ij}(\b z)=o(\smlnorm{\b z})$
as $\smlnorm{\b z}\goesto\infty$, for each $i,j\in\{1\ldots,p\}$.
\end{enumerate}
\end{assumption}
\begin{rem}
\ref{ass:err}\enuref{err:leb} is a standard assumption in this setting,
ensuring that the chain is irreducible and that its stationary distribution
is continuous (cf.\ \citealp[p.~675]{Lieb2005}, or \citealp[Ass.~1]{Saik08ET}).
\assref{err}\ref{enu:err:eigs} and \assref{err}\ref{enu:err:sigmaterms}
are respectively equivalent to (11) and (10) in \citet{Lieb2005}.
\end{rem}
We may now state our main result on the geometric ergodicity of the
CKSVAR, which relates the geometric ergodicity of $\{\b z_{t}\}$
to the stability of the deterministic system (or `skeleton' in the
terminology of \citealp{Tong90}) associated to \eqref{regimeswitching},
defined by
\begin{equation}
\hat{z}_{t}=\sum_{i=1}^{k}\Phi_{i}(\hat{y}_{t-i})\hat{z}_{t-i}.\label{eq:determ}
\end{equation}
We say that such a system is \emph{stable} if $\hat{z}_{t}\goesto0$
as $t\goesto\infty$, for every initialisation $\{\hat{z}_{t}\}_{t=-k+1}^{0}$.
Our results for the CKSVAR follow as a corollary to a more general
result for continuous and homogeneous (of degree one) autoregressive
systems, given as \lemref{ergodic} in \appref{ergodic}. This in
turn extends a fundamental result due to \citet[Thms~4.2 and 4.5]{CT85AAP}
to the present setting, principally by relaxing their conditions on
the innovations $\{u_{t}\}$.\footnote{Cf.\ Theorem~3.1 in \citet{CP99StSin}, which also relaxes these
conditions, but yields only the weaker conclusion of geometric ergodicity;
whereas $\mathcal{Q}$-ergodicity additionally ensures that the stationary
distribution has finite $m_{0}$th moment.} The proof of the following appears in \appref{ergodic}.

\begin{thm}
\label{thm:ergodicity} Suppose \assref{dgp-canon} and \assref{err}
hold, and that the deterministic system \eqref{determ} associated
to the model \eqref{regimeswitching} is stable. Then $\{\b z_{t}\}_{t\in\naturals_{0}}$
is $\mathcal{Q}$-geometrically ergodic, for $\mathcal{Q}(\b z)\defeq1+\smlnorm{\b z}^{m_{0}}$,
with a stationary distribution that is equivalent to Lebesgue measure
on $\reals^{kp}$, and has finite $m_{0}$th moment. If in addition
$\expect\smlnorm{\b z_{0}}^{m_{0}}<\infty$, then $\sup_{t\in\naturals}\expect\smlnorm{\b z_{t}}^{m_{0}}<\infty$.
\end{thm}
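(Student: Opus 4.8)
The plan is to deduce Theorem~\ref{thm:ergodicity} from the more general result, \lemref{ergodic}, announced for continuous, positively homogeneous autoregressive systems in \appref{ergodic}. Concretely, I would first put the canonical CKSVAR \eqref{regimeswitching} into companion form, writing $\b z_t = F(\b z_{t-1}) + \b u_t$ where $F:\reals^{kp}\to\reals^{kp}$ stacks $c + \sum_{i=1}^k \Phi_i(y_{t-i}) z_{t-i}$ on top of the obvious shift, and $\b u_t \defeq (u_t^\trans,0,\ldots,0)^\trans$. The key structural observations are: (a) $F$ is continuous, because $y\mapsto \Phi_i(y)z = \phi_i^+ \indic^+(y) y^{(1)} + \phi_i^- \indic^-(y) y^{(1)} + \Phi_i^x z^{(x)}$ is continuous at $y=0$ (the kink, not a jump — this is exactly the feature distinguishing CKSVAR from general VTAR); (b) the map $\b z \mapsto \sum_i \Phi_i(y_{t-i}) z_{t-i}$ is positively homogeneous of degree one, since $\indic^{\pm}(\lambda y) = \indic^{\pm}(y)$ for $\lambda>0$, so the "leading part" of $F$ (dropping the constant $c$, which is asymptotically negligible relative to $\smlnorm{\b z}$) is homogeneous; and (c) the deterministic skeleton obtained by deleting $\b u_t$ and $c$ is precisely the companion form of \eqref{determ}, so the hypothesis that \eqref{determ} is stable is exactly the stability hypothesis required by \lemref{ergodic}.

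Second, I would verify that \assref{err} delivers the innovation hypotheses needed by \lemref{ergodic}. The decomposition $u_t = \Sigma(\b z_{t-1})\err_t$ with \ref{ass:err}\enuref{err:leb} gives i.i.d.\ $\{\err_t\}$, independent of the past, with a Lebesgue density bounded away from zero on compacta and a finite $m_0$th moment — this is what drives irreducibility, aperiodicity, the $T$-chain/small-set property on compact sets, and absolute continuity of the stationary law. Condition \ref{ass:err}\enuref{err:eigs} ensures $\Sigma(\b z)$ is uniformly nonsingular and bounded on compacta (so the noise genuinely "spreads mass" locally and the drift estimates are uniform there), while \ref{ass:err}\enuref{err:sigmaterms}, $\Sigma_{ij}(\b z) = o(\smlnorm{\b z})$, guarantees that the conditionally heteroskedastic noise is negligible at infinity compared with the homogeneous deterministic part — this is the crucial growth condition that lets the stability of the skeleton dominate in the drift inequality for large $\smlnorm{\b z}$. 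These are, as the preceding \textbf{Remark} notes, exactly conditions (10)--(11) of \citet{Lieb2005}, hence exactly what \lemref{ergodic} is set up to consume.

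Third, with the hypotheses of \lemref{ergodic} checked, the lemma yields a Foster--Lyapunov (geometric drift) inequality of the form $\expect[V(\b z_t)\mid \b z_{t-1} = \b z] \leq \rho V(\b z) + L\indic_{C}(\b z)$ for some $V$ comparable to $\mathcal{Q}(\b z) = 1 + \smlnorm{\b z}^{m_0}$, some $\rho<1$, a constant $L$, and a small (petite) set $C$; combined with irreducibility/aperiodicity and the small-set property this gives $\mathcal{Q}$-geometric ergodicity via the standard Meyn--Tweedie machinery (\citealp{MT09}), with a stationary distribution $\pi$ satisfying $\int \mathcal{Q}\,\deriv\pi < \infty$, i.e.\ finite $m_0$th moment, and absolutely continuous by the density assumption. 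The construction of $V$ — specifically, showing that stability of the skeleton \eqref{determ} implies existence of a (possibly non-smooth, norm-like) Lyapunov function that contracts on average under the stochastic dynamics, uniformly in the $2^k$ regimes — is the substantive content, but by hypothesis it is handled inside \lemref{ergodic}; here I only need to transcribe the companion-form identification and the verification of \assref{err} $\Rightarrow$ (conditions of \lemref{ergodic}).

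\textbf{Main obstacle.} The delicate point is \textbf{(b)}: \eqref{determ} and its companion form are only \emph{positively} homogeneous and only \emph{continuous} (not $C^1$) at the kink, and $F$ is an affine — not linear — perturbation of this homogeneous map because of the intercept $c$. So one cannot invoke a spectral-radius argument directly; the passage from "skeleton stable" to "geometric drift for the stochastic chain" must go through the \citet{CT85AAP}-type argument (compactness of the unit sphere plus homogeneity to get uniform contraction of the skeleton over one or more steps, then a perturbation argument absorbing both $c$ and the $o(\smlnorm{\b z})$ noise). Getting the bookkeeping right — that $c$ and $\Sigma(\b z)\err_t$ are both $o(\smlnorm{\b z})$ and hence do not disturb the contraction at infinity — is exactly what \lemref{ergodic} is designed to encapsulate, so in this proof the obstacle is deferred to the appendix; what remains here is routine once the companion-form translation is made explicit.
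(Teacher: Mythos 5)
Your proposal is correct and follows essentially the same route as the paper: the paper's proof of \thmref{ergodicity} is exactly the observation that the canonical CKSVAR \eqref{regimeswitching} fits the template of \lemref{ergodic}, with $z\elmap\Phi_{i}(z)z$ continuous and (positively) homogeneous of degree one, the constant $c$ absorbed into $\mu(\cdot)=o(\smlnorm{\b z})$, and \assref{err} supplying the innovation conditions. The only cosmetic difference is that the paper states \lemref{ergodic} directly for the lag-$k$ system, so the companion-form bookkeeping you spell out is carried out inside the lemma's proof rather than in the reduction itself.
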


For a process $\{z_{t}\}$ generated by a general CKSVAR, we note
that \propref{canonical} yields a bi-Lipschitz mapping between $z_{t}$
and the associated canonical process $\tilde{z}_{t}$. Since stability
of (the deterministic part of) the CKSVAR implies stability of its
canonical form, we have the following.
\begin{cor}
\label{cor:ergodicity}Suppose \assref{dgp} and \assref{err} hold,
and that the associated deterministic system
\begin{equation}
\Phi_{0}(\hat{y}_{t})\hat{z}_{t}=\sum_{i=1}^{k}\Phi_{i}(\hat{y}_{t-i})\hat{z}_{t-i}\label{eq:deterCKSVAR}
\end{equation}
is stable. Then the conclusions of \thmref{ergodicity} hold.
\end{cor}

\begin{rem}
\label{rem:ergod}\subremark{} In view of \lemref{ergodic} in \appref{ergodic},
the preceding would also hold if the CKSVAR model were generalised
by allowing the intercept to vary as $c=c(\b z_{t-1})$, provided
that $c=o(\smlnorm{\b z})$ as $\smlnorm{\b z}\goesto\infty$.

\subremark{} These results hold trivially in a linear VAR, being
a special case of the CKSVAR. However, whereas the stability of a
linear VAR can be determined from its autoregressive roots, assessing
the stability of a CKSVAR requires more elaborate criteria, such as
those developed in \secref{stability} below.

\subremark{}\label{subrem:order} The preceding holds more generally
when $\b z_{t}=(z_{t}^{\trans},\ldots,z_{t-k+1}^{\trans})^{\trans}$
is replaced by $\b z_{t}^{(k^{\prime})}\defeq(z_{t}^{\trans},\ldots,z_{t-k^{\prime}+1}^{\trans})^{\trans}$
for any $k^{\prime}>k$, as can be seen by nesting a (stable) $k$th-order
model within a (stable) $k^{\prime}$th-order model, in which the
parameters referring to the final $k^{\prime}-k$ lags of $z_{t}$
are identically zero. Regarding the initialisation of $\b z_{t}^{(k^{\prime})}$,
observe that if $\expect\smlnorm{\b z_{0}}^{m_{0}}<\infty$, then
$\expect\smlnorm{z_{t}}^{m_{0}}<\infty$ for $t\in\{1,\ldots,k^{\prime}-k\}$,
and thus we can regard $\{\b z_{t}^{(k^{\prime})}\}$ as being initialised
at time $t=k^{\prime}-k$ with $\expect\smlnorm{\b z_{k^{\prime}-k}^{(k^{\prime})}}^{m_{0}}<\infty$.
\end{rem}

\subsection{Implications for asymptotic inference}

\label{subsec:asymptotics}

\thmref{ergodicity} and its corollary provide sufficient conditions
for the Markov process generated by a CKSVAR to be $\mathcal{Q}$-geometrically
ergodic, and therefore also positive Harris recurrent. This permits
the application of laws of large numbers for such processes, facilitating
the derivation of the limiting distributions of estimators for the
CKSVAR model \eqref{var-two-sided}. By way of illustration, here
we provide a result on the asymptotic normality of maximum likelihood
estimators (MLE) under the simplifying assumptions of: i.i.d.\ Gaussian
innovations; a known threshold (here normalised to zero); and $z_{t}$
being fully observed in both regimes.\footnote{If instead only $y_{t}^{+}$ is observed, so that $y_{t}^{-}$ must
be treated as a latent process, the problem of evaluating the likelihood
becomes much more challenging: see Appendix~E of \citet{SM21} for
details. This also significantly complicates the asymptotic analysis
of the MLE, something that is the subject of the authors' ongoing
research.} To fix the overall scale of the model parameters, we also normalise
$\Sigma=I_{p}$ -- as is also appropriate if the elements of $u_{t}$
are to be interpreted as (mutually uncorrelated) structural shocks
(see e.g.\ \citealp{KL17book}).

Because the CKSVAR is a (dynamic) system of (nonlinear) simultaneous
equations, data on $\{z_{t}\}$ alone is insufficient to uniquely
identify the values of the structural parameters $\b{\Phi}\defeq(c,\{\phi_{i}^{+},\phi_{i}^{-},\Phi_{i}^{x}\}_{i=0}^{k})$.
Indeed, \citet{DM26id} show that when $\{z_{t}\}$ is fully observed,
the model \eqref{var-two-sided} falls within a broad class of `endogenously
nonlinear' SVARs whose parameters are identified up to some (unknown)
orthogonal matrix $\orthm\in\reals^{p\times p}$. Theirs is a nonparametric
identification result, which requires only weak smoothness conditions
on the distribution of the innovations, and which a fortiori holds
also when those innovations are assumed to belong to a parametric
family that is closed under orthogonal transformations, such as the
Gaussian. The parameters of the model are thus identified, from the
data $\{z_{t}\}$, to the same extent as they would be in a linear
SVAR.

It is accordingly helpful to reparametrise the CKSVAR model so as
to clearly separate the identified parameters from those that are
unidentified. (This provides an alternative to the analysis of an
identified `reduced form' model, an approach which is far less convenient
in this setting than in a linear SVAR, due to the possible nonlinearity
on the l.h.s.\ of \eqref{var-two-sided}.) Since $\Phi_{0}^{+}=[\phi_{0}^{+},\Phi_{0}^{x}]$
is invertible, it uniquely decomposes (via the QR decomposition) as
\[
[\phi_{0}^{+},\Phi_{0}^{x}]=\Phi_{0}^{+}=\orthm^{\trans}\Psi_{0}^{+}=\orthm^{\trans}[\psi_{0}^{+},\Psi_{0}^{x}]
\]
where $\orthm\in\reals^{p\times p}$ is an orthogonal matrix, and
$\Psi_{0}^{+}\in\reals^{p\times p}$ is lower triangular, normalised
to have positive diagonal entries. (So that $\sgn\det\Psi_{0}^{+}=1$
and, as a consequence of which, \assref{dgp} entails that $\sgn\det\Psi_{0}^{-}=1$
also. It follows that the diagonal entries of $\Psi_{0}^{-}$ are
also all positive.) Multiplying \eqref{var-two-sided} through by
$\orthm$ yields the reparametrised model
\begin{equation}
\psi_{0}^{+}y_{t}^{+}+\psi_{0}^{-}y_{t}^{-}+\Psi_{0}^{x}x_{t}=\mu+\sum_{i=1}^{k}[\psi_{i}^{+}y_{t-i}^{+}+\psi_{i}^{-}y_{t-i}^{-}+\Psi_{i}^{x}x_{t-i}]+\orthm u_{t}.\label{eq:reparm}
\end{equation}
By Theorem~2.2 of \citet{DM26id}, the parameters $\b{\Psi}\defeq(\mu,\{\psi_{i}^{+},\psi_{i}^{-},\Psi_{i}^{x}\}_{i=0}^{k})$
are exactly identified, whereas $\orthm$ is not. However, since $\orthm u_{t}\distiid N[0,I_{p}]$,
the model likelihood for $\b{\Psi}$ is invariant to $\orthm$, and
thus the former may be estimated (by ML) without regard for the latter.
The result below establishes that these MLEs are consistent and asymptotically
normal -- and in view of the form of the limiting distribution, asymptotically
efficient.

Inference on the structural parameters $\b{\Phi}$ would require external
identifying restrictions sufficient to pin down (some or all of) $\orthm$,
such that the elements of $\b{\Phi}$ may be recovered from those
of $\b{\Psi}$ via 
\[
\orthm^{\trans}\begin{bmatrix}\psi_{i}^{+} & \psi_{i}^{-} & \Psi_{i}^{x}\end{bmatrix}=\begin{bmatrix}\phi_{i}^{+} & \phi_{i}^{-} & \Phi_{i}^{x}\end{bmatrix}.
\]
Such restrictions may be obtained, exactly as they are in a linear
SVAR, by appealing to macroeconomic theories that either constrain
the associated structural impulse response functions (e.g.\ by imposing
that certain impact multipliers are zero, or restricting their signs),
or which suggest external instruments that are correlated only with
certain structural shocks: see \citet{DM26id} for a discussion. 

Let $\theta$ collect all elements of $\mu$, $\psi_{0}^{+}$, $\psi_{0}^{-}$,
$\{\psi_{i}^{+},\psi_{i}^{-},\Psi_{i}^{x}\}_{i=1}^{k}$, and the below-diagonal
elements of $\Psi_{0}^{x}$ (recall that $\Psi_{0}^{x}$ is zero on
and above its main diagonal). The parameter space $\Theta\subset\reals^{d_{\theta}}$
for $\theta$ is restricted only by the requirement that $\Psi_{0}^{\pm}$
have positive diagonal entries; we let $\theta_{0}\in\Theta$ denote
the values under which the data $\{z_{t}\}$ was generated. In view
of \eqref{reparm}, when $u_{t}\distiid N[0,I_{p}]$ the model implies
a conditional density $f_{\theta}(\bar{z}_{t}\mid\bar{\b z}_{t-1})$
for the observables that depends on $\theta$ but not on $\orthm$.
(Here $\bar{z}_{t}$ and $\bar{\b z}_{t-1}$ denote real values that
may be taken by the random variables $z_{t}$ and $\b z_{t-1}$.)
Let $\hat{\theta}_{n}$ denote a maximiser of the loglikelihood $L_{n}(\theta)\defeq\sum_{t=1}^{n}\log f_{\theta}(z_{t}\mid\b z_{t-1})$
conditional on the initialisation $\b z_{0}=(z_{0},\ldots,z_{-k+1})$,
and define the associated (conditional) score process $s_{t}\defeq\grad_{\theta}\left.\log f_{\theta}(z_{t}\mid\b z_{t-1})\right|_{\theta=\theta_{0}}$,
where $\grad_{\theta}\left.g(\theta)\right|_{\theta=\theta_{0}}$
denotes the gradient of $g(\theta)$ at $\theta=\theta_{0}$. Under
the conditions of \corref{ergodicity}, the Markov process $(z_{t}^{\trans},\b z_{t-1}^{\trans})^{\trans}$
will converge weakly to its invariant distribution as $t\goesto\infty$;
let $\mathbf{E}$ denote an expectation taken with respect to that
distribution. The proof of the following result appears in \secref{limit-dist}.

\needspace{3\baselineskip}
\begin{thm}
\label{thm:limittheory}Suppose that \assref{dgp} holds, the deterministic
system \eqref{deterCKSVAR} is stable, $u_{t}\distiid N[0,I_{p}]$,
$\expect\smlnorm{\b z_{0}}^{m_{0}}<\infty$ for some $m_{0}>4$, and
$\theta_{0}\in\intr\Theta$. Then
\[
n^{1/2}(\hat{\theta}_{n}-\theta_{0})\wkc N[0\sep\mathcal{I}(\theta_{0})^{-1}],
\]
where $\mathcal{I}(\theta_{0})\defeq\plim_{n\goesto\infty}\frac{1}{n}\sum_{t=1}^{n}s_{t}s_{t}^{\trans}=\mathbf{E}s_{1}s_{1}^{\trans}$
is positive definite.
\end{thm}
\begin{rem}
\subremark{} By Theorem~2.2 of \citet{DM26id}, the parameters $\b{\Psi}$
are identified even if $\{u_{t}\}$ is merely assumed to be i.i.d.\ with
a (Lebesgue) density that satisfies some weak smoothness conditions,
and which has full support on $\reals^{p}$. Nonetheless, it does
not seem possible to estimate $\b{\Psi}$ semiparametrically, e.g.\ by
OLS regression, because of the potential nonlinearity on the l.h.s.\ of
\eqref{reparm}. One might instead take a semi-nonparametric approach
to the estimation of these parameters, in which the distribution of
the identified shocks $\orthm u_{t}$ (as opposed to $u_{t}$ itself)
were approximated by sieves, the parameters regulating which would
be estimated (by ML) jointly with $\b{\Psi}$.

\subremark{} Inspection of the proof shows that \thmref{limittheory}
generalises straightforwardly to the case where $\{u_{t}\}$ is i.i.d.\ with
$\expect u_{t}=0$, $\expect u_{t}u_{t}^{\trans}=I_{p}$, and has
a strictly log-concave density $f$ that is invariant to orthogonal
transformations. This holds, for example, if $f(u)=C\exp[-g(\smlnorm u)]$
for some $g:\reals_{+}\setmap\reals_{+}$ that is strictly convex
and increasing. In this case, the convexity argument used in the proof
to deduce the limiting distribution of $\hat{\theta}_{n}$ from the
finite-dimensional convergence of the (rescaled) loglikelihood function
may still be applied.
\end{rem}

\section{Sufficient conditions for stability}

\label{sec:stability}

While the preceding reduces the problem to one of verifying the stability
of the deterministic subsystem \eqref{determ}, this still presents
a formidable challenge. We therefore develop some sufficient conditions
for stability -- and hence for ergodicity -- that can be evaluated
numerically. Some of these concepts, particularly the joint spectral
radius (JSR), have been previously applied to regime-switching and
VTAR models, and to this extent there is an overlap between our results
and those of \citet[Thm.~2]{Lieb2005} and \citet[Thm.~1]{KS2020ER}.
However, by exploiting the close connection between the ergodicity
of a CKSVAR and the stability of its associated deterministic system,
as per \thmref{ergodicity}, we can assess ergodicity using less conservative
criteria than are available when working with a broader class of regime-switching
autoregressive processes.

\subsection{In the abstract}

\label{subsec:stabilityabstract}

To formulate the criteria in abstract terms, let $\{w_{t}\}_{t\in\naturals_{0}}$
be a deterministic process, taking values in $\reals^{d_{w}}$, that
evolves according to a \emph{switched linear system}: that is, a system
in which there is a (possibly uncountable) set ${\cal I}\ni i$ of
states, each of which is associated with a distinct autoregressive
matrix $A[i]$. If $\{i_{t}\}\subset{\cal I}$ records the state of
the system at each $t\in\naturals_{0}$, then $\{w_{t}\}$ evolves
as
\begin{equation}
w_{t}=A[i_{t-1}]w_{t-1}\label{eq:unresdet}
\end{equation}
from some given $w_{0}\in\reals^{d_{w}}$. Suppose further that the
state in period $t$ is entirely determined by the value of $w_{t}$,
via the mapping $\sigma:\reals^{d_{w}}\setmap{\cal I}$. Then $i_{t}=\sigma(w_{t})$,
and
\begin{equation}
w_{t}=A[\sigma(w_{t-1})]w_{t-1}\eqdef A(w_{t-1})w_{t-1}\label{eq:detsystem}
\end{equation}
Let ${\cal A}\defeq\{A[i]\}_{i\in{\cal I}}$ denote the set of autoregressive
matrices, and $\mathscr{W}_{i}\defeq\sigma^{-1}(i)$, so that $\{\mathscr{W}_{i}\}_{i\in{\cal I}}$
partitions $\reals^{d_{w}}$.

We say \eqref{detsystem} is \emph{stable} if $w_{t}\goesto0$ as
$t\goesto\infty$, for every $w_{0}\in\reals^{d_{w}}$ (what is more
precisely termed `global asymptotic stability'; e.g.\ \citealp[pp.~176f.]{Ela05}).
The usual approach to establishing the stability of such a system
is to construct a Lyapunov function. For our purposes, we may take
this to be a function $V$ and an associated value $\gamma_{V}\in\reals_{+}$
such that
\begin{enumerate}
\item $c_{0}\smlnorm w\leq V(w)$ for all $w\in\reals^{d_{w}}$, for some
$c_{0}\in(0,\infty)$; and
\item $V[A(w_{t})w_{t}]\leq\gamma_{V}V(w_{t})$ for all $t\in\naturals$.
\end{enumerate}
As is well known, if we can find a $(V,\gamma_{V})$ with $\gamma_{V}<1$,
the system is stable (\citealp[Thm.~4.2]{Ela05}): and thus the problem
of establishing the stability of \eqref{detsystem} can be reformulated
as one of showing that the \emph{stability degree} 
\[
\gamma^{\ast}\defeq\inf\{\gamma_{V}\in\reals_{+}\mid(V,\gamma_{V})\text{ satisfy (i)--(ii)}\}
\]
of the system is strictly less than unity. (While stability may be
established using a Lyapunov function satisfying weaker conditions
than (i)--(ii), for the systems considered here these conditions
are not restrictive: see \lemref{lyapunov}.)

While the calculation of $\gamma^{\ast}$ is in general an undecidable
problem, various upper bounds for it are known. The most elementary
of these is provided by the joint spectral radius, which for a bounded
collection of matrices ${\cal A}$ may be defined as (see e.g.\ \citealp{Jungers09},
Defn.\ 1.1)
\[
\rho_{\jsr}(\mathcal{A})\defeq\limsup_{t\goesto\infty}\sup_{B\in\mathcal{A}^{t}}\rho(B)^{1/t}
\]
where $\rho(B)$ denotes the spectral radius of $B$, and $\mathcal{A}^{t}\defeq\{\prod_{s=1}^{t}M_{s}\mid M_{s}\in\mathcal{A}\}$
is the collection of all possible $t$-fold products of matrices in
${\cal A}$. For each $\epsilon>0$, there exists a norm $\smlnorm{\cdot}_{\ast}$
such that 
\begin{equation}
\smlnorm{A[i]w}_{\ast}\leq[\rho_{\jsr}(\mathcal{A})+\epsilon]\smlnorm w_{\ast}\sep\forall w\in\reals^{d_{w}}\sep\forall i\in{\cal I},\label{eq:jsrnormbound}
\end{equation}
(see e.g.\ \citealp{Jungers09}, Prop.\ 1.4) whence $V(w)\defeq\smlnorm w_{\ast}$
trivially yields a valid Lyapunov function, and so $\gamma^{\ast}\leq\rho_{\jsr}(\mathcal{A})$.
Note that $\sup_{i\in{\cal I}}\rho(A[i])$ provides only a \emph{lower}
bound on $\rho_{\jsr}(\mathcal{A})$, so control over the individual
spectral radii is necessary, but not sufficient, for control over
the JSR of ${\cal A}$.

The conservativeness of the JSR is readily apparent from the fact
that it implies the stability of \eqref{unresdet} \emph{irrespective}
of the sequence $\{i_{t}\}$, so that a system adjudged to be stable
by this criterion would remain so even if $\sigma$ in \eqref{detsystem}
were replaced by an arbitrary mapping $\sigma^{\prime}:\reals^{d_{w}}\setmap{\cal I}$.
The JSR can thus be immediately improved upon by taking account of
the constraints on the sequence $\{i_{t}\}$ of states permitted by
the system. Suppose now that ${\cal I}$ is finite, and let ${\cal J}\subset{\cal I}\times{\cal I}$
denote the set of all pairs $(i,j)$ such that state $i$ may be followed
by state $j$, i.e.\ $(i,j)\in{\cal J}$ if there exists a $w\in\mathscr{W}_{i}$
such that $A[\sigma(w)]w\in\mathscr{W}_{j}$; we say that $\{i_{t}\}$
is \emph{admissible} \emph{by }${\cal J}$ if $(i_{t},i_{t+1})\in{\cal J}$
for all $t$. Then the \emph{constrained} \emph{joint spectral radius}
(CJSR) may be defined as (cf.\ \citealp{Dai12LAA}, Defn.\ 1.2 and
Thm.~A; \citealp{PEDJ16Auto}, p.\ 243)
\begin{equation}
\rho_{\cjsr}(A[\cdot];{\cal J})\defeq\limsup_{t\goesto\infty}\sup_{B\in\mathcal{A}^{t}({\cal J})}\rho(B)^{1/t}\label{eq:cjsr}
\end{equation}
where now $\mathcal{A}^{t}({\cal J})\defeq\{\prod_{s=1}^{t}A[i_{s}]\mid\{i_{s}\}\text{ is admissible by }\mathcal{J}\}$.
By \citet[Prop.~2.2]{PEDJ16Auto}, for each $\epsilon>0$ there exists
a family of norms $\{\smlnorm{\cdot}_{i}\}_{i\in\mathcal{I}}$ such
that
\begin{equation}
\smlnorm{A[i]w}_{j}\leq[\rho_{\cjsr}(\mathcal{A};{\cal J})+\epsilon]\smlnorm w_{i}\sep\forall w\in\reals^{d_{w}}\sep\forall(i,j)\in{\cal J}\label{eq:cjsr-norms}
\end{equation}
whence $V(w)\defeq\sum_{i\in\mathcal{I}}\smlnorm w_{i}\indic\{w\in\mathscr{W}_{i}\}$
is a Lyapunov function; it is evident that $\rho_{\cjsr}(A[\cdot];{\cal J})\leq\rho_{\jsr}(\mathcal{A})$,
so that the CJSR yields an improved estimate of the stability degree
of the system.

The form of the Lyapunov function implicitly provided by the CJSR
in turn suggests how this construction may be further improved upon.
For $V(w)=\sum_{i\in\mathcal{I}}\smlnorm w_{i}\indic\{w\in\mathscr{W}_{i}\}$
to be a Lyapunov function, it is sufficient that: (i) such an inequality
as \eqref{cjsr-norms} hold only for $w\in\mathscr{W}_{i}$ such that
$A[i]w\in\mathscr{W}_{j}$, rather than for all $w\in\reals^{d_{w}}$;
and (ii) each $\smlnorm{\cdot}_{i}$ satisfy $\smlnorm w_{i}>c_{i}\smlnorm w$
on $w\in\mathscr{W}_{i}$, for some $c_{i}>0$, i.e.\ $\smlnorm{\cdot}_{i}$
need not itself be a norm. Relaxing \eqref{cjsr-norms} in this manner,
and replacing each norm $\smlnorm{\cdot}_{i}$ by some mapping $\smldblangle{\cdot}_{i}:\reals^{d_{w}}\setmap\reals$
from a class of functions ${\cal C}$, leads us to define the\emph{
relaxed joint spectral radius} (RJSR) \emph{for class }${\cal C}$
as\footnote{When ${\cal C}$ is the set of all norms on $\reals^{d_{w}}$, and
the `$\forall w\in\mathscr{W}_{i}$, etc.'\ qualifiers are replaced
by `$\forall w\in\reals^{d_{w}}$', \eqref{RSR} provides a valid
characterisation of the CJSR: see \citet[Prop.~2.2]{PEDJ16Auto}.}
\begin{align}
\rho_{\rsr,{\cal C}}(A[\cdot];\mathcal{J},\{\mathscr{W}_{i}\}) & \defeq\inf\{\gamma\in\reals_{+}\mid\exists\{\smldblangle{\cdot}_{i},c_{i}\}_{i\in{\cal I}}\ \text{s.t.}\,\smldblangle{\cdot}_{i}\in\mathcal{C}\sep\forall i\in{\cal I};\label{eq:RSR}\\
 & \qquad\qquad c_{i}\smlnorm w\leq\smldblangle w_{i}\sep\forall w\in\mathscr{W}_{i}\sep\forall i\in{\cal I};\text{ and }\nonumber \\
 & \qquad\qquad\smldblangle{A[i]w}_{j}\leq\gamma\smldblangle w_{i}\sep\forall w\in\mathscr{W}_{i}\text{ s.t. }A[i]w\in\mathscr{W}_{j}\sep\forall(i,j)\in{\cal J}\}.\nonumber 
\end{align}
Clearly the preceding provides an upper bound for $\gamma^{\ast}$.
Whether it provides a lower bound for $\rho_{\cjsr}(A[\cdot];{\cal J})$
depends on the choice of $\mathcal{C}$; this is the case if e.g.\ $\mathcal{C}$
is taken to be the set of norms on $\reals^{d_{w}}$, or some class
capable of approximating these norms arbitrarily well (such as homogeneous
polynomials). 

In practice, the main difficulty with all of these objects is computational,
with the exact computation of the (C)JSR known to be an NP--hard
problem. Nonetheless, significant progress has been made in calculating
approximate upper bounds for both of these quantities, with an (in
principle) arbitrarily high degree of accuracy (\citealp{PJ08LAA,LPJ19ITAC,LPJ20SJCO}),
using semidefinite (SDP) and sum of squares (SOS) programming. With
respect to approximating the RJSR, there is naturally the risk of
choosing ${\cal C}$ to be too broad a class of functions that the
approximation of $\rho_{\rsr,{\cal C}}$ becomes infeasible. As explained
in \appref{computation}, building on the approach of \citet{FTCMM02Auto},
we therefore take as our starting point the SDP and SOS programs used
to approximate the (C)JSR, relaxing the analogues of \eqref{cjsr-norms}
in the direction of \eqref{RSR}, for the case where the $\{\mathscr{W}_{i}\}_{i\in{\cal I}}$
are convex cones (as is appropriate for a CKSVAR). This entails relating
${\cal C}$ to either a certain class of quadratic functions (for
the SDP program) or homogeneous polynomials of a given (even) order
(for the SOS program), and has the additional benefit that our approximate
calculations of $\rho_{\rsr,{\cal C}}$ are guaranteed to provide
a lower bound on the estimate of $\rho_{\cjsr}$ yielded by these
programs, and thus a less conservative estimator $\gamma^{\ast}$
in practice.\footnote{Because the SDP and SOS algorithms that we use to compute an approximate
upper bound for the JSR correspond to restricted versions (i.e.\ with
fewer free parameters) of the corresponding algorithms used to compute
an upper bound for the CJSR, it is also the case that our estimate
of $\rho_{\cjsr}$ lower bounds our estimate of $\rho_{\jsr}$.} All of these quantities are calculated by our R package, \texttt{thresholdr}.

\subsection{In the CKSVAR}

\label{subsec:stable-cksvar}

In the present setting, the role of \eqref{detsystem} is played by
\eqref{determ}, the deterministic system associated to a canonical
CKSVAR, rendered here in companion form (and suppressing the `hat'
decorations) as 
\begin{equation}
\b z_{t}=\b F(\b y_{t-1})\b z_{t-1}=\b F[\sigma(\b y_{t-1})]\b z_{t-1}=\sum_{i\in{\cal I}}\b F[i]\indic\{\b z_{t-1}\in\mathscr{Z}_{i}\}\b z_{t-1},\label{eq:detCKSVAR}
\end{equation}
where $\b y_{t-1}=(y_{t-1},\ldots,y_{t-k})^{\trans}$, and
\[
\b F(\b y_{t-1})\defeq\begin{bmatrix}\Phi_{1}(y_{t-1}) & \cdots & \Phi_{k-1}(y_{t-k+1}) & \Phi_{k}(y_{t-k})\\
I_{p}\\
 & \ddots\\
 &  & I_{p}
\end{bmatrix}.
\]
Here there are (at most) $\smlabs{{\cal I}}=2^{k}$ distinct states.
If we associate with each $i\in{\cal I}$ a distinct $s(i)\in\{0,+1\}^{k}$,
we can take 
\[
\mathscr{Z}_{i}\defeq\{\b y_{t-1}\in\reals^{k}\mid\indic^{+}(y_{t-j})=s_{j}(i)\sep\forall j\in\{1,\ldots,k\}\}\times\reals^{k(p-1)},
\]
$\sigma(\b z_{t-1})=i$ if $\b z_{t-1}\in\mathscr{Z}_{i}$, and $\b F[i]$
to be the autoregressive matrix that applies in this case.

Letting ${\cal F}\defeq\{\b F[i]\}_{i\in{\cal I}}$, the JSR of ${\cal F}$
provides a first estimate of stability degree $\gamma^{\ast}$ of
the system. However, since the sign pattern of the first $k-1$ elements
of $\b y_{t-1}$ must propagate forwards to $\b y_{t}$, any given
state $i\in{\cal I}$ can only be succeeded by two other states (which
differ only according to the sign of $y_{t}$). Thus we can generally
obtain a tighter estimate via the CJSR, which reduces the set of possible
transitions from the $\smlabs{{\cal I}}^{2}=2^{2k}$ implicitly permitted
by the calculation of the JSR, to $2\smlabs{{\cal I}}=2^{k+1}$. An
even lower estimate of $\gamma^{\ast}$ -- which may prove decisive
for establishing the stability of \eqref{detCKSVAR} -- can be obtained
via the RJSR, at the cost of greater computation time. \thmref{ergodicity}
thus has a noteworthy advantage over approaches that utilise bounds
on the (C)JSR to directly establish the ergodicity of the corresponding
stochastic system, because the evolution of the deterministic system
is much more tightly constrained, particularly when the innovations
$\{u_{t}\}$ have full support, as assumed in \assref{err}.

For especially tractable instances of the CKSVAR, additional sufficient
conditions for the stability of \eqref{detCKSVAR} may also be available.
For example, a result of \citet[Suppl., Lem.~1]{DJH11} implies that
for the censored dynamic Tobit (\eqref{censored-Tobit} above), a
sufficient condition for stability is that the `censored' autoregressive
polynomial $1-\sum_{i=1}^{k}[\phi_{i}^{+}]_{+}L^{i}$ should have
all its roots outside the unit circle. (Interestingly, as their result
indicates, control over the roots of the various autoregressive regimes
is not even \emph{necessary} to ensure the stability of a CKSVAR.)

\saveexamplex{}

\exname{\ref*{exa:monetary}}
\begin{example}[monetary policy; ctd]
 From the canonical representation \eqref{example-canon} of the
model, we have
\begin{align*}
\tilde{\Phi}_{1}^{+} & =\begin{bmatrix}\psi & \gamma(\chi\kappa_{1}-\psi)\kappa_{1}\\
0 & \chi\kappa_{1}
\end{bmatrix} & \tilde{\Phi}_{1}^{-} & =\begin{bmatrix}\psi-\chi\tau_{\mu}\kappa_{\mu} & \gamma(\chi\kappa_{1}-\psi)\kappa_{1}\\
-\chi\theta(1-\mu)\kappa_{\mu} & \chi\kappa_{1}
\end{bmatrix},
\end{align*}
the stability of which cannot be assessed simply by a consideration
of the eigenvalues of these matrices; this must instead be done numerically
via the criteria developed above. An exception arises in the special
case where $\chi=0$, in which case the canonical form of the model
is in fact linear: so these matrices coincide, and have eigenvalues
of $0$ and $\psi$. In that case, $(\tilde{\imath}_{t},\tilde{\pi}_{t})$,
and therefore $(i_{t},\pi_{t})$, is geometrically ergodic if $\smlabs{\psi}<1$.
\end{example}
\restoreexamplex{}

\subsection{Numerical examples}

\label{subsec:numerical-examples}

To illustrate the practical utility of the stability criteria developed
above, we compute upper bounds for the JSR, CJSR and RJSR (using our
\R{} package, \texttt{thresholdr}) for some of our running examples.
These upper bounds are denoted with a `bar' (as $\bar{\rho}_{\jsr}$,
etc.)\ to distinguish them from the actual quantities, which cannot
be computed exactly.

\begin{table}
\begin{centering}
\begin{tabular}{r@{\extracolsep{0pt}.}lr@{\extracolsep{0pt}.}lr@{\extracolsep{0pt}.}lr@{\extracolsep{0pt}.}lr@{\extracolsep{0pt}.}lr@{\extracolsep{0pt}.}lr@{\extracolsep{0pt}.}lr@{\extracolsep{0pt}.}lr@{\extracolsep{0pt}.}lr@{\extracolsep{0pt}.}lr@{\extracolsep{0pt}.}lr@{\extracolsep{0pt}.}lr@{\extracolsep{0pt}.}lr@{\extracolsep{0pt}.}l}
\toprule 
\addlinespace
\multicolumn{2}{c}{$\chi$} & \multicolumn{2}{c}{$\theta$} & \multicolumn{2}{c}{$\psi$} & \multicolumn{2}{c}{$\bar{\rho}_{\jsr}$} & \multicolumn{2}{c}{} & \multicolumn{2}{c}{$\chi$} & \multicolumn{2}{c}{$\theta$} & \multicolumn{2}{c}{$\psi$} & \multicolumn{2}{c}{$\bar{\rho}_{\jsr}$} & \multicolumn{2}{c}{} & \multicolumn{2}{c}{$\chi$} & \multicolumn{2}{c}{$\theta$} & \multicolumn{2}{c}{$\psi$} & \multicolumn{2}{c}{$\bar{\rho}_{\jsr}$}\tabularnewline\addlinespace
\midrule
\addlinespace
0&2 & $-$0&5 & 0&1 & 0&145 & \multicolumn{2}{c}{} & 0&7 & $-$1&0 & 0&1 & 0&4 & \multicolumn{2}{c}{} & 0&99 & $-$0&5 & 0&1 & 0&72\tabularnewline
0&2 & $-$0&5 & 0&5 & 0&5 & \multicolumn{2}{c}{} & 0&7 & $-$1&0 & 0&5 & 0&5 & \multicolumn{2}{c}{} & 0&99 & $-$0&5 & 0&5 & 0&72\tabularnewline
0&2 & $-$0&5 & 0&9 & 0&9 & \multicolumn{2}{c}{} & 0&7 & $-$1&0 & 0&9 & 0&9 & \multicolumn{2}{c}{} & 0&99 & $-$0&5 & 0&9 & 0&9\tabularnewline
\bottomrule
\end{tabular}
\par\end{centering}
\caption{$\bar{\rho}_{\protect\jsr}$ for \exaref{monetary}}

\label{tbl:monetary}
\end{table}

\begin{table}
\begin{centering}
\begin{tabular}{r@{\extracolsep{0pt}.}lr@{\extracolsep{0pt}.}lr@{\extracolsep{0pt}.}lr@{\extracolsep{0pt}.}lccc}
\toprule 
\addlinespace
\multicolumn{2}{c}{$\phi_{1}^{+}$} & \multicolumn{2}{c}{$\phi_{1}^{-}$} & \multicolumn{2}{c}{$\phi_{2}^{+}$} & \multicolumn{2}{c}{$\phi_{2}^{-}$} & \multicolumn{1}{c}{$\bar{\rho}_{\jsr}$} & \multicolumn{1}{c}{$\bar{\rho}_{\cjsr}$} & \multicolumn{1}{c}{$\bar{\rho}_{\rsr}$}\tabularnewline\addlinespace
\midrule
\addlinespace
0&6 & 0&2 & 0&3 & 0&1 & $0.925$ & $0.925$ & $0.925$\tabularnewline
0&6 & 0&3 & 0&4 & 0&1 & $1.000$ & $1.000$ & $1.000$\tabularnewline
0&7 & 0&2 & $-$0&1 & 0&0 & $0.700$ & $0.500$ & $0.500$\tabularnewline
1&2 & 0&6 & $-$1&2 & $-$0&6 & $1.245$ & $1.118$ & $1.095$\tabularnewline
1&0 & 0&5 & $-$0&97 & $-$0&5 & $1.105$ & $1.001$ & $0.985$\tabularnewline\addlinespace
\bottomrule
\end{tabular}
\par\end{centering}
\caption{$\bar{\rho}_{\protect\jsr}$, $\bar{\rho}_{\protect\cjsr}$ and $\bar{\rho}_{\protect\rsr}$
for \exaref{univariate}}

\label{tbl:univariate}
\end{table}

\saveexamplex{}

\exname{\ref*{exa:monetary}}
\begin{example}[monetary policy; ctd]
 In the model \eqref{cksvar-natrate}, there are effectively only
two autoregressive regimes once the model is rendered in canonical
form, as per \eqref{example-canon}. Since there are no restrictions
on the transitions between these regimes, the JSR, CJSR and RJSR coincide.
\tblref{monetary} therefore reports only upper bounds for the JSR
for selected parametrisations of the model as $(\chi,\theta,\psi)$
vary, holding $(\mu,\gamma)=(0.5,1.5)$ fixed. We see that the (upper
bound on the) JSR coincides with the value of $\smlabs{\psi}$ for
a wide range of parameter values, even when $\chi\neq0$. However,
the range of parameter values for which this holds shrinks as $\chi\to1$.
In particular, if we take
\[
(\chi,\theta,\psi)=(1-10^{-4}\sep-0.5\sep1-10^{-6})
\]
so that $\chi$ and $\psi$ both lie very close to, but strictly below
$1$, then the spectral radii of $\tilde{\Phi}_{1}^{+}$and $\tilde{\Phi}_{1}^{-}$
also both lie below 1, but the JSR exceeds $1$ (that is the \emph{actual}
JSR, and not merely our upper bound for it; the simulated trajectories
of the model are also explosive in this case).
\end{example}
\begin{figure}[!p]
\begin{centering}
\includegraphics[width=1\textwidth]{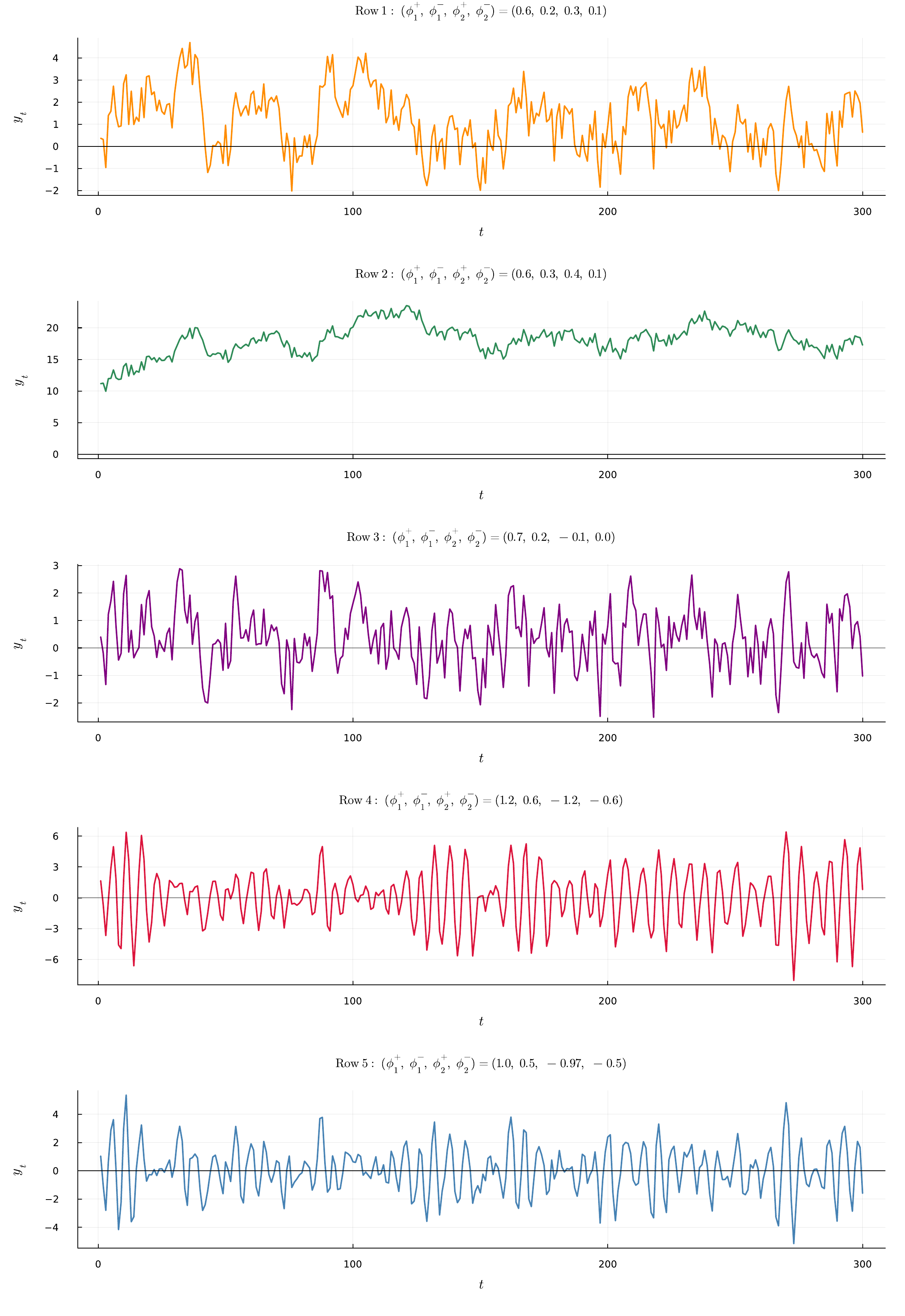}
\par\end{centering}
\caption{\protect\label{fig:trajectories}Simulated trajectories for the models
in \tblref{univariate}}
\end{figure}

\restoreexamplex{}

\saveexamplex{}

\exname{\ref*{exa:univariate}}
\begin{example}[univariate; ctd]
 We report bounds on the JSR, CJSR and RJSR for some parametrisations
of univariate model \eqref{univariate-case} with two lags ($k=2$),
allowing $\phi_{i}^{-}$ to take nonzero values (unlike in the censored
dynamic Tobit model). In the presence of more than one lag, the regime-switching
behaviour is sufficiently rich to generate meaningful differences
between these three quantities. The final two cases in \tblref{univariate}
indicate that the (estimated) RJSR can be a significantly less conservative
criterion for determining the stability than is the CJSR or the JSR.
Indeed, in the final case, the RJSR is strictly below $1$ while both
the CJSR and JSR exceed $1$, so the RJSR alone proves decisive in
concluding that this parametrisation is stable. With the exception
of the fourth process in the table, the stability or instability implied
by our RJSR estimate is consistent with the evidence provided by the
simulated time plots in \figref{trajectories}. The fact that the
RJSR is, in general, only a conservative criterion for the stability
of these processes is illustrated by that fourth process, which is
apparently stable even though our estimate of (an upper bound for)
the RJSR exceeds unity.

A graphical illustration of our stability conditions may also be provided
for the special case of the first-order ($k=1$) univariate model
with $c=0$,
\begin{equation}
y_{t}=\phi^{+}y_{t-1}^{+}+\phi^{-}y_{t-1}^{-}+u_{t}.\label{eq:cksvar-univariate}
\end{equation}
Consider the associated deterministic system,
\begin{equation}
y_{t}=\begin{cases}
\phi^{+}y_{t-1}, & \text{if }y_{t-1}\geq0,\\
\phi^{-}y_{t-1}, & \text{if }y_{t-1}<0.
\end{cases}\label{eq:scalar-lagged}
\end{equation}
The sign pattern of $(\phi^{+},\phi^{-})$ determines the sequence
of regimes (positive or negative $y_{t}$) that is admissible by this
system. If:
\begin{enumerate}
\item both coefficients are non-negative ($\phi^{+}\geq0$ and $\phi^{-}\geq0$):
the sign of $y_{t}$ equals that of $y_{0}$ for all $t$, and thus
stability requires both to lie below one, i.e.\ $\max\{\phi^{+},\phi^{-}\}<1$;
\item the coefficients have opposite signs ($\phi^{+}\leq0$ and $\phi^{-}>0$;
or $\phi^{+}>0$ and $\phi^{-}\leq0$): after at most one period,
$y_{t}$ enters the regime associated with the positive coefficient,
and remains there for all subsequent $t$, and thus stability requires
again that $\max\{\phi^{+},\phi^{-}\}<1$; and
\item both coefficients are negative, the sign of $y_{t}$ alternates every
period, so that $y_{t+2}=\phi^{+}\phi^{-}y_{t}$, which is stable
if and only if $\phi^{+}\phi^{-}<1$.
\end{enumerate}
The implied stability region is depicted in \figref{scalar-stability}.

\begin{figure}
\begin{centering}
\includegraphics[viewport=0bp 25bp 520bp 485bp,clip,width=0.5\columnwidth]{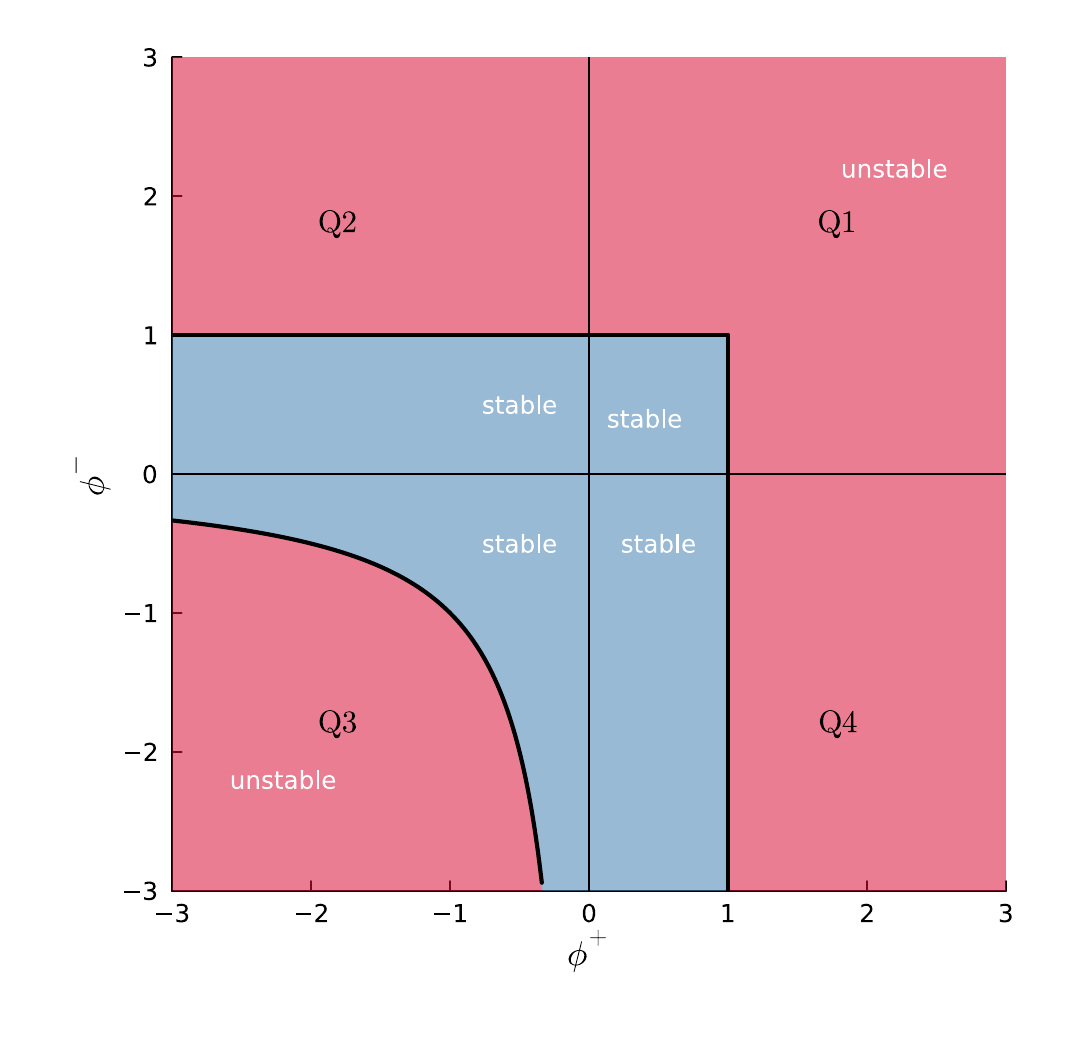}
\par\end{centering}
\caption{\protect\label{fig:scalar-stability}Stability region for the univariate
CKSVAR \eqref{cksvar-univariate}}
\end{figure}

In terms of the stability criteria developed in \subsecref{stabilityabstract},
in this model $\rho_{\jsr}=\max\{\smlabs{\phi^{+}},\smlabs{\phi^{-}}\}$,
and so $\rho_{\jsr}<1$ delineates a set that is considerably smaller
than the stable region in \figref{scalar-stability}. This is because
the JSR calculation implicitly maintains that \emph{any} possible
succession of regimes (here, sign patterns for $y_{t}$) may be generated
by the model, irrespective of the sign pattern of $(\phi^{+},\phi^{-})$.
By contrast, the calculation of the CJSR respects the sequence of
regimes permitted under \eqref{scalar-lagged}, and thus
\[
\rho_{\cjsr}=\begin{cases}
\max\{\phi^{+},\phi^{-}\} & \text{if at least one of }\phi^{+}>0\text{ or }\phi^{-}>0,\\
(\phi^{+}\phi^{-})^{1/2} & \text{otherwise;}
\end{cases}
\]
which excepting the case where both $\phi^{+}$ and $\phi^{-}$ are
positive, yields a criterion that is strictly less demanding than
that yielded by the JSR. Indeed, the parameter values for which $\rho_{\cjsr}<1$
corresponds exactly to the stability region depicted in \figref{scalar-stability}.
\end{example}
\restoreexamplex{}

\saveexamplex{}

\exname{\ref*{exa:explosive}}
\begin{example}[explosive; ctd]
 Again, this is a case where the JSR, CJSR and RJSR coincide, because
there are no constraints on the transitions between autoregressive
regimes. Although the two autoregressive matrices both have all eigenvalues
lying within the unit circle, our estimated upper bound for the JSR
of $\{\Phi^{+},\Phi^{-}\}$ is $1.31$, consistent with the explosive
trajectories obtained when simulating the process.
\end{example}
\restoreexamplex{}

\section{Conclusion}

\label{sec:conclusion}

The CKSVAR provides a flexible yet tractable framework in which to
structurally model vector time series that are subject to an occasionally
binding constraint, and more general threshold nonlinearities. Nonetheless,
even that seemingly limited amount of nonlinearity radically changes
the properties of the model. The usual criteria for stationarity and
ergodicity, in terms of the roots of the autoregressive polynomial(s),
no longer apply, being neither necessary nor sufficient for the CKSVAR
to generate a stationary time series. We must instead employ other
criteria, such as the joint spectral radius or the related quantities
developed above, to establish stationarity via control over the deterministic
subsystem of the CKSVAR.

{\singlespacing\bibliographystyle{ecta}
\bibliography{cksvar}
}

\appendix

\section{The canonical CKSVAR}

\label{app:canonical}
\begin{proof}[Proof of \propref{canonical}]
 \textbf{(i).} We have to verify that $(\tilde{y}_{t},\tilde{x}_{t})$
as defined in \eqref{canon-vars} are generated by a canonical CKSVAR:
i.e.\ that \assref{dgp} holds for $\{(\tilde{y}_{t},\tilde{x}_{t})\}$
with $\tilde{\Phi}_{0}=\Ican[p]$. Premultiplying \eqref{var-pm}
by $Q$ and using \eqref{canon-vars}--\eqref{canon-polys} immediately
yields \eqref{tildeVAR}, which has the same form as \eqref{var-pm}.
Further, the first two equations of \eqref{canon-vars} yield $\tilde{y}_{t}^{+}=\bar{\phi}_{0,yy}^{+}y_{t}^{+}$
and $\tilde{y}_{t}^{-}=\bar{\phi}_{0,yy}^{-}y_{t}^{-}$, where $\bar{\phi}_{0,yy}^{\pm}>0$
under \assref{dgp}\enuref{dgp:wlog}, whence $\tilde{y}_{t}^{+}\geq0$
and $\tilde{y}_{t}^{-}\leq0$, with $\tilde{y}_{t}^{+}\cdot\tilde{y}_{t}^{-}=0$
in every period. Thus defining
\begin{equation}
\tilde{y}_{t}\defeq\tilde{y}_{t}^{+}+\tilde{y}_{t}^{-}=\bar{\phi}_{0,yy}^{+}y_{t}^{+}+\bar{\phi}_{0,yy}^{-}y_{t}^{-},\label{eq:tildeydef}
\end{equation}
it follows that $\tilde{y}_{t}^{+}=\max\{\tilde{y}_{t},0\}$ and $\tilde{y}_{t}^{-}=\min\{\tilde{y}_{t},0\}$;
and hence $\tilde{y}_{t}^{\pm}$ have the same form as in \eqref{y-threshold}
(with $b=0$). Deduce \assref{dgp}\enuref{dgp:defn} holds for $\{(\tilde{y}_{t},\tilde{x}_{t})\}$.
We next note that
\[
Q\Phi_{0}=\begin{bmatrix}1 & -\phi_{0,yx}^{\trans}\Phi_{0,xx}^{-1}\\
0 & I
\end{bmatrix}\begin{bmatrix}\phi_{0,yy}^{+} & \phi_{0,yy}^{-} & \phi_{0,yx}^{\trans}\\
\phi_{0,xy}^{+} & \phi_{0,xy}^{-} & \Phi_{0,xx}
\end{bmatrix}=\begin{bmatrix}\bar{\phi}_{0,yy}^{+} & \bar{\phi}_{0,yy}^{-} & 0\\
\phi_{0,xy}^{+} & \phi_{0,xy}^{-} & \Phi_{0,xx}
\end{bmatrix}
\]
where $\Phi_{0,xx}$ is invertible; and
\[
\Ican[p]P^{-1}=\begin{bmatrix}1 & 1 & 0\\
0 & 0 & I_{p-1}
\end{bmatrix}\begin{bmatrix}\bar{\phi}_{0,yy}^{+} & 0 & 0\\
0 & \bar{\phi}_{0,yy}^{-} & 0\\
\phi_{0,xy}^{+} & \phi_{0,xy}^{-} & \Phi_{0,xx}
\end{bmatrix}=\begin{bmatrix}\bar{\phi}_{0,yy}^{+} & \bar{\phi}_{0,yy}^{-} & 0\\
\phi_{0,xy}^{+} & \phi_{0,xy}^{-} & \Phi_{0,xx}
\end{bmatrix}
\]
whence $\tilde{\Phi}_{0}=Q\Phi_{0}P=\Ican[p]$, as required for a
canonical CKSVAR; the remaining parts of \assref{dgp} for $\{(\tilde{y}_{t},\tilde{x}_{t})\}$
follow immediately.

\textbf{(ii).} By dividing \eqref{var-pm} through by $\bar{\phi}_{0,yy}^{+}>0$,
we can always obtain a representation of the CKSVAR in which $\bar{\phi}_{0,yy}^{+}=1$.
Because $y_{t}^{-}$ is not observed, it may be rescaled such that
$\bar{\phi}_{0,yy}^{-}=1$ also, without affecting the distribution
of the observed series $\{(y_{t}^{+},x_{t})\}$. Now apply the argument
from part~(i), and note in particular that \eqref{tildeydef} now
becomes
\[
\tilde{y}_{t}=\bar{\phi}_{0,yy}^{+}y_{t}^{+}+\bar{\phi}_{0,yy}^{-}y_{t}^{-}=y_{t}^{+}+y_{t}^{-}=y_{t}.\qedhere
\]
\end{proof}

\section{Ergodicity of the CKSVAR}

\label{app:ergodic}

We first state a well-known auxiliary result on the stability of continuous
and homogeneous of degree one (HD1) systems (whose proof is provided
for the sake of completeness). For $w\in\reals^{d_{w}}$ and $\Lambda:\reals^{d_{w}}\setmap\reals^{d_{w}}$,
let $\Lambda^{k}(w)\defeq\Lambda[\Lambda^{k-1}(w)]$ with $\Lambda^{0}(w)\defeq w$.
\begin{lem}
\label{lem:lyapunov}Suppose $w_{t}=\Lambda(w_{t-1})$, where $\Lambda:\reals^{d_{w}}\setmap\reals^{d_{w}}$
is continuous and homogeneous of degree one (HD1), and that $\Lambda^{k}(w_{0})\goesto0$
for all $w_{0}$ in an open neighbourhood of $0$. Then there exists
a $\gamma\in(0,1)$ and a Lyapunov function $V:\reals^{d_{w}}\setmap\reals_{+}$
for which $V[\Lambda(w)]\leq\gamma V(w)$ and $\smlnorm w\leq V(w)\leq C\smlnorm w$
for all $w\in\reals^{d_{w}}$, and which is continuous and HD1.
\end{lem}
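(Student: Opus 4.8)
The plan is to produce the Lyapunov function by an explicit formula built from finitely many iterates of $\Psi$, once it is known that some iterate of $\Psi$ is a uniform strict contraction — this is essentially the converse Lyapunov theorem for continuous HD1 systems. Two preliminary reductions are immediate. First, since $\Psi$ is HD1 one has $\Psi^{n}(\lambda w)=\lambda\Psi^{n}(w)$ for every $\lambda>0$, so the hypothesis that $\Psi^{n}(w_{0})\goesto0$ for $w_{0}$ in a neighbourhood of $0$ upgrades to $\Psi^{n}(w)\goesto0$ for \emph{every} $w\in\reals^{d_{w}}$ (rescale any $w$ into the neighbourhood), and $\Psi^{n}(0)=0$. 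Second, continuity of $\Psi$ together with HD1 gives $\smlnorm{\Psi(w)}\leq M\smlnorm w$ for all $w$, where $M\defeq\sup_{\smlnorm w=1}\smlnorm{\Psi(w)}<\infty$ by compactness of the unit sphere $S$; hence $\smlnorm{\Psi^{n}(w)}\leq M^{n}\smlnorm w$ for all $n$.

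The crux is to show that there exist $N\geq1$ and $\lambda\in(0,1)$ with
\[
\smlnorm{\Psi^{N}(w)}\leq\lambda\smlnorm w\qquad\text{for all }w\in\reals^{d_{w}},
\]
equivalently that the gains $M_{n}\defeq\sup_{\smlnorm w=1}\smlnorm{\Psi^{n}(w)}$ satisfy $M_{N}<1$ for some $N$. Here I would argue as follows. Each $\Psi^{n}$ is continuous and HD1, $M_{n}<\infty$ (compactness of $S$), and $M_{n+m}\leq M_{n}M_{m}$; by subadditivity of $n\mapsto\log M_{n}$ and Fekete's lemma, $M_{n}^{1/n}$ converges to its infimum, so the displayed contraction fails \emph{only if} $M_{n}\geq1$ for every $n$. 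To rule this out, pick $w_{n}\in S$ attaining $\smlnorm{\Psi^{n}(w_{n})}=M_{n}\geq1$, pass to a convergent subsequence $w_{n_{j}}\goesto w_{\ast}\in S$, use $\Psi^{m}(w_{\ast})\goesto0$ to fix $m$ large with $\smlnorm{\Psi^{m}(w_{\ast})}$ small, conclude by continuity of $\Psi^{m}$ that $\smlnorm{\Psi^{m}(w_{n_{j}})}$ is small for $j$ large, and then propagate this smallness through the remaining $n_{j}-m$ iterates, exploiting the constraints the homogeneous structure places on trajectories emanating from $S$, to contradict $M_{n_{j}}\geq1$. This step — upgrading pointwise attractivity to uniform (hence, by homogeneity, exponential) attractivity — is the main obstacle; it is the only place where compactness of $S$ is genuinely used, and closing the propagation argument cleanly is the delicate part. (I would avoid the alternative candidate $V(w)=\sup_{n}\smlnorm{\Psi^{n}(w)}$, since establishing that it is finite, continuous and bounded above by $C\smlnorm w$ is no easier, and it gives only non-strict decrease.)

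Granting the contraction, the rest is a direct construction. Set $\gamma\defeq\lambda^{1/N}\in(0,1)$ and define
\[
V(w)\defeq\max_{0\leq n\leq N-1}\gamma^{-n}\smlnorm{\Psi^{n}(w)}.
\]
Then $V$ is continuous and HD1, being a finite maximum of continuous HD1 functions (each $\Psi^{n}$ is HD1 and continuous as an iterate of $\Psi$); the $n=0$ term gives $\smlnorm w\leq V(w)$; and $\smlnorm{\Psi^{n}(w)}\leq M^{n}\smlnorm w$ gives $V(w)\leq C\smlnorm w$ with $C\defeq\max_{0\leq n\leq N-1}(M/\gamma)^{n}<\infty$. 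For the decrease, note
\[
V[\Psi(w)]=\max_{0\leq n\leq N-1}\gamma^{-n}\smlnorm{\Psi^{n+1}(w)}=\gamma\max_{1\leq m\leq N}\gamma^{-m}\smlnorm{\Psi^{m}(w)};
\]
for $1\leq m\leq N-1$ the term $\gamma^{-m}\smlnorm{\Psi^{m}(w)}$ is at most $V(w)$ by definition, while for $m=N$ we have $\gamma^{-N}\smlnorm{\Psi^{N}(w)}=\lambda^{-1}\smlnorm{\Psi^{N}(w)}\leq\smlnorm w\leq V(w)$ by the contraction. Hence $V[\Psi(w)]\leq\gamma V(w)$ for all $w$ with this $\gamma\in(0,1)$, which together with $\smlnorm w\leq V(w)\leq C\smlnorm w$ and the continuity and homogeneity of $V$ gives exactly the asserted conclusion.
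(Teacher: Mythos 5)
Your overall architecture is the same as the paper's: reduce the lemma to showing that some iterate $\Psi^{N}$ is a uniform contraction on the unit sphere, then assemble the Lyapunov function from the first $N$ iterates. Your second half is correct and only cosmetically different from the paper's (you take $V(w)=\max_{0\leq n\leq N-1}\gamma^{-n}\smlnorm{\Psi^{n}(w)}$ where the paper takes the weighted sum $\sum_{k=0}^{M-1}\gamma^{-k}\smlnorm{\Psi^{k}(w)}$; both verifications of the decrease, the two-sided norm bounds, continuity and homogeneity are routine). The preliminary reductions (global pointwise attractivity from the local hypothesis via HD1, and $\smlnorm{\Psi^{n}(w)}\leq M^{n}\smlnorm{w}$) are also fine.

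The problem is that the central step --- the existence of $N$ and $\lambda\in(0,1)$ with $\smlnorm{\Psi^{N}(w)}\leq\lambda\smlnorm{w}$ for all $w$ --- is exactly what your proposal leaves open, and the subsequence argument you sketch does not close as stated. Suppose $M_{n}\geq1$ for all $n$, take $w_{n_{j}}\to w_{*}$ and fix $m$ so that $\smlnorm{\Psi^{m}(w_{n_{j}})}<\epsilon$ for large $j$. To contradict $\smlnorm{\Psi^{n_{j}}(w_{n_{j}})}\geq1$ you must bound $\smlnorm{\Psi^{n_{j}-m}(\Psi^{m}(w_{n_{j}}))}$, and homogeneity only gives the bound $M_{n_{j}-m}\,\epsilon$; without an a priori uniform bound on the gains $M_{n}$ (which, at that stage, could grow like $M^{n}$ --- a bound on them is essentially what you are trying to prove) the smallness of $\epsilon$ buys nothing. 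You flag this as ``the delicate part,'' but it is the entire content of the lemma: pointwise attractivity does not in general upgrade to uniform attractivity, and homogeneity must be exploited through a covering argument on the sphere rather than through a single limit point. The paper closes this step by covering $S$ with the increasing family $W_{m}=\{w\in S:\smlnorm{\Psi^{k}(w)}<r\ \forall k\geq m\}$ and extracting a finite subcover, which yields one $M$ with $\sup_{w\in S}\smlnorm{\Psi^{M}(w)}<r$ and hence, by HD1, $\smlnorm{\Psi^{M}(w)}<r\smlnorm{w}$ everywhere; equivalently (and avoiding any worry about openness of an infinite intersection) one can cover $S$ by the open sets $\{w\in S:\smlnorm{\Psi^{m}(w)}<r\}$, take a finite subcover with indices at most $M_{0}$, and renormalize the trajectory each time its norm falls below $r$ to obtain $\smlnorm{\Psi^{n}(w)}\leq C^{\prime}\rho^{n}\smlnorm{w}$ with $\rho<1$, after which some iterate contracts. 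Until a step of this kind is supplied, your proposal has a genuine gap at its crux; the Fekete/submultiplicativity observation is correct but does no work toward filling it.
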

\begin{proof}
Let $S$ denote the unit sphere centred at zero, and fix $r\in(0,1)$.
For each $k\in\naturals$, let
\[
W_{k}\defeq\{w\in S\mid\smlnorm{\Lambda^{k}(w)}<r\}.
\]
By continuity of $\Lambda$, $W_{k}$ is open (relative to $S$).
Since $\Lambda$ is HD1, it follows from the maintained assumptions
that $\Lambda^{k}(w)\goesto0$ for every $w\in S$. Hence for every
$w\in S$ there exists some $k\in\naturals$ such that $w\in W_{k}$.
Thus $\{W_{k}\}_{k\in\naturals}$ is an open cover for $S$; by compactness,
there exists a $K<\infty$ such that $S=\Union_{k=1}^{K}W_{k}$. Hence
for every $w\in S$, there exists a $k\in\{1,\ldots,K\}$ such that
$\smlnorm{\Lambda^{k}(w)}<r$. Since $\Lambda$ is HD1, it follows
that for every $w\in\reals^{d_{w}}$, there exists a $k\in\{1,\ldots,K\}$
such that
\[
\smlnorm{\Lambda^{k}(w)}<r\smlnorm w.
\]

Suppose that $w_{0}\in S$. By the preceding, there will be set of
times $\mathcal{T}\defeq\{t_{i}\}_{i=0}^{\infty}$ (depending on $w_{0}$),
with $t_{0}=0$ and $1\leq t_{i}-t_{i-1}\leq K$, such that 
\begin{equation}
\smlnorm{w_{t_{i}}}<r\smlnorm{w_{t_{i-1}}}\sep\forall i\in\naturals.\label{eq:contraction}
\end{equation}
For $T\in\naturals$, we may choose an $i(T)\in\naturals\union\{0\}$
such that $t_{i(T)}\in\mathcal{T}$ satisfies $0\leq T-t_{i(T)}<K$,
and hence $i(T)\geq\smlfloor{T/K}$. Iterating the inequalities \eqref{contraction}
yields
\[
\smlnorm{w_{t_{i(T)}}}<r\smlnorm{w_{t_{i(T)-1}}}<\cdots<r^{\smlfloor{T/K}-1}\smlnorm{w_{1}}<r^{\smlfloor{T/K}}\smlnorm{w_{0}}=r^{\smlfloor{T/K}}.
\]
In addition, there exists a $C_{0}<\infty$ such that
\[
\smlnorm{w_{T}}=\smlnorm{\Lambda^{T-i(T)}(w_{t_{i(T)}})}<\left(\sup_{w\in S}\smlnorm{\Lambda(w)}\right)^{K}\smlnorm{w_{t_{i(T)}}}<C_{0}\smlnorm{w_{t_{i(T)}}}
\]
where we have again used that $\Lambda$ is HD1. It follows that for
all $T$ sufficiently large, there exists a $\gamma\in(0,1)$ such
that
\[
\smlnorm{\Lambda^{T}(w_{0})}=\smlnorm{w_{T}}<C_{0}r^{\smlfloor{T/K}}<\gamma^{T},
\]
for all $w_{0}\in S$, and hence we may choose $M\in\naturals$ such
that 
\begin{equation}
\smlnorm{\Lambda^{M}(w)}<\gamma^{M}\smlnorm w\label{eq:PsiM}
\end{equation}
 for all $w\in\reals^{d_{w}}$.

Now we define $V(w)\defeq\sum_{k=0}^{M-1}\gamma^{-k}\smlnorm{\Lambda^{k}(w)}$
(cf.\ the proof of Theorem~2 in \citealp{Lieb2005}); clearly it
is continuous and HD1, and $V(w)\ge\smlnorm w$ (recall $\Lambda^{0}(w)\defeq w$).
Moreover
\begin{align*}
V[\Lambda(w)]=\gamma\sum_{k=1}^{M}\gamma^{-k}\smlnorm{\Lambda^{k}(w)} & =\gamma\sum_{k=1}^{M-1}\gamma^{-k}\smlnorm{\Lambda^{k}(w)}+\gamma^{-M+1}\smlnorm{\Lambda^{M}(w)}\\
 & <\gamma\sum_{k=1}^{M-1}\gamma^{-k}\smlnorm{\Lambda^{k}(w)}+\gamma\smlnorm w=\gamma V(w),
\end{align*}
where the inequality follows from \eqref{PsiM}. Finally, the existence
of a $C<\infty$ such that $V(w)\leq C\smlnorm w$ follows by taking
$C\defeq\sup_{w\in S}\sum_{k=0}^{M-1}\gamma^{-k}\smlnorm{\Lambda^{k}(w)}<\infty$.
\end{proof}

We next provide a lemma that establishes the ergodicity of a broader
class of autoregressive processes, from which \thmref{ergodicity}
will then follow as a special case. This result is closely related
to those of \citet[Thms~4.2 and 4.5]{CT85AAP} and \citet[Thm.~3.2]{CP99StSin}.
Recall from \subsecref{ergodicstable} that a deterministic process
is said to be \emph{stable} if it converges to zero, irrespective
of its initialisation.
\begin{lem}
\label{lem:ergodic} Let $\{w_{t}\}_{t\in\naturals}$ be a random
sequence in $\reals^{p}$, and $\b w_{t-1}\defeq(w_{t-1}^{\trans},\ldots,w_{t-k}^{\trans})^{\trans}$.
Suppose
\begin{equation}
w_{t}=\mu(\b w_{t-1})+\b A(\b w_{t-1})\b w_{t-1}+u_{t}\label{eq:wsystem}
\end{equation}
where $\b A:\reals^{kp}\setmap\reals^{p\times kp}$ is such that $\b w\elmap\b A(\b w)\b w$
is continuous and HD1, $\mu(\b w)$ is locally bounded with $\mu(\b w)=o(\smlnorm{\b w})$
as $\smlnorm{\b w}\goesto\infty$, and that $\{u_{t}\}$ satisfies
\assref{err} with $\b w_{t-1}$ in place of $\b z_{t-1}$. If the
associated deterministic system
\[
\hat{w}_{t}=\b A(\hat{\b w}_{t-1})\hat{\b w}_{t-1},
\]
is stable, then $\{\b w_{t}\}_{t\in\naturals}$ is $\mathcal{Q}$-geometrically
ergodic, for $\mathcal{Q}(\b w)\defeq(1+\smlnorm{\b w}^{m_{0}})$,
with a stationary distribution that is equivalent to Lebesgue measure
on $\reals^{kp}$, and which has finite $m_{0}$th moment. If in addition
$\expect\smlnorm{\b w_{0}}^{m_{0}}<\infty$, then $\sup_{t\in\naturals}\expect\smlnorm{\b w_{t}}^{m_{0}}<\infty$.
\end{lem}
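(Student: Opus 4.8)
The plan is to recast \eqref{wsystem} in companion form as a Markov chain on $\reals^{kp}$ and then verify the standard geometric drift criterion (\citealp[Ch.~15--16]{MT09}; cf.\ \citealp[Thms~4.2 and 4.5]{CT85AAP}, \citealp[Thm.~3.2]{CP99StSin}), using as test function a power of the Lyapunov function delivered by \lemref{lyapunov}. Writing $\b w_t \defeq (w_t^\trans, \ldots, w_{t-k+1}^\trans)^\trans$, equation \eqref{wsystem} becomes $\b w_t = \b F(\b w_{t-1})\b w_{t-1} + \b m(\b w_{t-1}) + E u_t$, where $\b F(\cdot)(\cdot)$ is the companion form of $\b A(\cdot)(\cdot)$ --- still continuous and HD1 --- $\b m(\b w)$ is $\mu(\b w)$ padded with zeros (so $\smlnorm{\b m(\b w)} = o(\smlnorm{\b w})$), and $E$ embeds $\reals^{p}$ into the first block of $\reals^{kp}$. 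Since stability of the deterministic system is the same as stability of $\hat{\b w}_t = \b F(\hat{\b w}_{t-1})\hat{\b w}_{t-1}$, \lemref{lyapunov} supplies a continuous, HD1 function $V:\reals^{kp}\setmap\reals_{+}$ and a $\gamma\in(0,1)$ with $\smlnorm{\b w}\le V(\b w)\le C\smlnorm{\b w}$ and $V[\b F(\b w)\b w]\le\gamma V(\b w)$ for all $\b w$.

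The easy ingredient is irreducibility and the existence of small sets. Conditional on $\b w_{t-1}$, the increment $u_t = \Sigma(\b w_{t-1})\err_t$ has a Lebesgue density on $\reals^{p}$ that is bounded away from zero on compacts, by \assref{err}\ref{enu:err:leb} together with the eigenvalue bounds \assref{err}\ref{enu:err:eigs}; propagating this over $k$ steps and inverting in $(\err_{t+1},\ldots,\err_{t+k})$ shows that $P^{k}(\b w,\cdot)$ has a component whose density is bounded below on compact subsets of $\reals^{kp}$, locally uniformly in $\b w$. Hence $\{\b w_t\}$ is $\leb$-irreducible, aperiodic, every compact set is small, and (by the same density component) the stationary law is absolutely continuous with respect to Lebesgue measure. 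These are routine verifications, and I would cite \citet[p.~675]{Lieb2005} or \citet{CP99StSin} rather than reproduce them.

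The substantive step is the geometric drift inequality for $\mathcal{Q}(\b w)\defeq 1+V(\b w)^{m_0}$ (which is comparable to $1+\smlnorm{\b w}^{m_0}$). The point is to transfer the deterministic contraction $V[\b F(\b w)\b w]\le\gamma V(\b w)$ to the stochastic recursion. Because $V$ is continuous and HD1 and $\b w\elmap\b F(\b w)\b w$ maps the unit sphere into a compact set, uniform continuity of $V$ near that set and homogeneity give: for each $\epsilon>0$ there is a $\delta>0$ such that $\smlnorm{r}\le\delta\smlnorm{\b w}$ implies $V[\b F(\b w)\b w+r]\le(\gamma+\epsilon)V(\b w)$. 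Fix $\epsilon$ with $\gamma+\epsilon<1$; for $\smlnorm{\b w}$ large, $\smlnorm{\b m(\b w)}\le\tfrac{\delta}{2}\smlnorm{\b w}$, so on the event $\{\smlnorm{u_t}\le\tfrac{\delta}{2}\smlnorm{\b w}\}$ we have $r\defeq\b m(\b w)+Eu_t$ in the good region and hence $V(\b w_t)\le(\gamma+\epsilon)V(\b w)$, while on the complement the crude bound $V(\b w_t)\le C\smlnorm{\b w_t}\le C'(\smlnorm{\b w}+\smlnorm{u_t})$ holds. Taking conditional expectations of $V(\b w_t)^{m_0}$, bounding $\expect\smlnorm{u_t}^{m_0}\le\smlnorm{\Sigma(\b w)}^{m_0}\expect\smlnorm{\err_t}^{m_0}=o(\smlnorm{\b w}^{m_0})$ (by \assref{err}\ref{enu:err:sigmaterms} and \assref{err}\ref{enu:err:leb}) and $\Prob\{\smlnorm{u_t}>\tfrac{\delta}{2}\smlnorm{\b w}\}\goesto0$ as $\smlnorm{\b w}\goesto\infty$ (since $\smlnorm{\b w}/\smlnorm{\Sigma(\b w)}\goesto\infty$), I would obtain
\[
\expect[V(\b w_t)^{m_0}\mid\b w_{t-1}=\b w]\le(\gamma+\epsilon)^{m_0}V(\b w)^{m_0}+o(\smlnorm{\b w}^{m_0}).
\]
Since $V(\b w)^{m_0}$ is comparable to $\smlnorm{\b w}^{m_0}$, there exist $R<\infty$ and $\lambda\in(0,1)$ with $\expect[\mathcal{Q}(\b w_t)\mid\b w_{t-1}=\b w]\le\lambda\mathcal{Q}(\b w)$ for $\smlnorm{\b w}>R$, while for $\smlnorm{\b w}\le R$ the left-hand side is bounded (local boundedness of $\b F,\b m,\Sigma$ and $\expect\smlnorm{\err_t}^{m_0}<\infty$). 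This is exactly the geometric drift condition towards the small set $\{\smlnorm{\b w}\le R\}$.

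Finally, I would combine the drift condition with $\leb$-irreducibility, aperiodicity and the small set to conclude, via the drift criterion of \citet[Ch.~15]{MT09} (cf.\ \citealp[pp.~671--673]{Lieb2005}), that $\{\b w_t\}$ is $\mathcal{Q}$-geometrically ergodic with stationary law $\pi$ satisfying $\int\mathcal{Q}\diff\pi<\infty$, whence $\pi$ has finite $m_0$th moment, and $\pi$ is absolutely continuous by the density component isolated above; the stated conclusion for $\{w_t\}$ then follows since it is a coordinate projection of $\{\b w_t\}$ and $1+\smlnorm{w}^{m_0}\le\mathcal{Q}(\b w)$. (\thmref{ergodicity} is the special case in which $p$ is as there, $\mu\equiv c$ is constant, $\b A(\b w)\b w=\sum_{i=1}^{k}\Phi_i(y_{t-i})z_{t-i}$ --- continuous and HD1 in $\b w=(z_{t-1}^\trans,\ldots,z_{t-k}^\trans)^\trans$ --- and $w_t=z_t$.) The main obstacle is the drift step: since $V$ is only continuous and HD1, there is no subadditive bound $V(a+b)\le V(a)+V(b)$ to exploit, and the homogeneity-plus-uniform-continuity argument on the sphere is what makes the deterministic contraction usable; some care is also needed so that the various $o(\smlnorm{\b w})$ remainders (from $\mu$, from $\Sigma$, and from the tail event of $u_t$) survive as $o(\smlnorm{\b w}^{m_0})$ after raising to the power $m_0$.
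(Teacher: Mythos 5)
Your proposal is correct and follows essentially the same route as the paper: companion form, the continuous HD1 Lyapunov function from \lemref{lyapunov}, verification of irreducibility/small-set conditions via the conditional densities implied by \assref{err}, and a geometric drift inequality for $1+V(\cdot)^{m_{0}}$ established through homogeneity and uniform continuity of $V$, concluding via the standard drift criterion (the paper uses Liebscher's Proposition~1 rather than Meyn--Tweedie directly). The only differences are minor: you obtain the drift bound by truncating on the event $\{\smlnorm{u_{t}}\leq\tfrac{\delta}{2}\smlnorm{\b w}\}$ and applying Markov's inequality, where the paper bounds the ratio $L_{\b{\err}}$ for fixed $\b{\err}$ and invokes dominated convergence, and you cite the irreducibility/absolute-continuity steps that the paper verifies explicitly through the $k$-step density \eqref{kstepdens}.
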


\begin{proof}
We shall verify that $\{\b w_{t}\}_{t\in\naturals}$ satisfies conditions~(i)--(iii)
of Proposition~1 in \citet{Lieb2005}. Under \assref{err} (with
$\b w_{t-1}$ in place of $\b z_{t-1}$) we have
\begin{equation}
w_{t}=\mu(\b w_{t-1})+\b A(\b w_{t-1})\b w_{t-1}+\Sigma(\b w_{t-1})\err_{t}.\label{eq:wepsilon}
\end{equation}
Now let $f$ denote the density of $\err_{t}$, $g$ the density of
$w_{t}$ given $\b w_{t-1}$, and $\varpi_{t}\in\reals^{p}$ and $\b{\varpi}_{t-1}\in\reals^{kp}$
values that may be taken by $w_{t}$ and $\b w_{t-1}$ respectively.
Noting
\begin{equation}
\err_{t}=\Sigma(\b w_{t-1})^{-1}[w_{t}-\mu(\b w_{t-1})-\b A(\b w_{t-1})\b w_{t-1}],\label{eq:uzeta}
\end{equation}
we have by the change of variables formula that
\[
g(\varpi_{t}\mid\b{\varpi}_{t-1})=\smlabs{\det\Sigma(\b{\varpi}_{t-1})}^{-1}f\{\Sigma(\b{\varpi}_{t-1})^{-1}[\varpi_{t}-\mu(\b{\varpi}_{t-1})-\b A(\b{\varpi}_{t-1})\b{\varpi}_{t-1}]\},
\]
and by the Markov property that the density $h$ of $\b w_{t-1+k}$
given $\b w_{t-1}$ is given by
\begin{equation}
h(\b{\varpi}_{t-1+k}\mid\b{\varpi}_{t-1})=\prod_{i=1}^{k}g(\varpi_{t-1+i}\mid\b{\varpi}_{t-2+i}).\label{eq:kstepdens}
\end{equation}

\textbf{(i).} Suppose $N$ is a Borel subset of $\reals^{kp}$ with
zero Lebesgue measure. By \eqref{kstepdens}, conditional on $\b w_{t-1}=\b{\varpi}_{t-1}$,
the distribution of $\b w_{t-1+k}$ has a density with respect to
Lebesgue measure. Hence $\Prob\{\b w_{t-1+k}\in N\mid\b w_{t-1}=\b{\varpi}_{t-1}\}=0$,
for all $\b{\varpi}_{t-1}\in\reals^{kp}$. 

\textbf{(ii).} Let $K\subset\reals^{kp}$ be compact, and $B$ a Borel
subset of $\reals^{kp}$ with Lebesgue measure $\leb(B)>0$. By sigma-finiteness,
we may take $B$ to be bounded without loss of generality. Then there
exist bounded sets $\{B_{i}\}_{i=1}^{k}$ such that $\b w_{t-1+k}\in B$
only if $w_{t-1+i}\in B_{i}$ for each $i\in\{1,\ldots,k\}$. Now
as $(\varpi_{t},\b{\varpi}_{t-1})$ ranges over $B_{1}\times K$,
in view of \assref{err}\enuref{err:leb} and \assref{err}\enuref{err:eigs}
the r.h.s.\ of \eqref{uzeta} remains bounded, and so $g(\varpi_{t}\mid\b{\varpi}_{t-1})$
is bounded away from zero on $B_{1}\times K$. By the same argument,
$g(\varpi_{t+i}\mid\b{\varpi}_{t-1+i})$ is bounded away from zero
as $(\varpi_{t+i},\ldots\varpi_{t},\b{\varpi}_{t-1})$ ranges over
$B_{i}\times\cdots\times B_{1}\times K$. Hence there are $\{\delta_{i}\}_{i=1}^{k}$
strictly positive such that
\[
\inf_{\substack{(\b{\varpi}_{t-1+k},\b{\varpi}_{t-1})\\
\in B\times K
}
}h(\b{\varpi}_{t-1+k}\mid\b{\varpi}_{t-1})\geq\prod_{i=1}^{k}\inf_{\substack{(\varpi_{t-1+i},\ldots\varpi_{t},\b{\varpi}_{t-1})\\
\in B_{i}\times\cdots\times B_{1}\times K
}
}g(\varpi_{t-1+i}\mid\b{\varpi}_{t-2+i})=\prod_{i=1}^{k}\delta_{i}\eqdef\delta>0.
\]
Deduce that
\[
\inf_{\b{\varpi}_{t-1}\in K}\Prob\{\b w_{t-1+k}\in B\mid\b w_{t-1}=\b{\varpi}_{t-1}\}\geq\delta\leb(B)>0.
\]

\textbf{(iii).} Write \eqref{wepsilon} in companion form as
\[
\b w_{t}=\b{\mu}(\b w_{t-1})+\b F(\b w_{t-1})\b w_{t-1}+\b{\Sigma}(\b w_{t-1})\b{\err}_{t}
\]
where $\b{\Sigma}(\b w_{t-1})=\diag\{\Sigma(\b w_{t-1}),0_{p(k-1)\times p(k-1)}\}$
and
\begin{align*}
\b F(\b w_{t-1}) & \defeq\begin{bmatrix}\b A(\b w_{t-1})\\
\Xi
\end{bmatrix} & \b{\mu}(\b w_{t-1}) & \defeq\begin{bmatrix}\mu(\b w_{t-1})\\
0_{p(k-1)}
\end{bmatrix} & \b{\err}_{t} & \defeq\begin{bmatrix}\varepsilon_{t}\\
0_{p(k-1)}
\end{bmatrix}
\end{align*}
for $\Xi\defeq[I_{p(k-1)},0_{p(k-1)\times p}]$. Then by the maintained
assumptions $\Lambda(\b w)\defeq\b F(\b w)\b w$ satisfies the requirements
of \lemref{lyapunov}; let $V$ denote the associated Lyapunov function.
Setting $\mathcal{Q}(\b w)\defeq1+V(\b w)^{m_{0}}$, we have by the
independence of $\b{\err}_{t}$ from $\b w_{t-1}$ that
\begin{align*}
\expect[\mathcal{Q}(\b w_{t})\mid\b w_{t-1}=\b{\varpi}_{t-1}] & =\expect[1+V[\b{\mu}(\b w_{t-1})+\b F(\b w_{t-1})\b w_{t-1}+\b{\Sigma}(\b w_{t-1})\b{\err}_{t}]^{m_{0}}\mid\b w_{t-1}=\b{\varpi}_{t-1}]\\
 & =\expect\{1+V[\Lambda(\b{\varpi}_{t-1})+\b{\mu}(\b{\varpi}_{t-1})+\b{\Sigma}(\b{\varpi}_{t-1})\b{\err}_{t}]^{m_{0}}\}\eqdef\expect R(\b{\varpi}_{t-1},\b{\err}_{t}).
\end{align*}
To verify (2) in \citet{Lieb2005}, we shall show that there exists
a $C_{0}>0$ and a $\gamma^{\prime}\in(0,1)$ such that for all $\smlnorm{\b{\varpi}_{t-1}}>C_{0}$,
\[
\expect[\mathcal{Q}(\b w_{t})\mid\b w_{t-1}=\b{\varpi}_{t-1}]=\expect R(\b{\varpi}_{t-1},\b{\err}_{t})\leq\gamma^{\prime}\mathcal{Q}(\b{\varpi}_{t-1}).
\]

To that end, let $\tau\in(0,1)$ and define
\begin{align*}
W_{(\geq)} & \defeq\{\b w\in\reals^{kp}\mid\smlnorm{\Lambda(\b w)}\geq\tau\smlnorm{\b w}\} & W_{(\leq)} & \defeq\{\b w\in\reals^{kp}\mid\smlnorm{\Lambda(\b w)}\leq\tau\smlnorm{\b w}\}.
\end{align*}
Suppose that $\b{\varpi}\in W_{(\leq)}$. Then since $V$ and $\Lambda$
are HD1, taking $x\defeq\smlnorm{\b{\varpi}}$ yields
\begin{align*}
R(\b{\varpi},\b{\err}) & =1+V[\Lambda(\b{\varpi})+\b{\mu}(\b{\varpi})+\b{\Sigma}(\b{\varpi})\b{\err}]^{m_{0}}\\
 & \leq1+x^{m_{0}}\sup_{\smlnorm{\b{\eta}}=1,\smlnorm{\Lambda(\b{\eta})}\leq\tau}V\{\Lambda(\b{\eta})+x^{-1}[\b{\mu}(x\b{\eta})+\b{\Sigma}(x\b{\eta})\b{\err}]\}^{m_{0}}\\
 & \leq1+x^{m_{0}}C_{1}\left[\tau^{m_{0}}+\sup_{\smlnorm{\b{\eta}}=1}\smlnorm{x^{-1}\b{\mu}(x\b{\eta})}^{m_{0}}+\sup_{\smlnorm{\b{\eta}}=1}\smlnorm{x^{-1}\b{\Sigma}(x\b{\eta})}^{m_{0}}\smlnorm{\b{\err}}^{m_{0}}\right].
\end{align*}
for some $C_{1}<\infty$, by \lemref{lyapunov}. By \assref{err}\enuref{err:sigmaterms}
and $\mu(\b w)=o(\smlnorm{\b w})$, it follows that by taking $C_{0}>0$
sufficiently large, we may ensure that for all $\b{\varpi}_{t-1}\in W_{(\leq)}$
with $\smlnorm{\b{\varpi}_{t-1}}>C_{0}$, 
\[
\expect R(\b{\varpi}_{t-1},\b{\err}_{t})\leq1+C_{2}\tau^{m_{0}}\smlnorm{\b{\varpi}_{t-1}}^{m_{0}}
\]
for some $C_{2}<\infty$, not depending on $\tau$. Since $\mathcal{Q}(\b{\varpi})\geq1+\smlnorm{\b{\varpi}}^{m_{0}}$,
there exists a $\gamma_{(\leq)}<1$ such that by choice of $\tau$
and $C_{0}$, we may ensure that the r.h.s.\ is less than $\gamma_{(\leq)}\mathcal{Q}(\b{\varpi}_{t-1})$
for all $\smlnorm{\b{\varpi}_{t-1}}>C_{0}$.

Now suppose that $\b{\varpi}\in W_{(\geq)}$. To handle this case,
we write
\[
R(\b{\varpi},\b{\err})=\{1+V[\Lambda(\b{\varpi})]^{m_{0}}\}\frac{1+V[\Lambda(\b{\varpi})+\b{\mu}(\b{\varpi})+\b{\Sigma}(\b{\varpi})\b{\err}]^{m_{0}}}{1+V[\Lambda(\b{\varpi})]^{m_{0}}}\eqdef\{1+V[\Lambda(\b{\varpi})]^{m_{0}}\}L_{\b{\err}}(\b{\varpi}).
\]
Since $V$ and $\Lambda$ are HD1, taking $x\defeq\smlnorm{\b{\varpi}}$
now yields that for all $\b{\varpi}\in W_{(\geq)}$,
\begin{align*}
L_{\b{\err}}(\b{\varpi}) & \leq\sup_{\smlnorm{\b{\eta}}=1,\smlnorm{\Lambda(\b{\eta})}\geq\tau}L_{\b{\err}}(x\b{\eta})\\
 & \leq1+\sup_{\smlnorm{\b{\eta}}=1,\smlnorm{\Lambda(\b{\eta})}\geq\tau}\frac{\smlabs{V[\Lambda(x\b{\eta})+\b{\mu}(x\b{\eta})+\b{\Sigma}(x\b{\eta})\b{\err}]^{m_{0}}-V[\Lambda(x\b{\eta})]^{m_{0}}}}{1+V[\Lambda(x\b{\eta})]^{m_{0}}}\\
 & \leq1+\sup_{\smlnorm{\b{\eta}}=1,\smlnorm{\Lambda(\b{\eta})}\geq\tau}\frac{\smlabs{V[\Lambda(\b{\eta})+x^{-1}\b{\mu}(x\b{\eta})+x^{-1}\b{\Sigma}(x\b{\eta})\b{\err}]^{m_{0}}-V[\Lambda(\b{\eta})]^{m_{0}}}}{x^{-m_{0}}+V[\Lambda(\b{\eta})]^{m_{0}}}
\end{align*}
The denominator on the r.h.s.\ is bounded away from zero, while for
the numerator we have
\begin{align*}
V[\Lambda(\b{\eta})+x^{-1}\b{\mu}(x\b{\eta})+x^{-1}\b{\Sigma}(x\b{\eta})\b{\err}]^{m_{0}}-V[\Lambda(\b{\eta})]^{m_{0}} & \goesto0
\end{align*}
as $x\goesto\infty$, uniformly over $\smlnorm{\b{\eta}}=1$, by \assref{err}\enuref{err:sigmaterms},
$\mu(\b w)=o(\smlnorm{\b w})$, and the uniform continuity of $V$
on compacta.

Deduce that $\sup_{\b{\varpi}\in W_{(\geq)},\smlnorm{\b{\varpi}}\geq x}L_{\b{\err}}(\b{\varpi})\goesto1$
as $x\goesto\infty$, for every $\b{\err}\in\reals^{kp}$. Under the
stated growth conditions on $\b{\mu}(\b{\varpi})$ and $\b{\Sigma}(\b{\varpi})$,
there exists a $C_{3}<\infty$ such that for all $\smlnorm{\b{\varpi}}$
sufficiently large,
\begin{align*}
L_{\b{\err}_{t}}(\b{\varpi}) & =\frac{1+V[\Lambda(\b{\varpi})+\b{\mu}(\b{\varpi})+\b{\Sigma}(\b{\varpi})\b{\err}_{t}]^{m_{0}}}{1+V[\Lambda(\b{\varpi})]^{m_{0}}}\\
 & \leq C_{3}\frac{1+\smlnorm{\Lambda(\b{\varpi})+\b{\mu}(\b{\varpi})}^{m_{0}}+\smlnorm{\b{\Sigma}(\b{\varpi})}^{m_{0}}\smlnorm{\b{\err}_{t}}^{m_{0}}}{1+\smlnorm{\Lambda(\b{\varpi})}^{m_{0}}}
\end{align*}
Since the r.h.s\ is integrable (with respect to $\b{\err}_{t}$),
it follows by the dominated convergence theorem that for any given
$\delta>0$ we may take $C_{0}>0$ sufficiently large such that
\[
\expect[\mathcal{Q}(\b w_{t})\mid\b w_{t-1}=\b{\varpi}_{t-1}]=\expect R(\b{\varpi}_{t-1},\b{\err}_{t})\leq\{1+V[\Lambda(\b{\varpi}_{t-1})]^{m_{0}}\}(1+\delta),
\]
for all $\b{\varpi}_{t-1}\in W_{(\geq)}$ with $\smlnorm{\b{\varpi}_{t-1}}>C_{0}$;
and so for such $\b{\varpi}_{t-1}$,
\begin{align*}
\expect[\mathcal{Q}(\b w_{t})\mid\b w_{t-1}=\b{\varpi}_{t-1}] & \leq[1+\gamma^{m_{0}}V(\b{\varpi}_{t-1})^{m_{0}}](1+\delta)\\
 & =\gamma^{m_{0}}(1+\delta)\mathcal{Q}(\b{\varpi}_{t-1})+(1-\gamma^{m_{0}})(1+\delta).
\end{align*}
Finally, since $\mathcal{Q}(\b{\varpi})\geq1+\smlnorm{\b{\varpi}}^{m_{0}}$,
there exists a $\gamma_{(\geq)}<1$ such that by taking $C_{0}$ sufficiently
large, we may ensure that the r.h.s.\ is bounded above by $\gamma_{(\geq)}\mathcal{Q}(\b{\varpi}_{t-1})$
for all $\b{\varpi}_{t-1}\in W_{(\geq)}$ with $\smlnorm{\b{\varpi}_{t-1}}>C_{0}$.
Taking $\gamma^{\prime}=\max\{\gamma_{(\leq)},\gamma_{(\geq)}\}$
thus verifies (2) in \citet{Lieb2005}.

Regarding (3) in \citet{Lieb2005}, we note that
\begin{align*}
\expect[\mathcal{Q}(\b w_{t})\mid\b w_{t-1}=\b{\varpi}_{t-1}] & =\expect\{1+V[\Lambda(\b{\varpi}_{t-1})+\b{\mu}(\b{\varpi}_{t-1})+\b{\Sigma}(\b{\varpi}_{t-1})\b{\err}_{t}]^{m_{0}}\}\\
 & \leq C_{3}\{1+\smlnorm{\Lambda(\b{\varpi}_{t-1})+\b{\mu}(\b{\varpi}_{t-1})}^{m_{0}}+\smlnorm{\b{\Sigma}(\b{\varpi}_{t-1})}^{m_{0}}\expect\smlnorm{\b{\err}_{t}}^{m_{0}}\}
\end{align*}
is bounded uniformly over $\smlnorm{\b{\varpi}_{t-1}}\leq C_{2}$.
Finally, his (4) holds trivially, since $\mathcal{Q}(\b w)=1+V(\b w)^{m_{0}}$
is bounded away from zero, and bounded on compacta.

Applying Proposition~1 in \citet{Lieb2005} then yields all but the
final claim. Under the additional assumption that $\expect\smlnorm{\b w_{0}}^{m_{0}}<\infty$,
and treating the cases $\smlnorm{\b{\varpi}_{t-1}}>C_{2}$ and $\smlnorm{\b{\varpi}_{t-1}}\leq C_{2}$
separately, we have that
\[
\expect[\mathcal{Q}(\b w_{t})\mid\b w_{t-1}=\b{\varpi}_{t-1}]\leq\gamma^{\prime}\mathcal{Q}(\b{\varpi}_{t-1})+C
\]
for some $C<\infty$, and hence
\[
\expect\mathcal{Q}(\b w_{t})\leq\gamma^{\prime}\expect\mathcal{Q}(\b w_{t-1})+C.
\]
Since $\gamma^{\prime}<1$, a recursion yields that 
\[
1+\sup_{t\in\naturals}\expect\smlnorm{\b w_{t}}^{m_{0}}=\sup_{t\in\naturals}\expect\mathcal{Q}(\b w_{t})<\infty.\qedhere
\]
\end{proof}
\begin{proof}[Proof of \thmref{ergodicity}]
 We need only to recognise that under the stated hypotheses the system
\eqref{regimeswitching}, i.e.
\[
z_{t}=c+\sum_{i=1}^{k}\Phi_{i}(z_{t-i})z_{t-i}+u_{t}\eqdef c+\b{\Phi}(\b z_{t-1})\b z_{t-1}+u_{t},
\]
satisfies the requirements of \lemref{ergodic}. In particular, $z\elmap\Phi_{i}(z)z=\phi_{i}(y)y+\Phi_{i}^{x}x$
is clearly continuous and HD1, for each $i\in\{1,\ldots,k\}$.
\end{proof}

\section{Limit distribution theory}

\label{sec:limit-dist}

This appendix provides the proof of \thmref{limittheory}.

\subsection{Formulation of the model likelihood}

We first note that by defining
\begin{align*}
\Psi_{i} & \defeq\begin{bmatrix}\psi_{i}^{+}, & \psi_{i}^{-}, & \Psi_{i}^{x}\end{bmatrix}, & \Psi_{i}(y) & \defeq\begin{bmatrix}\psi_{i}^{+}\indic^{+}(y)+\psi_{i}^{-}\indic^{-}(y), & \Psi_{i}^{x}\end{bmatrix},
\end{align*}
and $z_{t}^{\ast}\defeq(y_{t}^{+},y_{t}^{-},x_{t}^{\trans})^{\trans}$,
we may equivalently rewrite the reparametrised CKSVAR \eqref{reparm}
as
\begin{align*}
\orthm u_{t} & =\Psi_{0}(y_{t})z_{t}-\mu-\sum_{i=1}^{k}\Psi_{i}(y_{t-i})z_{t-i}=\Psi_{0}z_{t}^{\ast}-\mu-\sum_{i=1}^{k}\Psi_{i}z_{t-i}^{\ast}.
\end{align*}
By additionally defining
\[
\xi_{t}\defeq(-1,z_{t}^{\ast\trans},-z_{t-1}^{\ast\trans},\ldots,-z_{t-k}^{\ast\trans})^{\trans}
\]
and collecting
\[
\Psi\defeq(\mu,\Psi_{0},\Psi_{1},\ldots,\Psi_{k}),
\]
we may further write
\[
\Psi_{0}z_{t}^{\ast}-\mu-\sum_{i=1}^{k}\Psi_{i}z_{t-i}^{\ast}=\Psi\xi_{t}=(\xi_{t}^{\trans}\otimes I_{p})\vek(\Psi).
\]
Here $\vek(\Psi)$ is a vector with $d_{\psi}\defeq(k+1)p(p+1)+p$
entries. However, since $\Psi_{0}^{+}=[\psi_{0}^{+},\Psi_{0}^{x}]$
is a lower triangular matrix, $\vek(\Psi)$ must respect the restriction
that the $p(p-1)/2$ elements above the main diagonal of $\Psi_{0}^{+}$
are zero: i.e.\ those elements of $\vek(\Psi)$ must be identically
zero. All other elements of $\vek(\Psi)$ correspond to (distinct)
entries of the vector $\theta$ as defined immediately above the statement
of \thmref{limittheory}, which has only $d_{\theta}\defeq d_{\psi}-p(p-1)/2$
entries. Hence we may define a matrix $M\in\reals^{d_{\psi}\times d_{\theta}}$
such that each element of $M\theta$ is either zero, or equal to a
(distinct) element of $\theta$; it follows that $M$ has full column
rank. We thus obtain
\[
\orthm u_{t}=\Psi_{0}z_{t}^{\ast}-\mu-\sum_{i=1}^{k}\Psi_{i}z_{t-i}^{\ast}=(\xi_{t}^{\trans}\otimes I_{p})M\theta.
\]

Since the map $z\elmap\Psi_{0}(e_{1}^{\trans}z)z$ is invertible under
\assref{dgp}, and $u_{t}\distrib N[0,I_{p}]$ is independent of $\{z_{s}\}_{s\leq t-1}$,
it follows by the usual change of variables formula that conditional
on $\b z_{t-1}=(z_{t-1}^{\trans},\ldots,z_{t-k}^{\trans})^{\trans}$,
the density of $z_{t}$ is given by
\[
f_{\theta}(\bar{z}_{t}\mid\bar{z}_{t-1},\ldots,\bar{z}_{t-k})=\varphi_{I_{p}}\left(\Psi_{0}\bar{z}_{t}^{\ast}-\mu-\sum_{i=1}^{k}\Psi_{i}\bar{z}_{t-i}^{\ast}\right)\cdot\smlabs{\det\Psi_{0}(\bar{y}_{t})},
\]
where a `bar' denotes a real value (e.g.\ $\bar{z}_{t}$) that
may be taken by the random variable (e.g.\ $z_{t}$), and $\varphi_{I_{p}}$
denotes the $N[0,I_{p}]$ density. Observe that the preceding is invariant
to $\orthm$, which accordingly plays no further role in our analysis.
Since $\Psi_{0}(y)$ is lower triangular, its determinant is the product
of its diagonal elements; moreover only the first element of this
diagonal differs across the only two distinct values $\Psi_{0}^{+}$
and $\Psi_{0}^{-}$ taken by this matrix. We may without loss of generality
suppose that the elements of $\theta$ are ordered such that the diagonal
elements of these two matrices correspond to the first $p+1$ elements
of $\theta$, and thus
\begin{align*}
\log\smlabs{\det\Psi_{0}(y_{t})} & =\indic^{+}(y_{t})\log\Psi_{0,11}^{+}+\indic^{-}(y_{t})\log\Psi_{0,11}^{-}+\sum_{i=2}^{p}\log\Psi_{0,ii}^{+}\\
 & =\indic^{+}(y_{t})\log\theta_{1}+\indic^{-}(y_{t})\log\theta_{2}+\sum_{i=3}^{p+1}\log\theta_{i}\eqdef\ell_{t,1}(\theta)
\end{align*}
Defining
\[
\ell_{t,2}(\theta)\defeq\log\varphi_{I_{p}}[(\xi_{t}^{\trans}\otimes I_{p})M\theta]=-\tfrac{1}{2}\smlnorm{(\xi_{t}^{\trans}\otimes I_{p})M\theta}^{2},
\]
the $t$th (conditional) loglikelihood contribution is thus given
by
\begin{align}
\ell_{t}(\theta) & =\log f_{\theta}(z_{t}\mid\b z_{t-1})=-\tfrac{p}{2}\log2\pi+\ell_{t,1}(\theta)+\ell_{t,2}(\theta),\label{eq:liketerms}
\end{align}
and the MLE $\hat{\theta}_{n}$ is a maximiser of $L_{n}(\theta)=\sum_{t=1}^{n}\ell_{t}(\theta)$.

\subsection{Asymptotics of the local loglikelihood}

Recalling that $\theta_{0}\in\intr\Theta$ denotes the true values
of the model parameters, we may define the centred loglikelihood with
respect to the \emph{local} parameters $\loc\defeq n^{1/2}(\theta-\theta_{0})\in\reals^{d_{\theta}}$
as
\[
\mathcal{L}_{n}(\loc)=L_{n}(\theta_{0}+n^{-1/2}\loc)-L_{n}(\theta_{0})=\mathcal{L}_{n,1}(\loc)+\mathcal{L}_{n,2}(\loc)
\]
where
\[
\mathcal{L}_{n,i}(\loc)\defeq\sum_{t=1}^{n}[\ell_{t,i}(\theta_{0}+n^{-1/2}\loc)-\ell_{t,i}(\theta_{0})]
\]
for $i\in\{1,2\}$. Observe that $\mathcal{L}_{n,2}(\loc)$ is quadratic.
Moreover, by a Taylor expansion, there exists a $C<\infty$ (depending
on $\theta_{i,0}$) such that for all $n$ sufficiently large
\[
\abs{\log(\theta_{i,0}+n^{-1/2}\loc_{i})-\log(\theta_{i,0})-\frac{n^{-1/2}\loc_{i}}{\theta_{i,0}}+\frac{n^{-1}\loc_{i}^{2}}{2\theta_{i,0}^{2}}}\leq n^{-3/2}C\smlabs{\loc_{i}}^{3}.
\]
Therefore for each (fixed) $\loc\in\reals^{d_{\theta}}$,
\begin{equation}
\mathcal{L}_{n}(\loc)=\loc^{\trans}\mathcal{S}_{n}+\frac{1}{2}\loc^{\trans}\mathcal{H}_{n}\loc+O_{p}(n^{-1/2})\eqdef\mathcal{M}_{n}(\loc)+O_{p}(n^{-1/2})\label{eq:likeapprox}
\end{equation}
where
\begin{align*}
\mathcal{S}_{n} & \defeq\frac{1}{n^{1/2}}\sum_{t=1}^{n}s_{t}\defeq\frac{1}{n^{1/2}}\sum_{t=1}^{n}\grad_{\theta}\ell_{t}(\theta_{0}), & \mathcal{H}_{n} & \defeq\frac{1}{n}\sum_{t=1}^{n}h_{t}\defeq\frac{1}{n}\sum_{t=1}^{n}\grad_{\theta}^{2}\ell_{t}(\theta_{0}),
\end{align*}
for $\grad_{\theta}\ell_{t}$ the gradient and $\grad_{\theta}^{2}\ell_{t}$
the Hessian of $\ell_{t}$.

Recall the decomposition of $\log f_{\theta}(z_{t}\mid\b z_{t-1})$
in \eqref{liketerms} above, and let ${\cal N}\subset\intr\Theta$
denote a compact neighbourhood of $\theta_{0}$. Both $\ell_{t,1}$
and $\ell_{t,2}$ are twice continuously differentiable, and there
exists a $C<\infty$ such that
\begin{align}
\sup_{\theta\in\mathcal{N}}\smlnorm{\grad_{\theta}^{i}\ell_{t,1}(\theta)} & <C & \expect\sup_{\theta\in\mathcal{N}}\smlnorm{\grad_{\theta}^{i}\ell_{t,2}(\theta)} & <C\expect\smlnorm{\xi_{t}}^{2}<\infty,\label{eq:derivbnd}
\end{align}
for $i\in\{1,2\}$, with the final inequality holding because $\expect\smlnorm{\b z_{t}}^{m_{0}}$,
and therefore $\expect\smlnorm{\xi_{t}}^{m_{0}}$, is bounded for
all $t$ by \corref{ergodicity}, where $m_{0}>4$. Thus we may exchange
(twice) differentiation of $\ell_{t}$ with the taking of (conditional)
expectations. It follows in particular (cf.\ \citealp{HH80}, p.\ 157)
that 
\[
\expect_{t-1}s_{t}=\left.\expect_{t-1}\grad_{\theta}\log f_{\theta}(z_{t}\mid\b z_{t-1})\right|_{\theta=\theta_{0}}=\grad_{\theta}\left.\int_{\reals^{p}}f_{\theta}(\bar{z}_{t}\mid\b z_{t-1})\diff\bar{z}_{t}\right|_{\theta=\theta_{0}}=0
\]
so that $\{s_{t}\}$ forms a martingale difference sequence, and similarly
\begin{equation}
\expect_{t-1}h_{t}=\expect_{t-1}\grad_{\theta}^{2}\left.\log f_{\theta}(z_{t}\mid\b z_{t-1})\right|_{\theta=\theta_{0}}=-\expect_{t-1}s_{t}s_{t}^{\trans}.\label{eq:infeq}
\end{equation}
By \corref{ergodicity} (and \remref{ergod}\ref{subrem:order}),
the process $(z_{t}^{\trans},\b z_{t-1}^{\trans})^{\trans}$ is a
positive Harris recurrent Markov chain. It therefore follows by Theorem~17.1.7
in \citet{MT09}, and \eqref{infeq}, that
\begin{equation}
\mathcal{H}_{n}=\frac{1}{n}\sum_{t=1}^{n}h_{t}\inprob\mathbf{E}h_{1}=-\mathbf{E}s_{1}s_{1}^{\trans}=-\mathcal{I}(\theta_{0}),\label{eq:hesslim}
\end{equation}
where $\mathbf{E}$ denotes an expectation with respect to the stationary
distribution of $(z_{t}^{\trans},\b z_{t-1}^{\trans})^{\trans}$.

To verify that $\mathcal{I}(\theta_{0})$ is positive definite, observe
that
\[
h_{t}=\grad_{\theta}^{2}\ell_{t,1}(\theta_{0})+\grad_{\theta}^{2}\ell_{t,2}(\theta_{0})=\grad_{\theta}^{2}\ell_{t,1}(\theta_{0})-M^{\trans}(\xi_{t}\xi_{t}^{\trans}\otimes I_{p})M,
\]
where the first r.h.s.\ term is negative semidefinite. Thus $\mathbf{E}h_{t}$
will be negative definite if $\mathbf{E}\xi_{t}\xi_{t}^{\trans}$
is invertible. Suppose for a contradiction that it were not; then
in view of the definition of $\xi_{t}$, there must exist constants
$a_{0}$, $\{a_{y,i}^{+},a_{y,i}^{-},a_{x,i}\}_{i=0}^{k}$, which
are not all zero, such that
\[
a_{0}+\sum_{i=0}^{k}a_{x,i}^{\trans}x_{t-i}+\sum_{i=0}^{k}(a_{y,i}^{+}y_{t-i}^{+}+a_{y,i}^{-}y_{t-i}^{-})=0
\]
holds almost surely under the stationary distribution $\mathbf{P}$
of $(z_{t}^{\trans},\b z_{t-1}^{\trans})^{\trans}$. Since $y_{t-i}^{+}\cdot y_{t-i}^{-}=0$
for all $i$, it follows that there must be at least one sign pattern
for $\{y_{t-i}\}_{i=0}^{k}$ under which the preceding holds with
nonzero probability under $\mathbf{P}$. That is, there is a choice
$\tilde{a}_{y,i}\in\{a_{y,i}^{+},a_{y,i}^{-}\}$ (possibly more than
one) for each $i$, such that
\[
0=a_{0}+\sum_{i=0}^{k}a_{x,i}^{\trans}x_{t-i}+\sum_{i=0}^{k}\tilde{a}_{y,i}y_{t-i}\eqdef a_{0}+\tilde{a}^{\trans}\begin{bmatrix}z_{t}\\
\b z_{t-1}
\end{bmatrix}
\]
holds with nonzero probability under $\mathbf{P}$. But by \corref{ergodicity},
the stationary distribution of $(z_{t}^{\trans},\b z_{t-1}^{\trans})^{\trans}$
is equivalent to Lebesgue measure on $\reals^{(k+1)p}$, in which
case the preceding may hold with nonzero probability under $\mathbf{P}$
only if $\tilde{a}=0$. This forces $a_{0}=0$ also, yielding a contradiction.
Thus $\mathbf{E}\xi_{t}\xi_{t}^{\trans}$ is invertible.

To determine the limiting behaviour of $\mathcal{S}_{n}$, we verify
the conditions of the martingale CLT given as Theorem~3.2 in \citet{HH80}.
We have in view of \eqref{derivbnd} that
\[
n^{-1/2}\max_{t\leq n}\smlnorm{s_{t}}\leq Cn^{-1/2}\left(1+\max_{t\leq n}\smlnorm{\xi_{t}}^{2}\right)=O_{p}(n^{-1/2+2/m_{0}})=o_{p}(1),
\]
since $\sup_{t\in\naturals}\expect\smlnorm{\xi_{t}}^{m_{0}}<\infty$
by \corref{ergodicity}, where $m_{0}>4$, and similarly
\[
\expect\left(n^{-1}\max_{t\leq n}\smlnorm{s_{t}}^{2}\right)\leq n^{-1}C\left(1+\expect\max_{t\leq n}\smlnorm{\xi_{t}}^{2}\right)\leq C_{1}\left(n^{-1}+\max_{t\leq n}\expect\smlnorm{\xi_{t}}^{2}\right)
\]
is bounded uniformly in $n$. Since by Theorem~17.1.7 in \citet{MT09},
\[
\frac{1}{n}\sum_{t=1}^{n}s_{t}s_{t}^{\trans}\inprob\mathbf{E}s_{1}s_{1}^{\trans}=\mathcal{I}(\theta_{0})
\]
it follows by Theorem~3.2 in \citet{HH80} that
\begin{equation}
\mathcal{S}_{n}\wkc\mathcal{S}\sim N[0\sep\mathcal{I}(\theta_{0})].\label{eq:scorelim}
\end{equation}

It follows from \eqref{likeapprox}, \eqref{hesslim} and \eqref{scorelim}
that
\[
\mathcal{L}_{n}(\loc)=\mathcal{M}_{n}(\loc)+o_{p}(1)\wkc\loc^{\trans}\mathcal{S}-\frac{1}{2}\loc^{\trans}\mathcal{I}(\theta_{0})\loc\eqdef\mathcal{M}(\loc)
\]
holds in the sense of finite-dimensional convergence. Since $\mathcal{I}(\theta_{0})$
is positive definite, the strictly concave function $\mathcal{M}$
is uniquely maximised at $\loc^{\ast}=\mathcal{I}(\theta_{0})^{-1}\mathcal{S}$.
Since $\mathcal{L}_{n}$ is itself concave for all $n\in\naturals$,
it follows by Lemma~A in \citet{Kni89CJS} that, for $\hat{\loc}_{n}$
a maximiser of $\mathcal{L}_{n}$,
\[
n^{1/2}(\hat{\theta}_{n}-\theta_{0})=\hat{\loc}_{n}\wkc\loc^{\ast}=\mathcal{I}(\theta_{0})^{-1}\mathcal{S}\sim N[0\sep\mathcal{I}(\theta_{0})^{-1}].\tag*{\qedsymbol}
\]

\section{Computational details}

\label{app:computation}

We assume, as appropriate for the CKSVAR, that the sets $\{\mathscr{W}_{i}\}_{i\in{\cal I}}$
are convex cones that partition $\reals^{d_{w}}$; therefore there
exist matrices $\{E_{i}\}_{i\in{\cal I}}$ such that $\abv{\mathscr{W}}_{i}=\{w\in\reals^{d_{w}}\mid E_{i}w\geq0\}$,
for $\abv{\mathscr{W}}_{i}$ the closure of $\mathscr{W}_{i}$. Let
$\mathscr{W}_{ij}\defeq\{w\in\mathscr{W}_{i}\mid A[i]w\in\mathscr{W}_{j}\}$
denote the values in $\mathscr{W}_{i}$ that would be mapped to $\mathscr{W}_{j}$,
and
\[
E_{ij}\defeq\begin{bmatrix}E_{i}\\
E_{j}A[i]
\end{bmatrix},
\]
so that $\abv{\mathscr{W}}_{ij}=\{w\in\reals^{d_{w}}\mid E_{ij}w\geq0\}$.
For a square matrix $A\in\reals^{m\times m}$, let $A\succ0$ ($A\succeq0$)
denote that $A$ is positive definite (semidefinite); $A\geq0$ denotes
that $A$ has only non-negative entries.

\paragraph*{(i). }

Consider taking ${\cal C}$ to consist of functions of the form $\smldblangle w\defeq[w^{\trans}Pw]_{+}^{1/2}$,
where $P\in\reals^{d_{w}\times d_{w}}$, so that $\smldblangle w_{i}=[w^{\trans}P_{i}w]_{+}^{1/2}$
for some $P_{i}\in\reals^{d_{w}\times d_{w}}$, for each $i\in{\cal I}$.
The first requirement of \eqref{RSR} is satisfied if $P_{i}$ is
such that $w^{\trans}P_{i}w>0$ for $w\in\mathscr{W}_{i}$ (as opposed
to the whole of $\reals^{d_{w}}$); while for $(i,j)\in{\cal J}$
and $w\in\mathscr{W}_{i}$
\begin{align*}
\smldblangle{A[i]w}_{j}\leq\gamma\smldblangle w_{i} & \iff[w^{\trans}A[i]^{\trans}P_{j}A[i]w]_{+}^{1/2}\leq\gamma[w^{\trans}P_{i}w]_{+}^{1/2}\\
 & \iff w^{\trans}A[i]^{\trans}P_{j}A[i]w\leq\gamma^{2}w^{\trans}P_{i}w
\end{align*}
where the second equivalence holds so long as $w^{\trans}P_{i}w>0$
on $w\in\mathscr{W}_{i}$, as per the first requirement. Thus for
this choice of $\smldblangle w_{i}$, the requirements of \eqref{RSR}
(with the second set of inequalities made strict) may be rewritten
equivalently as $\{P_{i}\}_{i\in{\cal I}}\subset\reals^{d_{w}\times d_{w}}$
and $\gamma\in\reals_{+}$ being such that
\begin{alignat}{2}
w^{\top}P_{i}w & >0, &  & \forall w\in\mathscr{W}_{i},\forall i\in{\cal I},\label{eq:lyapunov}\\
\gamma^{2}w^{\top}P_{i}w-w^{\trans}A[i]^{\trans}P_{j}A[i]w & >0, & \quad & \forall w\in\mathscr{W}_{ij},\forall(i,j)\in{\cal J},\label{eq:switching}
\end{alignat}
If the preceding inequalities are required to hold for all $w\in\reals^{d_{w}}$,
then the problem of finding the minimum $\gamma$ that satisfies \eqref{lyapunov}--\eqref{switching}
yields the semidefinite programming (SDP) estimate of the CJSR (\citealp[p.~245]{PEDJ16Auto}).
This is computationally inexpensive, but at the cost of being too
conservative, in the sense of likely providing too high an estimate
of the stability degree of the system; the challenge is to reduce
this conservativeness while retaining the computational convenience
of an SDP problem.

Here we follow the approach of \citet{FTCMM02Auto}, noting that it
is sufficient for \eqref{lyapunov}--\eqref{switching} that
\begin{alignat}{2}
w^{\top}P_{i}w & >w^{\trans}F_{i}w, &  & \forall w\in\reals^{d_{w}},\forall i\in{\cal I},\label{eq:PF}\\
\gamma^{2}w^{\top}P_{i}w-w^{\trans}A[i]^{\trans}P_{j}A[i]w & >w^{\trans}G_{ij}w, & \quad & \forall w\in\reals^{d_{w}},\forall(i,j)\in{\cal J},\label{eq:PG}
\end{alignat}
where $\{F_{i}\}_{i\in{\cal I}}$ and $\{G_{ij}\}_{(i,j)\in{\cal J}}$
are such that
\begin{align}
w^{\top}F_{i}w & \geq0\sep\forall w\in\mathscr{W}_{i},\forall i\in{\cal I}; & w^{\top}G_{ij}w & \geq0\sep\forall w\in\mathscr{W}_{ij},\forall(i,j)\in{\cal J}\label{eq:FGconditions}
\end{align}
Taking $F_{i}=G_{ij}=0$ reproduces the strengthened form of \eqref{lyapunov}--\eqref{switching}
used to estimate the CJSR, but the fact that \eqref{FGconditions}
need only hold on strict subsets of $\reals^{d_{w}}$ allows these
to be chosen such that \eqref{PF}--\eqref{PG} may admit solutions
for smaller values of $\gamma$. To ensure that \eqref{FGconditions}
is satisfied, we parametrise $F_{i}$ and $G_{ij}$ as
\begin{align*}
F_{i} & \defeq E_{i}^{\trans}U_{i}E_{i} & G_{ij} & :=E_{ij}^{\top}U_{ij}E_{ij}
\end{align*}
where $U_{i}\ge0$ and $U_{ij}\geq0$ are conformable symmetric matrices
with non-negative entries (see \citealp{FTCMM02Auto}, p.\ 2143).
For given values of $\{U_{i}\}_{i\in{\cal I}}$ and $\{U_{ij}\}_{(i,j)\in{\cal J}}$,
and hence of $\{F_{i}\}$ and $\{G_{ij}\}$, the problem reduces to
one of verifying that, for a given $\gamma$,
\begin{alignat}{2}
P_{i}-F_{i} & \succ0, & \quad & \forall i\in\mathcal{I},\label{eq:firstSDPcondition}\\
\gamma^{2}P_{i}-A[i]^{\top}P_{j}A[i]-G_{ij} & \succ0, &  & \forall(i,j)\in\mathcal{J}.\label{eq:secondSDPcondition}
\end{alignat}
is feasible, i.e.\ that it admits a solution for $\{P_{i}\}_{i\in{\cal I}}$.
To recognise that this can be put in the form of an SDP feasibility
problem, define $Q_{i}\defeq P_{i}-F_{i}$, so that the l.h.s.\ of
the second set of inequalities becomes
\begin{align*}
\gamma^{2}(Q_{i}+F_{i})-A[i]^{\top}(Q_{j}+F_{j})A[i]-G_{ij} & =\gamma^{2}Q_{i}-A[i]^{\top}Q_{j}A[i]+\gamma^{2}F_{i}-A[i]^{\top}F_{j}A[i]-G_{ij}\\
 & \eqdef\gamma^{2}Q_{i}-A[i]^{\top}Q_{j}A[i]+\tilde{G}_{ij}(\gamma)\eqdef R_{ij}
\end{align*}
where $\tilde{G}_{ij}(\gamma)$ depends on $\gamma$, $\{U_{i}\}$
and $\{U_{ij}\}$. Then the feasibility of \eqref{firstSDPcondition}--\eqref{secondSDPcondition}
is equivalent to the existence of $\{Q_{i}\}_{i\in{\cal I}}$ and
$\{R_{ij}\}_{(i,j)\in{\cal J}}$ such that
\begin{alignat*}{2}
Q_{i},R_{ij} & \succ0, &  & \forall i\in{\cal I},\forall(i,j)\in{\cal J}\\
\gamma^{2}Q_{i}-A[i]^{\top}Q_{j}A[i]-R_{ij} & =\tilde{G}_{ij}(\gamma), & \quad & \forall(i,j)\in{\cal J},
\end{alignat*}
which indeed has the form of an SDP feasibility problem \citep[p.~2390]{PJ08LAA}.
(For this problem to indeed admit a solution, the entries of $\{U_{i}\}$
and $\{U_{ij}\}$ need to be such that $\tilde{G}_{ij}(\gamma)$ is
symmetric, as follows if all of the matrices $\{U_{i}\}$ and $\{U_{ij}\}$
are themselves symmetric.) The implied Lyapunov function has the piecewise
quadratic form,
\[
V(w)\defeq\sum_{i\in{\cal I}}\smldblangle w_{i}=\sum_{i\in{\cal I}}[w^{\trans}P_{i}w]_{+}^{1/2}\indic\{w\in\mathscr{W}_{i}\}=\sum_{i\in{\cal I}}(w^{\trans}P_{i}w)^{1/2}\indic\{w\in\mathscr{W}_{i}\},
\]
where the final equality holds since $w^{\trans}P_{i}w>0$ on $\mathscr{W}_{i}$.

\paragraph*{(ii).}

The preceding can be generalised by allowing ${\cal C}$ to consist
of functions of the form $\smldblangle w\defeq[p(w)]_{+}^{1/2m}$,
where $p$ is a sum of squares polynomial of degree $2m$, and so
can be written as $p(w)=w^{[m]\trans}Pw^{[m]}$, where $P\in\reals^{l\times l}$
for $l\defeq{d_{w}+m-1 \choose m}$ and $w^{[m]}\in\reals^{l}$ denotes
the $m$-lift of $w$ (see \citealp[Sec.~2--3]{PJ08LAA}). Letting
$A[i]^{[m]}$ be the (unique) matrix satisfying $A[i]^{[m]}w^{[m]}=(A[i]w)^{[m]}$,
we have
\[
p(A[i]w)=(A[i]w)^{[m]\trans}P(A[i]w)^{[m]}=w^{[m]\trans}A[i]^{[m]\trans}PA[i]^{[m]}w^{[m]}.
\]
By arguments analogous to those given above, when specialised to this
class of functions, the requirements of \eqref{RSR} may thus be written
as
\begin{alignat*}{2}
w^{[m]\trans}P_{i}w^{[m]} & >0, &  & \forall w\in\mathscr{W}_{i},\forall i\in{\cal I},\\
\gamma^{2}w^{[m]\trans}P_{i}w^{[m]}-w^{[m]\trans}A[i]^{[m]\trans}P_{j}A[i]^{[m]}w^{[m]} & \geq0, & \quad & \forall w\in\mathscr{W}_{ij},\forall(i,j)\in{\cal J}.
\end{alignat*}
Similarly, if we define now
\begin{align*}
F_{i} & \defeq E_{i}^{[m]\trans}U_{i}E_{i}^{[m]} & G_{ij} & :=E_{ij}^{[m]\top}U_{ij}E_{ij}^{[m]}
\end{align*}
for $\{U_{i}\}$ and $\{U_{ij}\}$ symmetric matrices of appropriate
dimension with non-negative entries, we can reduce the problem to
one of exactly the same form as the SDP feasibility problem \eqref{firstSDPcondition}--\eqref{secondSDPcondition},
but with $A[i]^{[m]}$ taking the place of $A[i]$. Setting $U_{i}=U_{ij}=0$
here gives the problem solved by the SOS approximation to the CJSR
(\citealp[p.~246]{PEDJ16Auto}), on which this estimate of the RJSR
(for this $\mathcal{C}$) therefore provides a lower bound. The implied
Lyapunov function has the piecewise polynomial form,
\[
V(w)\defeq\sum_{i\in{\cal I}}\smldblangle w_{i}=\sum_{i\in{\cal I}}(w^{[m]\trans}P_{i}w^{[m]})^{1/2m}\indic\{w\in\mathscr{W}_{i}\}.
\]

\end{document}